\documentclass[letterpaper,twocolumn,10pt]{article}
\usepackage{usenix-2020-09}

\usepackage{tikz}
\usepackage{amsmath}

\usepackage{filecontents}
\usepackage{graphicx}
\usepackage{amsthm}
\usepackage{amssymb}
\usepackage{thmtools}
\usepackage{color}

\usepackage[linesnumbered, lined, boxed]{algorithm2e}

\usepackage[normalem]{ulem}

\newtheorem{definition}{Definition}
\newtheorem{theorem}{Theorem}
\newtheorem{lemma}{Lemma}
\newtheorem{proposition}{Proposition}
\newtheorem{corollary}{Corollary}

\newcommand{\nats}{\mathbb{N}}
\newcommand{\nnints}{\mathbb{Z}_{\ge0}}
\newcommand{\reals}{\mathbb{R}}
\newcommand{\nnreals}{\mathbb{R}_{\ge0}}

\renewcommand{\epsilon}{\varepsilon}

\newcommand{\calA}{\mathcal{A}}

\newcommand{\calD}{\mathcal{D}}
\newcommand{\calG}{\mathcal{G}}

\newcommand{\calM}{\mathcal{M}}
\newcommand{\calR}{\mathcal{R}}
\newcommand{\calS}{\mathcal{S}}
\newcommand{\calX}{\mathcal{X}}

\newcommand{\E}{\operatorname{\mathbb{E}}}
\newcommand{\V}{\operatorname{\mathbb{V}}}

\newcommand{\bmA}{\mathbf{A}}

\newcommand{\bmG}{\mathbf{G}}
\newcommand{\bmQ}{\mathbf{Q}}
\newcommand{\bma}{\mathbf{a}}
\newcommand{\bmb}{\mathbf{b}}

\newcommand{\bme}{\mathbf{e}}

\newcommand{\bd}{\bar{d}}
\newcommand{\bbma}{\bar{\bma}}
\newcommand{\hd}{\hat{d}}
\newcommand{\hf}{\hat{f}}
\newcommand{\hg}{\hat{g}}
\newcommand{\hr}{\hat{r}}
\newcommand{\hw}{\hat{w}}
\newcommand{\td}{\tilde{d}}
\newcommand{\spantwo}[1]{\multicolumn{2}{|c|}{#1}}

\newcommand{\IMDB}{\textsf{IMDB}}
\newcommand{\Orkut}{\textsf{Orkut}}
\newcommand{\alg}{\textsf}
\newcommand{\Lap}{\textrm{Lap}}

\usepackage{hyperref}
\newcommand{\footremember}[2]{%
    \thanks{#2}
    \newcounter{#1}
    \setcounter{#1}{\value{footnote}}%
}
\newcommand{\footrecall}[1]{%
    \footnotemark[\value{#1}]%
}

\hypersetup{
  colorlinks,
  linkcolor={black},
  citecolor={red!70!black},
  urlcolor={blue!70!black}
}

\newif\ifconferenceon\conferenceonfalse
\ifconferenceon
\newcommand{\conference}[1]{#1}
\newcommand{\arxiv}[1]{}
\else
\newcommand{\conference}[1]{}
\newcommand{\arxiv}[1]{#1}
\usepackage{balance}
\fi

\begin{document}

\title{\Large \bf Locally Differentially Private Analysis of Graph Statistics}

\author{
{\rm Jacob Imola}\footremember{contributions}{The first and second authors made equal contributions.}\\
UC San Diego
\and
{\rm Takao Murakami}\footrecall{contributions}\\
AIST
\and
{\rm Kamalika Chaudhuri}\\
UC San Diego
} 

\maketitle

\begin{abstract}

Differentially private analysis of graphs is widely used for releasing statistics from sensitive graphs while still preserving user privacy. Most existing algorithms however are in a centralized privacy model, where a trusted data curator holds the entire graph. As this model raises a number of privacy and security issues -- such as, the trustworthiness of the curator and the possibility of data breaches, it is desirable to consider algorithms in a more decentralized local model where no server holds the entire graph.

In this work, we consider a local model, and present algorithms for counting subgraphs -- a fundamental task for analyzing the connection patterns in a graph -- with LDP (Local Differential Privacy). For triangle counts, we present algorithms that use one and two rounds of interaction, and show that an additional round can significantly improve the utility. For $k$-star counts, we present an algorithm that achieves an order optimal estimation error in the non-interactive local model. 
We provide new lower-bounds on the estimation error for general graph statistics including triangle counts and $k$-star counts. 
Finally, we perform extensive experiments on two real datasets, and show that it is indeed possible to accurately estimate 
subgraph counts in the local differential privacy model. 

\end{abstract}

\section{Introduction}
\label{sec:intro}

Analysis of network statistics is a useful tool for finding meaningful patterns in graph data, such as social, e-mail,  citation and epidemiological networks. 
For example, the average \textit{degree} (i.e., number of edges connected to a node) in a social graph can reveal the average connectivity. 
\textit{Subgraph counts} 
(e.g., the number of triangles, stars, or cliques) can be used to measure 
centrality properties such as the \textit{clustering coefficient}, 
which represents the probability that two friends of an individual will also be friends of one another \cite{Newman_PRL09}. 
However, the vast majority of graph analytics is carried out on sensitive data, which could be leaked through the results of graph analysis. Thus, there is a need to develop solutions that can analyze these graph properties while still preserving the privacy of 
individuals 
in the network.

The standard way to analyze graphs with privacy is through differentially private graph analysis \cite{Raskhodnikova_Encyclopedia16,DP,Dwork_ICALP06}. 
Differential privacy 
provides individual privacy against adversaries with arbitrary background knowledge, and has currently emerged as the gold standard for private analytics. 
However, a vast majority of differentially private graph analysis algorithms are in the \textit{centralized (or global) model} \cite{Blocki_FOCS12,Chen_PoPETs20,Day_SIGMOD16,Hay_ICDM09,Karwa_PVLDB11,Kasiviswanathan_TCC13,Nissim_STOC07,Raskhodnikova_arXiv15,Raskhodnikova_Encyclopedia16,Song_arXiv18,Wang_PAKDD13,Wang_TDP13}, where a single trusted data curator holds the entire graph and releases sanitized versions of the statistics. 
By assuming a trusted party that can access the entire graph, 
it is possible to release accurate graph statistics 
(e.g., subgraph counts \cite{Karwa_PVLDB11,Kasiviswanathan_TCC13,Song_arXiv18}, degree distribution \cite{Day_SIGMOD16,Hay_ICDM09,Raskhodnikova_arXiv15}, spectra \cite{Wang_PAKDD13}) 
and synthetic graphs \cite{Chen_PoPETs20,Wang_TDP13}. 

In many applications however, a single trusted curator may not be practicable due to security or logistical reasons. A centralized data holder is amenable to security issues such as data breaches and leaks -- a growing threat in recent years \cite{data_breach1,data_breach2}. Additionally, \textit{decentralized social networks} \cite{Paul_CN14,Salve_CSR18} (e.g., Diaspora \cite{Diaspora}) have no central server that contains an entire social graph, and use instead many servers all over the world, each containing the data of users who have chosen to register there.  Finally, a centralized solution is also not applicable to fully decentralized applications, where the server does not automatically hold information connecting users. An example of this is a mobile application that asks each user how many of 
her 
friends 
she has 
seen today, and sends noisy counts to a central server. In this application, the server does not hold any individual edge, but can still aggregate the responses to determine the average mobility in an area. 

The standard privacy solution that does not assume a trusted third party is LDP (Local Differential Privacy) \cite{Duchi_FOCS13,Kasiviswanathan_FOCS08}. 
This is a special case of DP 
(Differential Privacy) 
in the \textit{local model}, where each user obfuscates her personal data by herself and sends the obfuscated data to a (possibly malicious) data collector. 
Since the data collector does not hold the original personal data, it does not suffer from data leakage issues. 
Therefore, LDP has recently attracted attention from both academia \cite{Acharya_AISTATS19,Bassily_STOC15,Bassily_NIPS17,Fanti_PoPETs16,Kairouz_ICML16,Kairouz_JMLR16,Murakami_USENIX19,Qin_CCS16,Wang_USENIX17,Ye_ISIT17} as well as industry \cite{Erlingsson_CCS14,Ding_NIPS17,Thakurta_USPatent17}. 
However, the use of LDP has mostly been in the context of tabular data where each row corresponds to an individual, and little attention has been paid to LDP for more complex data such as graphs (see 
Section~\ref{sec:related} for details). 

\begin{figure}
\centering
\includegraphics[width=0.9\linewidth]{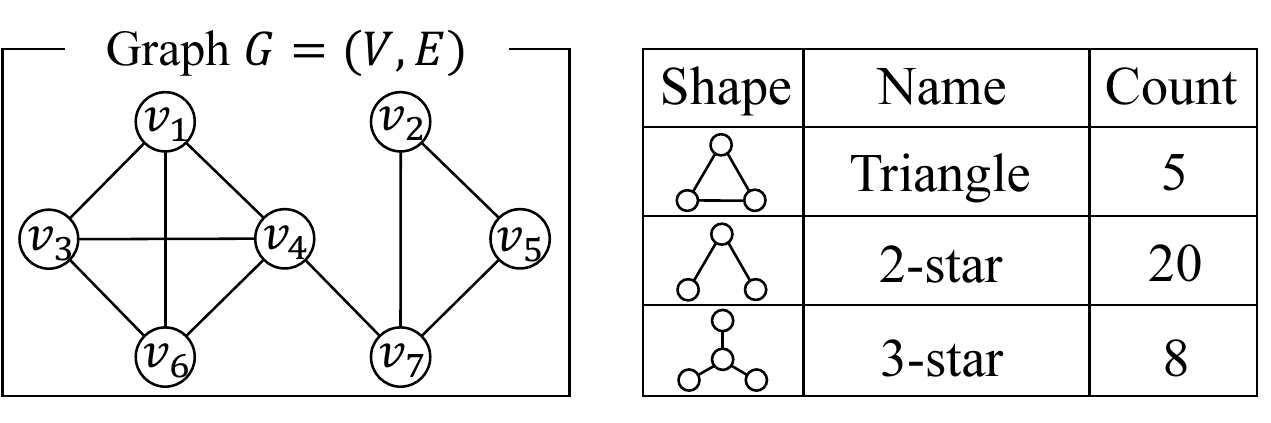}
\vspace{-5.8mm}
\caption{Example of subgraph counts.}
\label{fig:subgraph}
\end{figure}

In this paper, we consider LDP for graph data, and 
provide algorithms and theoretical performance guarantees
for calculating graph statistics in this model. 
In particular, we focus on counting triangles and $k$-stars -- the most basic and useful subgraphs. 
A triangle is a set of three nodes with three edges (we exclude automorphisms; i.e., \#closed triplets $= 3 \times$ \#triangles). 
A $k$-star consists of a central node connected to $k$ other nodes. 
Figure~\ref{fig:subgraph} shows an example of triangles and $k$-stars. 
Counting them is a fundamental task of analyzing the connection patterns in a graph, as 
the clustering coefficient can be calculated from triangle and $2$-star counts as: $\frac{3 \times \text{\#triangles}}{\#2\text{-stars}}$ (in Figure~\ref{fig:subgraph}, $\frac{3 \times 5}{20} = 0.75$). 

When we look to protect privacy of relationship information modeled by edges in a graph, 
we need to pay attention to the fact 
that some relationship information 
could be leaked from subgraph counts. 
For example, 
suppose that user (node) $v_2$ in Figure~\ref{fig:subgraph} knows 
all edges connected to $v_2$ and 
all edges between $v_3, \ldots, v_7$ 
as background knowledge, and that $v_2$ wants to know who are friends with $v_1$. 
Then ``\#2-stars = 20'' reveals 
the fact that $v_1$ has three friends, and ``\#triangles = 5'' reveals 
the fact that the three friends of $v_1$ are $v_3$, $v_4$, and $v_6$. 
Moreover, a central server that holds all friendship information (i.e., all edges) may face data breaches, as explained above. 
Therefore, a private algorithm for counting subgraphs in the local model is highly beneficial to individual privacy.

The main challenge in counting subgraphs in the local model is that existing techniques and their analysis do not directly apply. 
The existing work on LDP for tabular data assumes that 
each person's data 
is independently and identically drawn from an underlying distribution. 
In graphs, this is no longer the case; e.g., 
each triangle is not independent, 
because multiple triangles can involve the same edge; 
each $k$-star is not independent for the same reason. 
Moreover, complex inter-dependencies involving multiple people 
are 
possible in graphs. 
For example, each user cannot count triangles involving herself, because she cannot see edges between other users; 
e.g., 
user 
$v_1$ cannot see an edge between $v_3$ and $v_4$ in Figure~\ref{fig:subgraph}. 

We show that although these complex dependency among users introduces challenges, it also presents opportunities. Specifically, this kind of interdependency also implies that extra interaction between users and a data collector may be helpful depending on the prior responses. 
In this work, we investigate this issue and provide algorithms for accurately calculating subgraph counts under LDP. 

\smallskip
\noindent{\textbf{Our contributions.}}~~In this paper, we provide 
algorithms and corresponding performance guarantees for counting triangles and $k$-stars in graphs under edge Local Differential Privacy. Specifically, our contributions are as follows:
\begin{itemize}
    \item For triangles, we present two algorithms. The first is based on 
    Warner's RR (Randomized Response) \cite{Warner_JASA65} 
    and empirical estimation \cite{Kairouz_ICML16,Murakami_USENIX19,Wang_USENIX17}. 
    We then present a more sophisticated algorithm that uses an additional round of interaction between users and data collector. We provide upper-bounds on the estimation error for each algorithm, and show that the latter can  significantly reduce the estimation error. 
    
    \item For $k$-stars, we present a simple algorithm using the Laplacian mechanism. We analyze the upper-bound on the estimation error for this algorithm, and 
    show that it is order optimal in terms of the number of users among all LDP mechanisms that do not use additional interaction.
    
    \item We provide lower-bounds on the estimation error for 
    general graph functions including triangle counts and $k$-star counts in the local model. These are stronger than known upper bounds in the centralized model, and illustrate the limitations of the local model over the central.
    
    \item Finally, we evaluate our algorithms on two real datasets, and show that 
    it is indeed possible to accurately estimate 
    subgraph counts in the local model. 
    In particular, we show that 
    the interactive algorithm for triangle counts and the Laplacian algorithm for the $k$-stars provide small estimation errors when the number of users is large.
\end{itemize}

We implemented our algorithms with C/C++, and published them as open-source software \cite{LDPGraphStatistics}.

\section{Related Work}
\label{sec:related}

\smallskip
\noindent{\textbf{Graph DP.}}~~DP on graphs has been widely studied, with most prior work being in the centralized model \cite{Blocki_FOCS12,Chen_PoPETs20,Day_SIGMOD16,Hay_ICDM09,Karwa_PVLDB11,Kasiviswanathan_TCC13,Nissim_STOC07,Raskhodnikova_arXiv15,Raskhodnikova_Encyclopedia16,Song_arXiv18,Wang_PAKDD13,Wang_TDP13}. 
In this model, a number of algorithms have been proposed for releasing subgraph counts \cite{Karwa_PVLDB11,Kasiviswanathan_TCC13,Song_arXiv18}, degree distributions \cite{Day_SIGMOD16,Hay_ICDM09,Raskhodnikova_arXiv15}, eigenvalues and eigenvectors 
\cite{Wang_PAKDD13}, and synthetic graphs \cite{Chen_PoPETs20,Wang_TDP13}. 

There has also been a handful of work on graph algorithms in the local DP model~\cite{Qin_CCS17,Sun_CCS19,Ye_ICDE20,Ye_TKDE21,Zhang_USENIX20}. 
For example, Qin \textit{et al.} \cite{Qin_CCS17} propose an algorithm for generating synthetic graphs. 
Zhang \textit{et al.} \cite{Zhang_USENIX20} propose an algorithm for software usage analysis under LDP, where a node represents a software component (e.g., function in a code) and an edge represents a control-flow between components. 
Neither of these works focus on subgraph counts. 

Sun \textit{et al.} \cite{Sun_CCS19} propose an algorithm for counting subgraphs in the local model under the assumption that each user allows her friends to see all her connections. 
However, this assumption does not hold in many practical scenarios; e.g., a Facebook user can change her setting so that friends cannot see her connections. Therefore, we assume that each user knows only her friends rather than all of her friends' friends. 
The algorithms in \cite{Sun_CCS19} cannot be applied to this setting.

Ye \textit{et al.} \cite{Ye_ICDE20} propose a one-round algorithm for estimating graph metrics including the clustering coefficient. 
Here they apply Warner's RR (Randomized Response) to an adjacency matrix. 
However, it introduces a very large bias for triangle counts.
In \cite{Ye_TKDE21}, they propose a method for reducing the bias in the estimate of triangle counts. 
However, the method in \cite{Ye_TKDE21} introduces some approximation, and it is unclear whether their estimate is unbiased. 
In this paper, we propose a one-round algorithm for triangles that uses empirical estimation as a post-processing step, and prove that our estimate is unbiased. 
We also show in Appendix~\ref{sec:RR_emp} that our one-round algorithm significantly outperforms the one-round algorithm in \cite{Ye_ICDE20}. 
Moreover, we show in Section~\ref{sec:experiments} that our two-rounds algorithm significantly outperforms our one-round algorithm.

Our work also differs from \cite{Sun_CCS19,Ye_ICDE20,Ye_TKDE21} in that we provide lower-bounds on the estimation error.

\smallskip
\noindent{\textbf{LDP.}}~~Apart from graphs, a number of works have looked at analyzing statistics (e.g., discrete distribution estimation\cite{Acharya_AISTATS19,Fanti_PoPETs16,Kairouz_ICML16,Kairouz_JMLR16,Murakami_USENIX19,Wang_USENIX17,Ye_ISIT17}, 
heavy hitters \cite{Bassily_STOC15,Bassily_NIPS17,Qin_CCS16}) under LDP. 

However, they use LDP in the context of tabular data, and do not consider the kind of complex interdependency in graph data (as described in Section~\ref{sec:intro}). 
For example, the RR with empirical estimation is optimal in the low privacy regimes for estimating a distribution for tabular data \cite{Kairouz_ICML16,Kairouz_JMLR16}. 
We apply the RR and empirical estimation to counting triangles, and show that it is suboptimal and significantly outperformed by a more sophisticated 
two-rounds 
algorithm. 

\smallskip
\noindent{\textbf{Upper/lower-bounds.}}~~Finally, we note that existing work on upper-bounds and lower-bounds cannot be directly applied to our setting. 
For example, there are upper-bounds
\cite{Acharya_AISTATS19,Kairouz_ICML16,Kairouz_JMLR16,Ye_ISIT17,Joseph_ArXiv19,Joseph_ArXiv19_Gauss} and
lower-bounds
\cite{Acharya_arXiv20,Duchi_ArXiv14,Duchi_ArXiv17,Joseph_ArXiv19,Duchi_ArXiv19,
Joseph_ArXiv19_Gauss, Joseph_SODA20} on the estimation error (or sample complexity) in 
distribution estimation of tabular data. 
However, they assume that each original data value is independently sampled from an underlying distribution. 
They cannot be directly applied to our graph setting, because each triangle and each $k$-star involve multiple edges and are not independent (as described in Section~\ref{sec:intro}). 
Rashtchian \textit{et al.} \cite{Rashtchian_arXiv20} provide lower-bounds on communication complexity (i.e., number of queries) of vector-matrix-vector queries for estimating subgraph counts. 
However, their lower-bounds are not on the estimation error, and cannot be applied to our problem.

\section{Preliminaries}
\label{sec:preliminaries}

\subsection{Graphs and Differential Privacy}
\label{sub:graphs_CDP}
\noindent{\textbf{Graphs.}}~~Let $\nats$, $\nnints$, $\reals$, and $\nnreals$ be the sets of natural numbers, non-negative integers, real numbers, and non-negative real numbers, respectively. 
For $a \in \nats$, let $[a] = \{1, 2, \ldots, a\}$. 

We consider an undirected graph $G=(V,E)$, where 
$V$ is a set of nodes (i.e., users) and $E$ is a set of edges. 
Let $n \in \nats$ be the number of users in $V$, and let $v_i \in V$ the $i$-th user; i.e., $V=\{v_1,\ldots,v_n\}$. 
An edge $(v_i, v_j) \in E$ represents a relationship between users $v_i \in V$ and $v_j \in V$. 
The number of edges connected to a single node is called the \textit{degree} of the node. 
Let $d_{max} \in \nats$ be the \textit{maximum degree} (i.e., maximum number of edges connected to a node) in graph $G$. 
Let $\calG$ be the set of possible graphs 
$G$ on $n$ users. 
A graph $G \in \calG$ can be represented as a symmetric adjacency matrix $\bmA=(a_{i,j}) \in \{0,1\}^{n \times n}$, where $a_{i,j}=1$ if $(v_i,v_j) \in E$ and $a_{i,j}=0$ otherwise.

\smallskip
\noindent{\textbf{Types of DP.}}~~
DP (Differential Privacy) \cite{DP,Dwork_ICALP06} is known as a gold standard for data privacy. 
According to the underlying architecture, DP can be divided into two 
types: 
\textit{centralized DP} and \textit{LDP (Local DP)}. 
Centralized DP assumes the centralized model, where a ``trusted'' data collector 
collects the original personal data from all users and obfuscates a query (e.g., counting query, histogram query) on the set of personal data. 
LDP assumes the local model, where each user does not trust even the data collector. 
In this model, each user obfuscates a query on her personal data by herself and sends the obfuscated data to the data collector. 

If the data are represented as a graph, we can consider two types of DP: 
\textit{edge DP} and \textit{node DP} \cite{Hay_ICDM09,Raskhodnikova_Encyclopedia16}. 
Edge DP considers two neighboring graphs $G, G' \in \calG$ that differ in one edge. 
In contrast, node DP considers two neighboring graphs $G, G' \in \calG$ in which $G'$ is obtained from $G$ by adding or removing one node along with its adjacent edges. 

Although Zhang \textit{et al.} \cite{Zhang_USENIX20} consider node DP in the local model where each node represents a software component, we consider a totally different problem where each node represents a user. 
In the latter case, 
node DP requires us to hide the \textit{existence of each user} along with her all edges. 
However, many applications in the local model send the identity of each user to a server. 
For example, 
we can consider a mobile application 
that sends to a server how many friends a user met today along with her user ID. 
In this case, the user may not mind sending her user ID, 
but may want to hide her edge information (i.e., who she met today). 
Although we cannot use node DP in such applications, we can use edge DP to deny the presence/absence of each edge (friend). 
Thus we focus on edge DP in the same way as \cite{Qin_CCS17,Sun_CCS19,Ye_ICDE20,Ye_TKDE21}. 

Below we explain edge DP in the centralized model. 

\smallskip
\noindent{\textbf{Centralized DP.}}~~We call edge DP in the centralized model \textit{edge centralized DP}. 
Formally, it is defined as follows:

\begin{definition} [$\epsilon$-edge centralized DP] \label{def:edge_CDP} 
Let $\epsilon \in \nnreals$. 
A randomized algorithm $\calM$ with domain $\calG$ provides \emph{$\epsilon$-edge centralized DP} 
if for any two 
neighboring 
graphs $G, G' \in \calG$ that differ in one edge and any $S \subseteq \mathrm{Range}(\calM)$, 
\begin{align}
\Pr[\calM(G) \in S] \leq e^\epsilon \Pr[\calM(G') \in S].
\label{eq:edge_CDP}
\end{align}
\end{definition}
Edge centralized DP guarantees that an adversary who has observed the output of $\calM$ cannot determine whether it is come from $G$ or $G'$ with a certain degree of confidence. 
The parameter $\epsilon$ is called the \textit{privacy budget}. 
If $\epsilon$ is close to zero, then $G$ and $G'$ are almost equally likely, which means that an edge in $G$ is strongly protected. 

We also note that edge DP can be 
used 
to protect $k\in\nats$ edges by using 
the notion of group privacy \cite{DP}.
Specifically, if $\calM$ provides $\epsilon$-edge centralized DP, then for any two graphs $G, G' \in \calG$ that differ in $k$ edges and any $S \subseteq \mathrm{Range}(\calM)$, we obtain: 
$\Pr[\calM(G) \in S] \leq e^{k\epsilon} \Pr[\calM(G') \in S]$; i.e., $k$ edges are protected with privacy budget $k\epsilon$.

\subsection{Local Differential Privacy}
\label{sub:LDP}
LDP (Local Differential Privacy) \cite{Kasiviswanathan_FOCS08,Duchi_FOCS13} is a privacy metric to protect personal data of each user in the local model. 
LDP has been originally introduced to protect each user's data record that is independent from the other records. 
However, in a graph, each edge is connected to two users. 
Thus, when we define edge DP in the local model, 
we should consider what we want to protect. 
In this paper, we consider two definitions of edge DP in the local model: \textit{edge LDP} in \cite{Qin_CCS17} and 
\textit{relationship DP} 
introduced in this paper. 
Below, we will explain these two definitions in detail.

\smallskip
\noindent{\textbf{Edge LDP.}}~~Qin \textit{et al.} \cite{Qin_CCS17} defined edge LDP based on a user's \textit{neighbor list}. 
Specifically, 
let $\bma_i = (a_{i,1}, \ldots, a_{i,n}) \in \{0,1\}^n$ be a neighbor list of user $v_i$. 
Note that $\bma_i$ is the $i$-th row of the adjacency matrix $\bmA$ of graph $G$. 
In other words, graph $G$ can be represented as neighbor lists $\textbf{a}_1, \ldots, \textbf{a}_n$. 

Then edge LDP is defined as follows: 

\begin{definition} [$\epsilon$-edge LDP \cite{Qin_CCS17}] \label{def:edge_LDP} 
Let $\epsilon \in \nnreals$. 
For any $i \in [n]$, let $\calR_i$ with domain $\{0,1\}^n$ be a randomized algorithm of user $v_i$. 
$\calR_i$ 
provides \emph{$\epsilon$-edge LDP} 
if for any two neighbor lists 
$\bma_i, \bma'_i \in \{0,1\}^n$ 
that differ in one bit and any 
$S \subseteq \mathrm{Range}(\calR_i)$, 
\begin{align}
\Pr[\calR_i(\bma_i) \in S] \leq e^\epsilon \Pr[\calR_i(\bma'_i) \in S].
\label{eq:edge_LDP}
\end{align}
\end{definition}
Edge LDP in Definition~\ref{def:edge_LDP} protects 
a single bit in a neighbor list with privacy budget $\epsilon$. 
As with edge centralized DP, edge LDP can also be 
used 
to protect $k \in \nats$ 
bits in a neighbor list 
by using group privacy; i.e., $k$ bits in a neighbor list are protected with privacy budget $k\epsilon$.

\smallskip
\noindent{\textbf{RR (Randomized Response).}}~~As a simple example of a randomized algorithm 
$\calR_i$ 
providing $\epsilon$-edge LDP, we explain 
Warner's RR (Randomized Response) \cite{Warner_JASA65} applied to a neighbor list, 
which is called 
the randomized neighbor list in \cite{Qin_CCS17}. 

Given a neighbor list 
$\bma_i \in \{0,1\}^n$, 
this algorithm 
outputs a noisy neighbor lists 
$\bmb = (b_1, \ldots, b_n) \in \{0,1\}^n$ 
by flipping each bit in 
$\bma_i$ 
with probability $p = \frac{1}{e^\epsilon + 1}$; i.e., for each $j \in [n]$, 
$b_j \neq a_{i,j}$ with probability $p$ and $b_j = a_{i,j}$ with probability $1-p$. 
Since 
$\Pr[\calR(\bma_i) \in S]$ and $\Pr[\calR(\bma'_i) \in S]$ 
in (\ref{eq:edge_LDP}) differ by $e^\epsilon$ for 
$\bma_i$ and $\bma'_i$ 
that differ in one bit, 
this algorithm 
provides $\epsilon$-edge LDP. 

\smallskip
\noindent{\textbf{Relationship DP.}}~~In graphs such as social networks, 
it is usually the case that two users share knowledge of the presence of an edge between them. 
To hide their mutual edge, 
we must consider
that both user's outputs can leak information. 
We introduce a DP definition called relationship DP that hides \textit{one entire edge in graph $G$ during the whole process:}

\begin{definition} [$\epsilon$-relationship DP] 
\label{def:entire_edge_LDP} 
Let $\epsilon \in \nnreals$. 
A tuple of randomized algorithms $(\calR_1, \ldots, \calR_n)$, 
each of which is with domain $\{0,1\}^n$, 
provides 
\emph{$\epsilon$-relationship DP} 
if for any two 
neighboring 
graphs $G, G' \in \calG$ that differ in one edge and any $S \subseteq \mathrm{Range}(\calR_1) \times \ldots \times \mathrm{Range}(\calR_n)$, 
\begin{align}
&\Pr[(\calR_1(\bma_1), \ldots, \calR_n(\bma_n)) \in S] \nonumber\\
&\leq e^\epsilon \Pr[(\calR_1(\bma'_1), \ldots, \calR_n(\bma'_n)) \in S],
\label{eq:entire_edge_LDP}
\end{align}
where $\bma_i$ (resp.~$\bma'_i$) $\in \{0,1\}^n$ is the $i$-th row of the adjacency matrix of graph $G$ (resp.~$G'$).
\end{definition}

Relationship DP is the same as \textit{decentralized DP} in \cite{Sun_CCS19} except that the former (resp.~latter) assumes that each user knows only her friends (resp.~all of her friends' friends).

Edge LDP assumes that 
user $v_i$'s edge connected to user $v_j$ 
and 
user $v_j$'s edge connected to user $v_i$ 
are different secrets, with user $v_i$ knowing the former and user $v_j$ knowing the latter. 
Relationship DP assumes that the two secrets are the same.

Note that 
the threat model of relationship DP is 
different from 
that of 
LDP -- 
some amount of trust must be given to the other users 
in relationship DP. 
Specifically, user $v_i$ must trust user $v_j$ to not leak information
about their shared edge. If $k \in \nats$ users decide not to follow their protocols, 
then up to $k$ edges incident to user $v_i$ may be compromised. This trust model
is stronger than 
LDP, 
which assumes nothing about what other users 
do,
but is much weaker than centralized DP in which 
all edges are 
in the hands of the central party.

Other than the differing threat models, relationship DP and edge LDP are quite closely related:

\begin{proposition} \label{prop:edge_LDP_entire_edge_LDP} 
If randomized algorithms $\calR_1, \ldots, \calR_n$ provide $\epsilon$-edge LDP, 
then $(\calR_1, \ldots, \calR_n)$ provides $2\epsilon$-relationship DP.
\end{proposition}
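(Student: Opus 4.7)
The proof plan is essentially a two-mechanism composition argument tailored to the graph setting. The key structural observation is that when two graphs $G, G' \in \calG$ differ in a single edge $(v_i, v_j)$, the corresponding adjacency matrices agree on every row except rows $i$ and $j$. That is, with $\bma_1, \ldots, \bma_n$ (resp.\ $\bma'_1, \ldots, \bma'_n$) denoting the rows of $G$ (resp.\ $G'$), we have $\bma_k = \bma'_k$ for all $k \notin \{i, j\}$, while $\bma_i$ and $\bma'_i$ differ only in their $j$-th bit, and $\bma_j$ and $\bma'_j$ differ only in their $i$-th bit.

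With this observation, I would exploit the independence of the $n$ local randomizers. Since user $k$ runs $\calR_k$ only on its own neighbor list, the joint output distribution is a product distribution $\calR_1(\bma_1) \otimes \cdots \otimes \calR_n(\bma_n)$, and similarly for the primed graph. For any measurable $S \subseteq \mathrm{Range}(\calR_1) \times \cdots \times \mathrm{Range}(\calR_n)$, I would expand the probability as an iterated integral (or a finite sum in the discrete case), pulling out the factors for indices $k \notin \{i,j\}$ where the two distributions are identical, so they cancel in the ratio.

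It then remains to bound the product of the two remaining factors corresponding to users $i$ and $j$. For each fixed realization of the outputs of the unchanged users, the conditional probability that $(\calR_i(\bma_i), \calR_j(\bma_j))$ lands in the corresponding slice of $S$ can be written as an integral against the product measure $\calR_i(\bma_i) \otimes \calR_j(\bma_j)$. Since $\bma_i$ and $\bma'_i$ differ in exactly one bit, $\epsilon$-edge LDP of $\calR_i$ (applied slice by slice) gives a factor $e^\epsilon$; the analogous bound for $\calR_j$ gives another factor $e^\epsilon$. Multiplying these yields the desired $e^{2\epsilon}$ bound and hence $2\epsilon$-entire edge LDP.

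The main technical nuisance, rather than a deep obstacle, is making the slicing argument rigorous when the output spaces $\mathrm{Range}(\calR_k)$ are not discrete; one handles this by invoking Fubini and treating the set-based LDP guarantee as a bound on the Radon--Nikodym derivative of the pushforward measures. Conceptually, though, the proposition is just the two-fold parallel/sequential composition of $\epsilon$-edge LDP: flipping one graph edge perturbs at most two neighbor lists, and each perturbation costs at most $\epsilon$ in the LDP sense, for a total of $2\epsilon$.
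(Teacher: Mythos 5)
Your proposal is correct and follows essentially the same route as the paper: the key observation in both is that changing one edge $(v_i,v_j)$ alters exactly one bit in each of the two neighbor lists $\bma_i$ and $\bma_j$ while leaving all other rows unchanged, so the guarantee degrades to $2\epsilon$. The only difference is that the paper invokes the composition theorem as a black box, whereas you unpack it via the product-measure factorization and slice-by-slice application of edge LDP --- a more detailed but equivalent argument.
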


\begin{proof}
The existence of edge $(v_i, v_j) \in E$ affects two elements $a_{i,j}, a_{j,i} \in \{0,1\}$ in the adjacency matrix $\bmA$. 
  Then by group privacy~\cite{DP}, Proposition~\ref{prop:edge_LDP_entire_edge_LDP} holds.
\end{proof}

Proposition~\ref{prop:edge_LDP_entire_edge_LDP} states that when we want to protect one edge as a whole, the privacy budget is at most doubled. 
Note, however, that 
some randomized algorithms do not have this doubling issue. 
For example, we can apply the RR to the $i$-th neighbor list $\bma_i$ so that $\calR_i$ outputs noisy bits 
$(b_1, \ldots, b_{i-1}) \in \{0,1\}^{i-1}$ 
for only users 
$v_1, \ldots, v_{i-1}$ with smaller user IDs; 
i.e., 
for each 
$j \in \{1, \ldots, i-1\}$, 
$b_j \neq a_{i,j}$ with probability $p = \frac{1}{e^\epsilon + 1}$ and $b_j = a_{i,j}$ with probability $1-p$. 
In other words, we can extend 
the RR for a neighbor list 
so that $(\calR_1, \ldots, \calR_n)$ outputs only 
the lower triangular part 
of the noisy adjacency matrix. 
Then all of $\calR_1, \ldots, \calR_n$ provide $\epsilon$-edge LDP. 
In addition, the existence of edge $(v_i, v_j) \in E$ 
$(i > j)$ 
affects only one element $a_{i,j}$ in 
the lower triangular part 
of $\bmA$. 
Thus, $(\calR_1, \ldots, \calR_n)$ provides $\epsilon$-relationship DP (not $2\epsilon$). 

Our proposed algorithm in Section~\ref{sub:two_rounds} also has this property; i.e., 
it provides both $\epsilon$-edge LDP and $\epsilon$-relationship DP.

\subsection{Global Sensitivity}
\label{sub:sensitivity}
In this paper, we use the notion of global sensitivity \cite{DP} to provide edge centralized DP or edge LDP.

Let $\calD$ be the set of possible input data of a randomized algorithm. 
In edge centralized DP, $\calD = \calG$. 
In edge LDP, $\calD = \{0,1\}^n$. 
Let $f: \calD \rightarrow \reals$ be a function that takes data $D \in \calD$ as input and outputs some statistics $f(D) \in \reals$ about the data. 
The most basic method for providing DP is to add the Laplacian noise proportional to the global sensitivity \cite{DP}.

\begin{definition} [Global sensitivity] \label{def:global_sen} 
The global sensitivity of a function $f: \calD \rightarrow \reals$ is given by:
\begin{align*}
GS_f = \underset{D,D' \in \calD: D \sim D'}{\max} |f(D) - f(D')|,
\end{align*}
where $D \sim D'$ represents that $D$ and $D'$ are neighbors; i.e., they differ in one edge in edge centralized DP, and differ in one bit in edge LDP.
\end{definition}

In graphs, the global sensitivity $GS_f$ can be very large. 
For example, adding one edge may result in the increase of triangle (resp.~$k$-star) counts by $n-2$ (resp.~$\binom{n}{k-1}$). 

One way to significantly reduce the global sensitivity is to use \textit{graph projection} \cite{Day_SIGMOD16,Kasiviswanathan_TCC13,Raskhodnikova_arXiv15}, which removes some neighbors from a neighbor list so that the maximum degree $d_{max}$ is upper-bounded by a predetermined value $\td_{max} \in \nnints$. 
By using the graph projection with $\td_{max} \ll n$, we can enforce small global sensitivity; e.g., the global sensitivity of triangle (resp.~$k$-star) counts is at most $\td_{max}$ (resp.~$\binom{\td_{max}}{k-1}$) after the projection. 

Ideally, we would like to set $\td_{max} = d_{max}$ to avoid removing neighbors from a neighbor list (i.e., to avoid the loss of utility). 
However, the maximum degree $d_{max}$ can leak some information about the original graph $G$. 
In this paper, we address this issue by privately estimating $d_{max}$ with edge LDP and then using the private estimate of $d_{max}$ as $\td_{max}$.
This technique 
is also known as 
\textit{adaptive clipping} 
in differentially private stochastic gradient descent (SGD) \cite{Pichapati_arXiv19,Thakkar_arXiv19}.

\subsection{Graph Statistics and Utility Metrics}
\label{sub:graph_statistics}

\noindent{\textbf{Graph statistics.}}~~We consider a graph function that takes a graph $G \in \calG$ as input and outputs some graph statistics. 
Specifically, 
let $f_\triangle: \calG \rightarrow \nnints$ be a graph function that outputs the number of triangles in $G$. 
For $k \in \nats$, let $f_{k\star}: \calG \rightarrow \nnints$ be a graph function that outputs the number of $k$-stars in $G$. 
For example, if a graph $G$ is as shown in Figure~\ref{fig:subgraph}, then $f_\triangle(G) = 5$, $f_{2\star}(G) = 20$, and $f_{3\star}(G) = 8$. The clustering coefficient can also be calculated from $f_\triangle(G)$ and $f_{2\star}(G)$ as: $\frac{3 f_\triangle(G)}{f_{2\star}(G)} = 0.75$.

\begin{table}[t]
\caption{Basic notations in this paper.} 
\centering
\hbox to\hsize{\hfil
\begin{tabular}{l|l}
\hline
Symbol		&	Description\\
\hline
$n$         &	    Number of users.\\
$G=(V,E)$   &	    Graph with $n$ nodes (users) $V$ and edges $E$.\\
$v_i$       &       $i$-th user in $V$.\\
$d_{max}$   &       Maximum degree of $G$.\\
$\td_{max}$   &       Upper-bound on $d_{max}$ (used for projection).\\
$\calG$     &       Set of possible graphs on $n$ users.\\
$\bmA=(a_{i,j})$	    &		Adjacency matrix.\\
$\bma_i$	&		$i$-th row of $\bmA$ (i.e., neighbor list of $v_i$).\\
$\calR_i$     &       Randomized algorithm on $\bma_i$.\\
$f_\triangle(G)$   &  Number of triangles in $G$.\\
$f_{k\star}(G)$    &  Number of $k$-stars in $G$.\\
\hline
\end{tabular}
\hfil}
\label{tab:notations}
\end{table}

Table~\ref{tab:notations} shows the basic notations used in this paper.

\smallskip
\noindent{\textbf{Utility metrics.}}~~We use the $l_2$ loss (i.e., squared error) \cite{Kairouz_ICML16,Wang_USENIX17,Murakami_USENIX19} and the relative error \cite{Bindschaedler_SP16,Chen_CCS12,Xiao_SIGMOD11} as utility metrics. 

Specifically, let 
$\hf(G) \in \reals$ be an estimate of graph statistics $f(G) \in \reals$. 
Here $f$ can be instantiated by 
$f_\triangle$ or $f_{k\star}$; 
i.e., 
$\hf_\triangle(G)$ and $\hf_{k\star}(G)$ are the estimates of $f_\triangle(G)$ and $f_{k\star}(G)$, respectively. 
Let $l_2^2$ be the $l_2$ loss function, which maps the estimate $\hf(G)$ and the true value $f(G)$ to the $l_2$ loss; i.e., $l_2^2(\hf(G), f(G)) = (\hf(G) - f(G))^2$. 
Note that when we use a randomized algorithm providing edge LDP (or edge centralized DP), $\hf(G)$ depends on the randomness in the algorithm. 
In our theoretical analysis, we analyze the expectation of the $l_2$ loss over 
the randomness, as with \cite{Kairouz_ICML16,Wang_USENIX17,Murakami_USENIX19}. 

When $f(G)$ is large, the $l_2$ loss can also be large. 
Thus in our experiments, we also evaluate the relative error, along with the $l_2$ loss. 
The relative error is defined as: $\frac{|\hf(G) - f(G)|}{\max\{f(G), \eta\}}$, where $\eta \in \nnreals$ is a very small positive value. 
Following the convention \cite{Bindschaedler_SP16,Chen_CCS12,Xiao_SIGMOD11}, we set $\eta = 0.001n$ 
for $f_\triangle$ and $f_{k\star}$. 

\section{Algorithms}
\label{sec:algorithms}
In the local model, 
there 
are several ways 
to 
model how the data collector interacts with 
the users~\cite{Duchi_FOCS13,Joseph_SODA20,Qin_CCS17}.
The simplest model 
would be 
to assume that 
the data collector sends 
a 
query $\calR_i$ to each user $v_i$ once, 
and then 
each user $v_i$ independently sends an answer $\calR_i(\bma_i)$ to the data collector. 
In this model, there is one-round interaction between each user and the data collector. 
We call this the
\textit{one-round LDP model}. 
For example, the RR 
for a neighbor list in Section~\ref{sub:LDP} assumes this model.

However, in certain cases it may be possible 
for the data collector to send a query to each user multiple times. 
This could allow for more powerful queries that result in more accurate 
subgraph counts 
\cite{Sun_CCS19} 
or more accurate synthetic graphs~\cite{Qin_CCS17}. 
We call this the \textit{multiple-rounds LDP model}.

In 
Sections~\ref{sub:non-interactive_k_stars} and \ref{sub:non-interactive_triangles}, 
we consider the problems of computing $f_{k\star}(G)$ 
and $f_\triangle(G)$ 
for a graph $G \in \calG$ in the 
one-round 
LDP model. 
Our algorithms and bounds highlight limitations of the
one-round 
LDP model. Compared to the centralized graph DP model, the
one-round 
LDP model cannot compute $f_{k\star}(G)$ as accurately.
Furthermore, the algorithm for $f_\triangle(G)$ does not perform 
well. 
In Section~\ref{sub:two_rounds}, we propose a more sophisticated algorithm for computing  $f_\triangle(G)$ in the two-rounds LDP model, and show that it provides much smaller expected $l_2$ loss than the algorithm in the one-round LDP model.
In Section~\ref{sub:lower_bounds}, we show a general result about lower bounds on the expected $l_2$ loss of graph statistics in LDP. 
The proofs of all statements in Section~\ref{sec:algorithms} are given in 
\conference{the full version \cite{Imola_arXiv21}}\arxiv{Appendix~\ref{sec:proof}}.

\subsection{One-Round Algorithms for $k$-Stars}
\label{sub:non-interactive_k_stars}

\noindent{\textbf{Algorithm.}}~~We begin with the problem of computing $f_{k\star}(G)$ in the 
one-round 
LDP model. 
For this model, we introduce a simple algorithm using the Laplacian mechanism, and prove that this algorithm can achieve order optimal expected $l_2$ loss among all one-round LDP algorithms. 

\setlength{\algomargin}{4mm}
\begin{algorithm}
  \SetAlgoLined
  \KwData{Graph $G$ 
  represented as 
  neighbor lists $\bma_1, \ldots, \bma_n \allowbreak \in \{0,1\}^n$, privacy budget $\epsilon \in \nnreals$, $\td_{max} \in \nnints$.}
  \KwResult{Private estimate of $f_{k\star}(G)$.}
  $\Delta \leftarrow \binom{\td_{max}}{k-1}$\;
  \For{$i=1$ \KwTo $n$}{
    $\bma_i \leftarrow \texttt{GraphProjection}(\bma_i, \td_{max})$\;
    \tcc{$d_i$ is a degree of user $v_i$.}
    $d_i \leftarrow \sum_{j=1}^n a_{i,j}$\;
    $r_i \leftarrow \binom{d_i}{k}$\;
    $\hat{r}_i \leftarrow r_i + \Lap\left(\frac{\Delta}{\epsilon}\right)$\;
    $release(\hat{r}_i)$\;
  }
  \KwRet{$\sum_{i=1}^n \hat{r}_i$}
  \caption{\alg{LocalLap$_{k\star}$}\label{alg:k-stars}}
\end{algorithm}

Algorithm~\ref{alg:k-stars} shows the one-round algorithm for $k$-stars. 
It takes as input a graph $G$ (represented as neighbor lists $\bma_1, \ldots, \bma_n \in \{0,1\}^n$), the privacy budget $\epsilon$, and a non-negative integer $\td_{max} \in \nnints$. 

The parameter $\td_{max}$ plays a role as an upper-bound on the maximum degree $d_{max}$ of $G$. 
Specifically, let $d_i \in \nnints$ be the degree of user $v_i$; i.e., the number of ``$1$''s in her neighbor list $\bma_i$. 
In line 3, user $v_i$
uses a function (denoted by \texttt{GraphProjection}) that performs graph projection \cite{Day_SIGMOD16,Kasiviswanathan_TCC13,Raskhodnikova_arXiv15} for $\bma_i$ as follows. 
If $d_i$ exceeds $\td_{max}$, it randomly 
selects $\td_{max}$ neighbors out of $d_i$ neighbors; otherwise, it uses $\bma_i$ as it is. 
This guarantees that each user's degree never exceeds $\td_{max}$; i.e., $d_i \leq \td_{max}$ after line 4. 

After the graph projection, 
user $v_i$ 
counts the number of $k$-stars $r_i \in \nnints$ of which she is a center (line 5), and 
adds the Laplacian noise 
to 
$r_i$ 
(line 6). 
Here, since adding one edge results in the increase of at most $\binom{\td_{max}}{k-1}$ $k$-stars, the 
sensitivity of 
$k$-star counts for user $v_i$ 
is at most $\binom{\td_{max}}{k-1}$ (after graph projection). 
Therefore, we add $\Lap(\frac{\Delta}{\epsilon})$ to $r_i$, where $\Delta = \binom{\td_{max}}{k-1}$ and 
for $b \in \nnreals$ 
$\Lap(b)$ is a random variable that represents the Laplacian noise with mean $0$ and scale $b$. 
The final answer of Algorithm~\ref{alg:k-stars} is
simply the sum of all the noisy $k$-star counts. 
We denote this algorithm by \alg{LocalLap$_{k\star}$}.

The value of $\td_{max}$ greatly affects the utility. 
If $\td_{max}$ is too large, a large amount of the Laplacian noise would be added. 
If $\td_{max}$ is too small, a great number of neighbors would be reduced by 
graph projection. 
When we have some prior knowledge about the maximum degree $d_{max}$, we can set $\td_{max}$ to an appropriate value. 
For example, 
the maximum number of connections allowed on Facebook is $5000$~\cite{Facebook_Limit}. 
In this case, we can set $\td_{max}=5000$, and then graph projection does nothing. 
Given that the number of Facebook monthly active users is over $2.7$ billion \cite{Facebook_reports20}, $\td_{max}=5000$ is much smaller than $n$. 
For another example, 
if we know that the degree is smaller than $1000$ for most users, then we can set $\td_{max} = 1000$ and perform graph projection for 
users whose degrees exceed $\td_{max}$. 

In some applications, the data collector may not have such prior knowledge about $\td_{max}$. 
In this case, we can 
privately estimate $d_{max}$ by allowing an additional round between each user and the data collector, and use the private estimate of $d_{max}$ as $\td_{max}$. 
We describe how to privately estimate $d_{max}$ with edge LDP at the end of Section~\ref{sub:non-interactive_k_stars}.

\smallskip
\noindent{\textbf{Theoretical properties.}}~~\alg{LocalLap$_{k\star}$} 
has the following guarantees:

\begin{theorem}\label{thm:k-stars_LDP}
  \alg{LocalLap$_{k\star}$}
  provides $\epsilon$-edge LDP.
\end{theorem}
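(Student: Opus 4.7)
The goal is to show that for every user $v_i$, the randomized algorithm $\calR_i: \{0,1\}^n \to \reals$ implicitly defined by lines~3--7 of Algorithm~\ref{alg:k-stars} satisfies the condition of Definition~\ref{def:edge_LDP}. My plan is to decompose $\calR_i$ into a deterministic pre-processing step (graph projection followed by computing $r_i = \binom{d_i}{k}$) and the addition of Laplace noise, bound the sensitivity of the pre-processing step by $\Delta = \binom{\td_{max}}{k-1}$, and then invoke the standard guarantee of the Laplace mechanism. The permutation used inside \texttt{GraphProjection} is sampled from a distribution independent of $\bma_i$, so I can condition on its value and treat the projection as deterministic throughout the sensitivity analysis.

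\textbf{Sensitivity of the projected $k$-star count.} Fix a user $v_i$, a permutation $\pi_i$, and two neighbor lists $\bma_i, \bma'_i$ differing in exactly one coordinate $j$. Let $d$ and $d'$ be the number of $1$s in $\bma_i$ and $\bma'_i$, so $|d-d'|=1$; without loss of generality $d' = d+1$. Let $\bd, \bd'$ denote the degrees after \texttt{GraphProjection}, which selects the first $\min(\cdot, \td_{max})$ neighbors in the order determined by $\pi_i$. I will do a short case analysis: if $d+1 \leq \td_{max}$, projection is the identity on both lists so $\bd'-\bd = 1$; if $d \geq \td_{max}$, projection truncates both to exactly $\td_{max}$ neighbors, so $\bd = \bd' = \td_{max}$; the remaining boundary case $d = \td_{max}$, $d' = \td_{max}+1$ also produces $\bd = \bd' = \td_{max}$. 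In every case $\bd,\bd' \leq \td_{max}$ and $|\bd-\bd'| \leq 1$. Using the identity $\binom{m+1}{k} - \binom{m}{k} = \binom{m}{k-1}$, the change in $r_i$ is therefore at most $\binom{\td_{max}-1}{k-1} \leq \binom{\td_{max}}{k-1} = \Delta$.

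\textbf{Laplace mechanism and conclusion.} Conditioned on $\pi_i$, the map $\bma_i \mapsto r_i$ is a deterministic real-valued function with sensitivity bounded by $\Delta$ in the sense of Definition~\ref{def:global_sen}. Releasing $\hr_i = r_i + \Lap(\Delta/\epsilon)$ is the classical Laplace mechanism, which satisfies
\[
\frac{\Pr[\calR_i(\bma_i) \in S \mid \pi_i]}{\Pr[\calR_i(\bma'_i) \in S \mid \pi_i]} \leq e^\epsilon
\]
for every $S \subseteq \reals$. Integrating over $\pi_i$, whose distribution does not depend on $\bma_i$, preserves the inequality and gives $\epsilon$-edge LDP for $\calR_i$. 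The final return value of Algorithm~\ref{alg:k-stars} is produced by the data collector from the released values $\hr_1, \ldots, \hr_n$, i.e.\ by post-processing, which does not affect the LDP guarantee of the individual user mechanisms.

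\textbf{Main obstacle.} The routine Laplace step is immediate once the sensitivity is in hand; the subtle point is the case analysis showing that \texttt{GraphProjection} is stable under single-bit flips of its input (that the projected degree changes by at most one and is always at most $\td_{max}$, even when the flip interacts with the permutation-based truncation). Getting these boundary cases right, and being explicit that the permutation is data-independent so it can be conditioned away, is where I would spend the most care.
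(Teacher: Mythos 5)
Your proposal is correct and follows essentially the same route as the paper's proof: a case analysis on whether the degree exceeds $\td_{max}$ (showing the projected degree changes by at most one and stays at most $\td_{max}$, with identical outputs when both lists are truncated), a sensitivity bound of $\binom{\td_{max}}{k-1}$ via the Pascal identity, and the standard Laplace-mechanism guarantee. Your explicit conditioning on the data-independent permutation is a slightly more careful handling of the projection's randomness than the paper's, but the substance of the argument is the same.
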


\begin{theorem}\label{thm:k-stars}
  Let
  $\hf_{k\star}(G, \epsilon, \td_{max})$ 
  be the output of 
  \alg{LocalLap$_{k\star}$}. 
  Then, 
  for all 
  $k \in \nats,\epsilon \in \nnreals,\td_{max} \in \nnints$, 
  and $G \in \calG$
  such that the maximum degree $d_{max}$ of $G$ 
  is at most 
  $\td_{max}$, 
  $\mathbb{E}[l_2^2(\hf_{k\star}(G, \epsilon, \td_{max}), f_{k\star}(G))] = 
  O\left( \frac{n \td_{max}^{2k-2}}{\epsilon^2} \right)$. 
\end{theorem}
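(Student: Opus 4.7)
The plan is to observe that under the hypothesis $d_{max} \leq \td_{max}$, Algorithm~\ref{alg:k-stars} reduces to a clean decomposition: it computes the exact number of $k$-stars centered at each user and then adds independent Laplace noise to each per-user count. First I would argue that \texttt{GraphProjection} is a no-op in this regime: since every $d_i \leq d_{max} \leq \td_{max}$, no neighbor is ever dropped, so the post-projection neighbor list $\bma_i$ equals the original. Consequently the degree $d_i$ computed in line 4 is the true degree of $v_i$ in $G$, and $r_i = \binom{d_i}{k}$ is exactly the number of $k$-stars whose center is $v_i$.

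Next I would use the fact that every $k$-star has a unique center (the node of degree $k$ within the star), so the total count decomposes as
\begin{equation*}
f_{k\star}(G) \;=\; \sum_{i=1}^{n} \binom{d_i}{k} \;=\; \sum_{i=1}^{n} r_i.
\end{equation*}
Thus the algorithm's output $\hf_{k\star}(G,\epsilon,\td_{max}) = \sum_{i=1}^n \hr_i$ satisfies $\hf_{k\star}(G,\epsilon,\td_{max}) - f_{k\star}(G) = \sum_{i=1}^n Y_i$, where $Y_1,\dots,Y_n$ are i.i.d.\ $\Lap(\Delta/\epsilon)$ random variables with $\Delta = \binom{\td_{max}}{k-1}$.

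Since each $Y_i$ has mean zero and variance $2(\Delta/\epsilon)^2$, the estimator is unbiased and its expected $l_2$ loss coincides with its variance:
\begin{equation*}
\E\!\left[l_2^2(\hf_{k\star}(G,\epsilon,\td_{max}), f_{k\star}(G))\right] \;=\; \V\!\left[\sum_{i=1}^n Y_i\right] \;=\; \frac{2n\Delta^2}{\epsilon^2}.
\end{equation*}
Finally, using $\binom{\td_{max}}{k-1} \leq \td_{max}^{k-1}/(k-1)! = O(\td_{max}^{k-1})$ for fixed $k$, we get $\Delta^2 = O(\td_{max}^{2k-2})$, which yields the claimed bound $O(n\td_{max}^{2k-2}/\epsilon^2)$.

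I do not anticipate any real obstacle: the randomness in \texttt{GraphProjection} is irrelevant under the stated hypothesis, and the independence of the $n$ Laplace draws across users is built into the algorithm (each user samples her own noise). The only subtle bookkeeping is confirming that each $k$-star is counted by exactly one user, which follows from the unique-center property of the star graph.
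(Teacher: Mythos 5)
Your proposal is correct and follows essentially the same route as the paper: under $d_{max}\le\td_{max}$ the projection is a no-op, the estimator is unbiased, the bias--variance decomposition reduces the $l_2$ loss to the variance of $n$ independent Laplace draws, and $\Delta=\binom{\td_{max}}{k-1}=O(\td_{max}^{k-1})$ gives the bound. Your explicit remark that $f_{k\star}(G)=\sum_i\binom{d_i}{k}$ via the unique-center property (and your factor of $2$ in the Laplace variance) only makes explicit what the paper leaves implicit.
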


The factor of $n$ in the 
expected $l_2$ loss 
of 
\alg{LocalLap$_{k\star}$} 
comes from the fact that we are adding 
the Laplacian noise 
$n$ times. 
In the centralized model, this factor of $n$ is not there, because the central data collector sees all $k$-stars; i.e., the data collector knows $f_{k\star}(G)$. 
The 
sensitivity of $f_{k\star}$ is 
at most $2\binom{\td_{max}}{k-1}$ (after graph projection) under edge centralized DP. 
Therefore, we can consider an algorithm that simply adds the Laplacian noise $\Lap(2\binom{\td_{max}}{k-1}/\epsilon)$ to $f_{k\star}(G)$, and outputs $f_{k\star}(G) + \Lap(2\binom{\td_{max}}{k-1}/\epsilon)$. 
We denote this algorithm by \alg{CentralLap$_{k\star}$}. 
Since the bias of the Laplacian noise is $0$, 
\alg{CentralLap$_{k\star}$} attains the expected $l_2$ loss ($=$ variance) of $O\left(\frac{\td_{max}^{2k-2}}{\epsilon^2}\right)$. 

It seems impossible to avoid this factor of $n$ in the 
one-round 
LDP model, as the data collector will be dealing with $n$ independent answers to
queries. Indeed, this is the case---we prove that the expected $l_2$ error of \alg{LocalLap$_{k\star}$} 
is order optimal among all 
one-round 
LDP algorithms, and 
the 
one-round 
LDP model cannot
improve 
the factor of $n$.

\begin{corollary}\label{cor:kstars-lb}
  Let 
  $\hf_{k\star}(G,\td_{max},\epsilon)$
  be any 
  one-round 
  LDP algorithm that 
  computes $f_{k\star}(G)$ satisfying $\epsilon$-edge LDP. Then, for all
  $k,n,\td_{max} \in \nats$ and $\epsilon \in \nnreals$ such that 
  $n$ is even, 
  there exists a set of graphs 
  $\calA \subseteq \mathcal{G}$ 
  on $n$ 
  nodes 
  such that 
  the maximum degree of each 
  $G \in \calA$ 
  is
  at most $\td_{max}$,
  and 
  $\frac{1}{|\calA|}\sum_{G \in \calA} \E[l_2^2(\hf_{k\star}(G,\td_{max},\epsilon), f_{k\star}(G))] 
  \geq 
  \Omega\left(\frac{e^{2\epsilon}}{(e^{2\epsilon}+1)^2}\td_{max}^{2k-2}n \right)$.
\end{corollary}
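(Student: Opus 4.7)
The plan is to derive this corollary as a direct application of the general lower-bound theorem stated in Section~\ref{sub:lower_bounds}, which lower-bounds the expected $l_2$ loss of any one-round $\epsilon$-edge LDP estimator of a graph statistic $f$ in terms of a hypercube-indexed family of graphs, the per-coordinate sensitivity of $f$, and the standard LDP-characteristic factor $\min\{1,e^{2\epsilon}/(e^{2\epsilon}-1)^2\}$; the exponent $2\epsilon$ reflects that each single-edge flip touches two symmetric entries of the adjacency matrix, and hence perturbs the randomized responses of exactly two users.

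To instantiate the theorem I would build the hard family $\calG$ as follows. Fix a $(\td_{max}-1)$-regular scaffold graph $H$ on the $n$ vertices (using standard regular-graph constructions, with $O(1)$ adjustments if $n(\td_{max}-1)$ is odd) together with a perfect matching $\{(u_j,v_j)\}_{j=1}^{n/2}$ whose edges are disjoint from $H$. Index the graphs in $\calG$ by bit vectors $b\in\{0,1\}^{n/2}$, with $b_j$ controlling whether the matching edge $(u_j,v_j)$ is present. Every graph in $\calG$ then has maximum degree at most $\td_{max}$; since each vertex lies in exactly one matching edge, flipping $b_j$ alters only $u_j$'s and $v_j$'s neighbor lists (by one bit each), so distinct coordinates of $b$ act on disjoint pairs of users and deliver the independence condition required by the general theorem. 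A direct count shows that toggling $b_j$ changes $f_{k\star}$ by $2\binom{\td_{max}-1}{k-1}=\Theta(\td_{max}^{k-1})$ whenever $\td_{max}\ge k$, contributed equally at $u_j$ and at $v_j$.

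Plugging these parameters --- hypercube dimension $n/2$, per-coordinate sensitivity $\Theta(\td_{max}^{k-1})$, two affected users per flip --- into the general theorem and summing coordinate-wise via an Assouad-style argument yields the claimed average expected $l_2$ loss $\Omega\bigl(n\cdot\td_{max}^{2k-2}\cdot\min\{1,e^{2\epsilon}/(e^{2\epsilon}-1)^2\}\bigr)$. The main obstacle I expect is calibrating the construction so as to fit $\Theta(n)$ independent toggleable bits rather than $\Theta(n/\td_{max})$: the naive choice of $\Theta(n/\td_{max})$ disjoint stars produces the right per-bit sensitivity but only $\Theta(n/\td_{max})$ bits, costing a factor $\td_{max}$ in the lower bound. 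The near-regular scaffold plus matching construction above bypasses this by reusing the $\td_{max}-1$ base edges at every vertex while reserving only one incident edge per user for the toggle, so that $\Theta(n)$ disjoint toggles coexist with the degree cap $\td_{max}$ and the Assouad decomposition goes through cleanly.
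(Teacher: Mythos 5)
Your proposal is correct and follows essentially the same route as the paper: the paper likewise instantiates the general hypercube lower bound (Theorem~\ref{thm:lower-bound}) with a near-regular scaffold plus a perfect matching of toggleable edges, pairs the two endpoints of each matching edge to account for the $2\epsilon$ composition, and reads off $D=\Theta(\td_{max}^{k-1})$ over $n/2$ coordinates. The only cosmetic difference is that the paper obtains its scaffold by deleting a perfect matching from a $(d_{max}-1)$-regular graph rather than adding a disjoint matching on top of one, which yields the same parameters.
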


This is a corollary of a more general result of Section~\ref{sub:lower_bounds}. Thus,
any algorithm computing $k$-stars cannot avoid the factor of $n$ in its $l_2^2$
loss. $k$-stars 
is an example where the non-interactive graph LDP model is strictly weaker than
the centralized DP model.

Nevertheless, we note that \alg{LocalLap$_{k\star}$} can accurately calculate $f_{k\star}(G)$ for a large number of users $n$. 
Specifically, the relative error decreases with increase in $n$ 
because \alg{LocalLap$_{k\star}$} has a factor of $n$ (not $n^2$) in the expected $l_2$ error, i.e., 
$\mathbb{E}[(\hf_{k\star}(G, \epsilon, \td_{max}) - f_{k\star}(G))^2] = O(n)$ and $f_{k\star}(G)^2 \geq \Omega(n^2)$ (when we ignore $\td_{max}$ and $\epsilon$). 
In our experiments, we show that the relative error of \alg{LocalLap$_{k\star}$} is 
small when $n$ is large.

\smallskip
\noindent{\textbf{Private calculation of $d_{max}$.}}~~By allowing an additional round between each user and the data collector, we can privately estimate $d_{max}$ and use the private estimate of $d_{max}$ as $\td_{max}$. 
Specifically, 
we divide the privacy budget $\epsilon$ into 
$\epsilon_0 \in \nnreals$ and $\epsilon_1 \in \nnreals$; i.e., $\epsilon = \epsilon_0 + \epsilon_1$. 
We first estimate $d_{max}$ with $\epsilon_0$-edge LDP and then run \alg{LocalLap$_{k\star}$} with the remaining privacy budget $\epsilon_1$. 
Note that \alg{LocalLap$_{k\star}$} with the private calculation of $d_{max}$ results in a two-rounds LDP algorithm.

We consider the following simple algorithm. 
At the first round, 
each user $v_i$ adds the Laplacian noise $\Lap(\frac{1}{\epsilon_0})$ to her degree $d_i$. 
Let $\hd_i \in \reals$ be the noisy degree of $v_i$; i.e., $\hd_i = d_i + \Lap(\frac{1}{\epsilon_0})$. 
Then user $v_i$ sends $\hd_i$ to the data collector. 
Let $\hd_{max} \in \reals$ be the maximum value of the noisy degree; i.e., $\hd_{max} = \max\{\hd_1, \ldots, \hd_n\}$. 
We call $\hd_{max}$ the \textit{noisy max degree}. 
The data collector calculates the noisy max degree $\hd_{max}$ 
as 
an estimate of $d_{max}$, 
and sends $\hd_{max}$ back to all users. 
At the second round, we run \alg{LocalLap$_{k\star}$} 
with input $G$, 
$\epsilon$, and $\lfloor \hd_{max} \rfloor$.

At the first round, the calculation of $\hd_{max}$  provides $\epsilon_0$-edge LDP because each user's degree has the sensitivity $1$ under edge LDP. 
At the second round, Theorem~\ref{thm:k-stars_LDP} guarantees that 
\alg{LocalLap$_{k\star}$} provides $\epsilon_1$-edge LDP. 
Then by the composition theorem~\cite{DP}, this two-rounds algorithm provides $\epsilon$-edge LDP in total ($\epsilon =\epsilon_0 + \epsilon_1$). 

In our experiments, we show that this algorithm provides the utility close to \alg{LocalLap$_{k\star}$} with the true max degree $d_{max}$. 

\subsection{One-Round Algorithms for Triangles.}
\label{sub:non-interactive_triangles}
\noindent{\textbf{Algorithm.}}~~Now, we focus our attention on the more challenging $f_\triangle$ query. This
query is more challenging in the graph LDP model because no user is aware of any
triangle; i.e., user $v_i$ is not aware of any triangle formed by $(v_i, v_j, v_k)$, because $v_i$ cannot see any edge $(v_j, v_k) \in E$ in graph $G$. 

One way to count $f_\triangle(G)$ with edge LDP is 
to apply the RR (Randomized Response) 
to a neighbor list. 
For example, user $v_i$ applies the RR to 
$a_{i,1}, \ldots, a_{i,i-1}$ (which corresponds to users $v_1, \ldots, v_{i-1}$ with smaller user IDs) in her neighbor list $\bma_i$; i.e., 
we apply the RR to the lower triangular part of adjacency matrix $\bmA$, as described in Section~\ref{sub:LDP}. 
Then the data collector constructs a noisy graph $G'=(V,E') \in \calG$ from the lower triangular part of the noisy adjacency matrix, and 
estimates the number of triangles 
from $G'$. 
However, 
simply counting
the triangles in 
$G'$ 
can introduce a significant bias 
because $G'$ is denser than $G$ especially when $\epsilon$ is small. 

\begin{figure}
\centering
\includegraphics[width=0.9\linewidth]{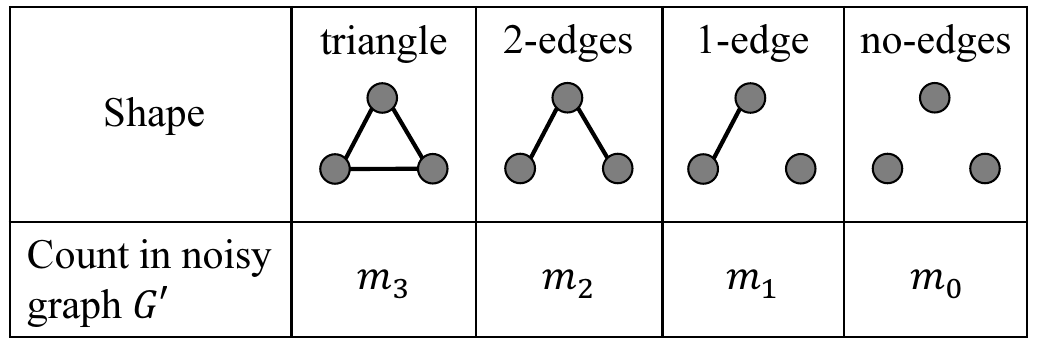}
\vspace{-4mm}
\caption{Four types of subgraphs with three nodes.}
\label{fig:triplet_shape}
\end{figure}

Through 
clever post-processing 
known as 
empirical estimation 
\cite{Kairouz_ICML16,Murakami_USENIX19,Wang_USENIX17},
we are able to obtain an unbiased estimate of $f_\triangle(G)$ 
from $G'$. 
Specifically, a subgraph with three nodes can be divided into four types depending on the number of edges. 
Three nodes with three edges form a triangle. 
We refer to three nodes with two edges, one edge, and no edges as \textit{2-edges},  \textit{1-edge}, and  \textit{no-edges}, respectively. 
Figure~\ref{fig:triplet_shape} shows their shapes. 
$f_\triangle(G)$ can be expressed using $m_3$, $m_2$, $m_1$, and $m_0$ as follows:

\begin{proposition}\label{prop:triangle_emp}
  Let $G'=(V,E')$ be a noisy graph generated by applying the RR to the lower triangular part of $\bmA$.
  Let $m_3, m_2, m_1, m_0 \in \nnints$ be respectively the number of triangles, 2-edges, 1-edge, and no-edges in $G'$. 
  Then 
  \begin{align}
      \textstyle{\mathbb{E}\left[ \frac{e^{3\epsilon}}{(e^\epsilon-1)^3} m_3 \hspace{-0.5mm}-\hspace{-0.5mm} \frac{e^{2\epsilon}}{(e^\epsilon-1)^3} m_2 \hspace{-0.5mm}+\hspace{-0.5mm} \frac{e^\epsilon}{(e^\epsilon-1)^3} m_1 \hspace{-0.5mm}-\hspace{-0.5mm} \frac{1}{(e^\epsilon-1)^3} m_0 \right] \hspace{-0.5mm} = \hspace{-0.5mm} f_\triangle(G).}
      \label{eq:triangle_emp}
  \end{align}
\end{proposition}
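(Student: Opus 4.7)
The plan is to construct an unbiased estimator on each triple of nodes and then sum. First I would observe that because the RR applied to the lower-triangular part of $\bmA$ flips each bit $a_{\ell m}$ ($\ell > m$) independently with probability $p = \frac{1}{\mu+1}$ and keeps it with probability $q = \frac{\mu}{\mu+1} = 1-p$, the noisy entries $X_{\ell m}$ are mutually independent Bernoullis with $\mathbb{E}[X_{\ell m}] = p + (q-p) a_{\ell m}$, where $q - p = \frac{\mu-1}{\mu+1}$. Solving for $a_{\ell m}$ yields the elementary unbiased estimator
\begin{align*}
Y_{\ell m} = \frac{X_{\ell m} - p}{q - p}, \qquad \mathbb{E}[Y_{\ell m}] = a_{\ell m}.
\end{align*}
By the independence of the noise across the three distinct edges of any triple $\{v_i,v_j,v_k\}$, the product $Y_{ij} Y_{ik} Y_{jk}$ is then an unbiased estimator of $a_{ij}a_{ik}a_{jk}$, so $\sum_{\{i,j,k\}} Y_{ij} Y_{ik} Y_{jk}$ is an unbiased estimator of $f_\triangle(G)$.

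The next step is to rewrite this estimator in terms of $m_3, m_2, m_1, m_0$. For any single edge, $X_{\ell m} - p$ equals $\frac{\mu}{\mu+1}$ when $X_{\ell m} = 1$ and $-\frac{1}{\mu+1}$ when $X_{\ell m} = 0$. Hence on a triple whose shape in $G'$ has $s \in \{0,1,2,3\}$ edges, the numerator $(X_{ij}-p)(X_{ik}-p)(X_{jk}-p)$ evaluates to $\frac{\mu^s (-1)^{3-s}}{(\mu+1)^3}$. Dividing by $(q-p)^3 = \frac{(\mu-1)^3}{(\mu+1)^3}$ gives a per-triple contribution of $\frac{\mu^s (-1)^{3-s}}{(\mu-1)^3}$, and regrouping triples by their $G'$-shape produces exactly the four-term expression claimed in (\ref{eq:triangle_emp}).

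The only delicate point is that the unbiased-estimator argument needs the three edges of each triple to be perturbed by mutually independent coins. Restricting the RR to the lower-triangular part of $\bmA$ ensures that each edge $(v_\ell, v_m)$ is perturbed by exactly one bit, avoiding the double-coin issue that would otherwise arise if both $a_{\ell m}$ and $a_{m\ell}$ were flipped independently. With this independence in hand, the remainder of the proof is a mechanical casework on $s$ and a sign/power tally, which I do not expect to present any real difficulty.
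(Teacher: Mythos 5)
Your proof is correct, but it takes a genuinely different route from the paper's. The paper sets up the $4\times4$ transition matrix $\bmQ$ from the shape-counts $(c_3,c_2,c_1,c_0)$ of triples in $G$ to the expected shape-counts $(\E[m_3],\E[m_2],\E[m_1],\E[m_0])$ in $G'$, inverts it by Cramer's rule, and reads the coefficients in \eqref{eq:triangle_emp} off the first column of $\bmQ^{-1}$, appealing to the general unbiasedness of empirical estimation. You instead debias each noisy bit individually via $Y_{\ell m}=\frac{X_{\ell m}-p}{q-p}$, multiply over the three independent edges of each triple to get an unbiased estimator of the triangle indicator $a_{ij}a_{ik}a_{jk}$, and then verify the exact algebraic identity $\sum_{\{i,j,k\}}Y_{ij}Y_{ik}Y_{jk}=\frac{1}{(\mu-1)^3}\left(\mu^3 m_3-\mu^2 m_2+\mu m_1-m_0\right)$ by grouping triples according to their shape $s$ in $G'$; your coefficient computation $\frac{\mu^s(-1)^{3-s}}{(\mu-1)^3}$ checks out for all four values of $s$. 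Your argument is more elementary and self-contained: it avoids the matrix inversion, makes explicit exactly where independence of the three coins is used (and correctly identifies the lower-triangular restriction as what guarantees one coin per edge), and shows the identity holds pointwise rather than only in expectation. What the paper's approach buys in exchange is that inverting $\bmQ$ yields unbiased estimates of all four counts $c_3,c_2,c_1,c_0$ at once and slots directly into the standard empirical-estimation framework for local randomizers, which generalizes mechanically to other subgraph shapes and mechanisms.
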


Therefore, the data collector can count $m_3$, $m_2$, $m_1$, and $m_0$ from $G'$, and calculate an unbiased estimate of $f_\triangle(G)$ by (\ref{eq:triangle_emp}). 
In Appendix~\ref{sec:RR_emp}, we show that the $l_2$ loss is significantly reduced by this empirical estimation.

\setlength{\algomargin}{4mm}
\begin{algorithm}
  \SetAlgoLined
  \KwData{Graph $G$ 
  represented as 
  neighbor lists $\bma_1,
    \ldots, \bma_n
  \in \{0,1\}^n$, privacy budget $\epsilon \in \nnreals$.}
  \KwResult{Private estimate of $f_\triangle(G)$.}
  \For{$i=1$ \KwTo $n$}{
    $R_i \leftarrow (RR_{\epsilon}(a_{i,1}), \ldots, RR_{\epsilon}(a_{i,i-1}))$\;
    $release(R_i)$\;
  }
  $G'=(V,E') \leftarrow \texttt{UndirectedGraph}(R_1, \ldots, R_n)$\;
  \tcc{Counts $m_3,m_2,m_1,m_0$ in $G'$.}
  $(m_3, m_2, m_1, m_0) \leftarrow \texttt{Count}(G')$\;
  \KwRet{$\frac{1}{(e^\epsilon-1)^3}(e^{3\epsilon} m_3 - e^{2\epsilon} m_2 + e^\epsilon m_1 - m_0)$}

  \caption{\alg{LocalRR$_\triangle$}\label{alg:subgraph-rr}}
\end{algorithm}

Algorithm~\ref{alg:subgraph-rr} shows this algorithm. 
In line 2, user $v_i$ applies the RR with privacy budget $\epsilon$ (denoted by $RR_\epsilon$) to $a_{i,1}, \ldots, a_{i,i-1}$ 
in her neighbor list $\bma_i$, and outputs $R_i = (RR_\epsilon(a_{i,1}), \ldots, RR_\epsilon(a_{i,i-1}))$. 
In other words, we apply the RR to the lower triangular part of $\bmA$ and there is no overlap between edges sent by users. 
In line 5, the data collector uses a function (denoted by \texttt{UndirectedGraph}) 
that converts the bits of $(R_1, \ldots, R_n)$ into an undirected graph $G'
= (V, E')$ by adding edge $(v_i,v_j)$ with $i>j$ to $E'$ if and only if the $j$-th bit of
$R_i$ is $1$. 
Note that $G'$ is biased, as explained above. 
In line 6, the data collector uses a function (denoted by 
\texttt{Count}) that calculates $m_3$, $m_2$, $m_1$, and $m_0$ from $G'$. 
Finally, the data collector outputs the expression inside the expectation on
the left-hand side of (\ref{eq:triangle_emp}), which is an unbiased estimator for 
$f_\triangle(G)$ by Proposition~\ref{prop:triangle_emp}.
We denote this algorithm by \alg{LocalRR$_\triangle$}.

\smallskip
\noindent{\textbf{Theoretical properties.}}~~\alg{LocalRR$_\triangle$} 
provides the following guarantee.

\begin{theorem}\label{thm:subgraph-rr_LDP}
  \alg{LocalRR$_\triangle$} provides $\epsilon$-edge LDP and $\epsilon$-relationship DP.
\end{theorem}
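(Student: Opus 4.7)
The plan is to verify both privacy guarantees by analyzing how the lower triangular structure of the output makes each edge appear in exactly one user's randomized output.

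First I would prove $\epsilon$-edge LDP. Fix a user $v_i$ and consider two neighbor lists $\bma_i, \bma'_i \in \{0,1\}^n$ that differ in exactly one bit, say at position $j$. The user's algorithm $\calR_i$ outputs $R_i = (RR_\epsilon(a_{i,1}), \ldots, RR_\epsilon(a_{i,i-1}))$, which depends only on coordinates $1, \ldots, i-1$ of $\bma_i$. I would split on two cases. If $j \geq i$, then $\calR_i(\bma_i)$ and $\calR_i(\bma'_i)$ are identically distributed, so inequality \eqref{eq:edge_LDP} holds trivially. If $j < i$, then the output distributions differ only in the $j$-th coordinate: for that coordinate the RR produces the true bit with probability $1-p = e^\epsilon/(e^\epsilon+1)$ and the flipped bit with probability $p = 1/(e^\epsilon+1)$, so for any output value the ratio of the two probabilities is at most $e^\epsilon/(e^\epsilon+1) \cdot (e^\epsilon+1)/1 = e^\epsilon$ (or the reciprocal). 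Since all other coordinates contribute identical factors to both probabilities, the ratio of $\Pr[\calR_i(\bma_i) \in S]$ to $\Pr[\calR_i(\bma'_i) \in S]$ is bounded by $e^\epsilon$.

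Next I would prove $\epsilon$-entire edge LDP, exploiting the asymmetry that avoids the factor-of-$2$ blow-up from Proposition \ref{prop:edge_LDP_entire_edge_LDP}. Consider two graphs $G, G'$ that differ in exactly one edge $(v_i, v_j)$; without loss of generality assume $i > j$. This single edge modification toggles both $a_{i,j}$ and $a_{j,i}$ in the adjacency matrix. However, $\calR_i$ uses only coordinates with index $< i$, so it depends on $a_{i,j}$ (since $j < i$); and $\calR_j$ uses only coordinates with index $< j$, so it does not depend on $a_{j,i}$ (since $i > j$). For every other user $v_k$ with $k \notin \{i,j\}$, the neighbor lists $\bma_k$ and $\bma'_k$ are identical. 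Hence the joint distributions $(\calR_1(\bma_1), \ldots, \calR_n(\bma_n))$ and $(\calR_1(\bma'_1), \ldots, \calR_n(\bma'_n))$ differ only through $\calR_i$, and within $\calR_i$ only in the single coordinate corresponding to $a_{i,j}$. Applying the same one-bit RR analysis as above, the ratio of joint probabilities on any measurable set is at most $e^\epsilon$, establishing \eqref{eq:entire_edge_LDP}.

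The argument is essentially bookkeeping about which coordinates each $\calR_i$ reads, so there is no serious obstacle; the one subtle point, which I would emphasize, is that releasing only the lower triangular part is precisely what makes each undirected edge appear in exactly one user's view, so a single edge flip propagates to just one noisy bit in the entire system's output. This is exactly the property noted in the paragraph after Proposition \ref{prop:edge_LDP_entire_edge_LDP}, and it is what allows $\epsilon$-edge LDP to upgrade to $\epsilon$-entire edge LDP without losing a factor of $2$.
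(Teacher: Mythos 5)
Your proof is correct and follows essentially the same route as the paper's: $\epsilon$-edge LDP from the per-bit randomized-response guarantee on the released coordinates $a_{i,1},\ldots,a_{i,i-1}$, and $\epsilon$-entire edge LDP from the observation that an edge $(v_i,v_j)$ with $i>j$ affects only the single entry $a_{i,j}$ in the released lower-triangular part, so only one noisy bit in the whole system changes. The paper's proof additionally invokes immunity to post-processing to cover the data collector's computation of the final estimate from $(R_1,\ldots,R_n)$, but since the LDP definitions concern the users' randomizers, your more explicit bookkeeping already establishes the claim.
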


\alg{LocalRR$_\triangle$} does not have the doubling issue (i.e., it provides not $2\epsilon$ but $\epsilon$-relationship DP), because we apply the RR to the lower triangular part of $\bmA$, as explained in Section~\ref{sub:LDP}.

Unlike the RR and empirical estimation for tabular data \cite{Kairouz_ICML16}, the expected $l_2$ loss of \alg{LocalRR$_\triangle$} is complicated. 
To simplify the utility analysis, we assume that $G$ is generated from the Erd\"os-R\'enyi graph distribution $\bmG(n,\alpha)$ with edge existence probability $\alpha$; i.e., each edge in $G$ with $n$ nodes is independently generated with probability $\alpha \in [0,1]$.

\begin{theorem}\label{thm:subgraph-rr}
  Let $\bmG(n,\alpha)$ be the Erd\"os-R\'enyi graph distribution with edge existence probability $\alpha \in [0,1]$. 
  Let $p = \frac{1}{e^\epsilon+1}$ and 
  $\beta = \alpha(1-p) + (1-\alpha)p$. 
  Let 
  $\hf_{\triangle}(G, \epsilon)$ 
  be the output of 
  \alg{LocalRR$_\triangle$}.
  If 
  $G \sim \bmG(n,\alpha)$, 
  then for all 
  $\epsilon \in \nnreals$, 
  $\mathbb{E}[l_2^2(\hf_{\triangle}(G, \epsilon),
  f_\triangle(G))] = 
  O\left(\frac{e^{6\epsilon}}{(e^\epsilon-1)^6}\beta n^4\right)$.
\end{theorem}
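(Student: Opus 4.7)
The plan is to exploit the conditional unbiasedness given by Proposition~\ref{prop:triangle_emp}: since $\mathbb{E}[\hf_\triangle(G,\epsilon)\mid G] = f_\triangle(G)$, the expected $l_2$ loss equals the expected conditional variance $\mathbb{E}_G[\V(\hf_\triangle\mid G)]$. The next move is to put $\hf_\triangle$ in a product form that makes the variance tractable. For each unordered pair $\{v_i,v_j\}$ let $b_{ij}$ be the corresponding bit in $G'$ and set $Z_{ij} := (\mu+1)b_{ij} - 1 \in \{-1,\mu\}$; for each triple $T=\{v_i,v_j,v_k\}$ set $Y_T := Z_{ij}Z_{ik}Z_{jk}$. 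A direct expansion verifies $\mu^3 m_3 - \mu^2 m_2 + \mu m_1 - m_0 = \sum_T Y_T$, so $\hf_\triangle = \frac{1}{(\mu-1)^3}\sum_T Y_T$. Elementary moment calculations yield $\mathbb{E}[Z_{ij}\mid G] = (\mu-1)a_{ij}$ and $\V(Z_{ij}\mid G) = \mu$, hence $\mathbb{E}[Z_{ij}^2\mid G] = \mu + (\mu-1)^2 a_{ij}$.

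The second step is to decompose $\V(\sum_T Y_T\mid G)$ into diagonal and covariance pieces and identify which pairs contribute. Because the RR flips each edge independently, the $Z_{ij}$ are mutually independent given $G$, so $Y_T$ and $Y_{T'}$ are conditionally independent whenever $T$ and $T'$ share no edge; only pairs with $|T\cap T'|=2$ yield nonzero covariance. The product structure then gives, for $T=\{v_i,v_j,v_k\}$ and $T'=\{v_i,v_j,v_l\}$ sharing the edge $e^\ast=\{v_i,v_j\}$,
\begin{align*}
\cov(Y_T,Y_{T'}\mid G) &= \V(Z_{e^\ast}\mid G)\cdot\mathbb{E}[Z_{ik}\mid G]\mathbb{E}[Z_{jk}\mid G]\mathbb{E}[Z_{il}\mid G]\mathbb{E}[Z_{jl}\mid G]\\
&= \mu(\mu-1)^4\,a_{ik}a_{jk}a_{il}a_{jl},
\end{align*}
and for diagonal terms $\V(Y_T\mid G)\le\mathbb{E}[Y_T^2\mid G] = (\mu + (\mu-1)^2 a_{ij})(\mu + (\mu-1)^2 a_{ik})(\mu + (\mu-1)^2 a_{jk})$.

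The last step is to take expectations under $\bmG(n,\alpha)$ and combine. Since $a_e\sim\mathrm{Bernoulli}(\alpha)$ and $b_e$ is marginally $\mathrm{Bernoulli}(\beta)$, one has $\mathbb{E}_G[\mathbb{E}[Z_e^2\mid G]] = 1 + (\mu^2-1)\beta$; by edge independence in ER, $\mathbb{E}_G\bigl[\sum_T \V(Y_T\mid G)\bigr]\le\binom{n}{3}(1+(\mu^2-1)\beta)^3 = O(n^3\mu^6)$. Similarly $\mathbb{E}_G[a_{ik}a_{jk}a_{il}a_{jl}]=\alpha^4$, and summing over the $\Theta(n^4)$ ordered triple-pairs sharing an edge yields an expected covariance total of $O(n^4\mu(\mu-1)^4\alpha^4)$. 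Dividing by $(\mu-1)^6$, the two contributions are $O(n^3\mu^6/(\mu-1)^6)$ and $O(n^4\mu\alpha^4/(\mu-1)^2)$, both of which are absorbed into $O(\mu^6\beta n^4/(\mu-1)^6)$ via the inequalities $\beta\ge 1/(\mu+1)$ and $\alpha\le\beta(\mu+1)/\mu$ (the latter coming from $\beta\ge\alpha(1-p)=\alpha\mu/(\mu+1)$). The main obstacle is the covariance calculation: once one recognizes the product form $Y_T = Z_{ij}Z_{ik}Z_{jk}$, the $\cov$ collapses to a single $\V(Z_{e^\ast})$ factor times the product of means of the four non-shared $Z_e$'s, and it is this $(\mu-1)^4$ (rather than $\mu^4$) that makes the cross-term summable against $(\mu-1)^6$.
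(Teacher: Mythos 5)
Your proof is correct and takes a genuinely different route from the paper's. The paper keeps the estimator as the linear combination $a_3m_3+a_2m_2+a_1m_1+a_0m_0$, disposes of the cross-covariances among the $m_i$ crudely via Cauchy--Schwarz ($\cov(m_i,m_j)\le\V[m_i]+\V[m_j]$), and then computes each $\V[m_i]$ by enumerating pairs of triples in the noisy graph, using the fact that $G'$ is itself distributed as $\bmG(n,\beta)$; the whole calculation is marginal over both sources of randomness. You instead condition on $G$, and your reparametrization $Z_e=(\mu+1)b_e-1$, $Y_T=\prod_{e\in T}Z_e$ with $\sum_T Y_T=\mu^3m_3-\mu^2m_2+\mu m_1-m_0$ is a clean identity not present in the paper: conditional independence of the $Z_e$ makes the covariance of two edge-sharing triples collapse to $\V(Z_{e^\ast})\prod_e\E[Z_e]=\mu(\mu-1)^4a_{ik}a_{jk}a_{il}a_{jl}$, exposing the factor $(\mu-1)^4$ that cancels against the $(\mu-1)^{-6}$ prefactor. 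This both avoids the paper's lossy Cauchy--Schwarz step and computes exactly the right quantity $\E_G[\V(\hf\mid G)]$; the paper's appeal to the bias--variance decomposition is itself slightly loose, since $f_\triangle(G)$ is random under $\bmG(n,\alpha)$ and the total variance is only an upper bound on the expected loss (which happens to suffice there).

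One small wrinkle: your one-line bound $O(n^3\mu^6)$ for the diagonal contribution is too crude to be absorbed into $O\left(\frac{\mu^6\beta n^4}{(\mu-1)^6}\right)$ uniformly in $\epsilon$ --- the ratio you need is $\frac{1}{\beta n}\le\frac{\mu+1}{n}$, which is not $O(1)$ when $\mu\gg n$. Your own intermediate expression repairs this: $\binom{n}{3}\bigl(1+(\mu^2-1)\beta\bigr)^3=O\bigl(n^3(1+\mu^6\beta^3)\bigr)$, and both $\frac{n^3}{(\mu-1)^6}$ and $\frac{n^3\mu^6\beta^3}{(\mu-1)^6}$ are dominated by $\frac{n^4\mu^6\beta}{(\mu-1)^6}$ using $\beta\ge\frac{1}{\mu+1}$ and $\beta\le1$. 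So the gap is cosmetic rather than substantive, but the tighter form of the diagonal bound should be carried through explicitly.
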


Note that we assume the Erd\"os-R\'enyi model only for the utility analysis of \alg{LocalRR$_\triangle$}, and do not assume this model for the other algorithms. 
The upper-bound of \alg{LocalRR$_\triangle$} in Theorem~\ref{thm:subgraph-rr} is less ideal than the upper-bounds of the other algorithms in that it does not consider all possible graphs $G \in \calG$. 
Nevertheless, we also show that the $l_2$ loss of \alg{LocalRR$_\triangle$} is roughly consistent with Theorem~\ref{thm:subgraph-rr} in our experiments using two real datasets (Section~\ref{sec:experiments}) and 
the Barab\'{a}si-Albert graphs \cite{NetworkScience}, which have power-law degree distribution (Appendix~\ref{sec:BAGraph}). 

The parameters $\alpha$ and $\beta$ are edge existence probabilities in the original graph $G$ and noisy graph $G'$, respectively. 
Although $\alpha$ is very small in a sparse graph, $\beta$ can be large for small $\epsilon$. 
For example, if $\alpha \approx 0$ and $\epsilon=1$, then $\beta \approx \frac{1}{e+1} = 0.27$. 

Theorem~\ref{thm:subgraph-rr} states that for large $n$, the $l_2$ loss of \alg{LocalRR$_\triangle$} 
($=O(n^4)$) 
is much larger than the $l_2$ loss of \alg{LocalRR$_k\star$} ($=O(n)$). 
This follows from the fact that user $v_i$ 
is not aware of any triangle formed by $(v_i, v_j, v_k)$, as explained above. 

In contrast, counting $f_\triangle(G)$ in the centralized model is much easier because the data collector sees all triangles in $G$; i.e., the data collector knows $f_\triangle(G)$. 
The 
sensitivity of $f_\triangle$ 
is 
at most $\td_{max}$ (after graph projection). 
Thus, 
we can consider a simple algorithm that 
outputs $f_{\triangle}(G) + \Lap(\td_{max}/\epsilon)$. 
We denote this algorithm by \alg{CentralLap$_{\triangle}$}. 
\alg{CentralLap$_{\triangle}$} attains the expected $l_2$ loss ($=$ variance) of $O\left(\frac{\td_{max}^2}{\epsilon^2}\right)$.

The large $l_2$ loss of \alg{LocalRR$_\triangle$} is caused by the fact that 
each edge is released independently with
some probability of being flipped. 
In other words, 
there are three independent random
variables that influence 
any triangle in $G'$. 
The next algorithm,
using interaction, 
reduces 
this influencing number 
from three to one 
by using the fact that 
a user 
knows 
the existence of two edges for any triangle that involves the user.

\subsection{Two-Rounds Algorithms for Triangles}
\label{sub:two_rounds}

\noindent{\textbf{Algorithm.}}~~Allowing for 
two-rounds interaction, 
we are able to compute $f_{\triangle}$ with
a significantly improved $l_2$ loss, albeit with a higher per-user
communication overhead.
As described in Section~\ref{sub:non-interactive_triangles}, it is impossible for user $v_i$ to see edge $(v_j, v_k) \in E$ in graph $G=(V,E)$ at the first round. 
However, if 
the data collector publishes a noisy graph $G'=(V,E')$ calculated by \alg{LocalRR$_\triangle$} at the first round, then 
user $v_i$ can see a noisy edge $(v_j, v_k) \in E'$ in the noisy graph $G'$ at the second round. 
Then user $v_i$ can count the number of \textit{noisy triangles} formed by
$(v_i, v_j, v_k)$ such that $(v_i,v_j) \in E$, $(v_i,v_k) \in E$, and $(v_j,v_k)
\in E'$, and send the noisy triangle counts with the Laplacian noise to the data
collector in an analogous way to \alg{LocalLap$_{k\star}$}.
Since user $v_i$ always knows that two edges $(v_i,v_j)$ and $(v_i,v_k)$ exist in $G$, 
there is only one noisy edge in any noisy triangle 
(whereas all edges are noisy in \alg{LocalRR$_\triangle$}).
This is an intuition behind our proposed two-rounds algorithm. 

As with the RR in Section~\ref{sub:non-interactive_triangles}, simply counting the noisy triangles can introduce a bias. 
Therefore, we calculate an empirical estimate of $f_\triangle(G)$ from the noisy triangle counts. 
Specifically, 
the following is the empirical estimate of $f_\triangle(G)$: 

\begin{proposition}\label{prop:triangle_emp_2rounds}
  Let $G'=(V,E')$ be a noisy graph generated by applying the RR with privacy budget $\epsilon_1 \in \nnreals$   to the lower triangular part of $\bmA$.
  Let $p_1 = \frac{1}{e^{\epsilon_1}+1}$. 
  Let $t_i \in \nnints$ be the number of triplets $(v_i, v_j, v_k)$ such that 
  $j < k < i$, 
  $(v_i,v_j) \in E$, $(v_i,v_k) \in E$, and $(v_j,v_k) \in E'$.
  Let $s_i \in \nnints$ be the number of triplets $(v_i, v_j, v_k)$ such that 
  $j < k < i$, 
  $(v_i,v_j) \in E$, and $(v_i,v_k) \in E$. 
  Let $w_i = t_i - p_1 s _i$. 
  Then 
  \begin{align}
      \textstyle{\mathbb{E}\left[ \frac{1}{1-2p_1} \sum_{i=1}^n w_i \right] = f_\triangle(G).}
      \label{eq:triangle_emp_2rounds}
  \end{align}
\end{proposition}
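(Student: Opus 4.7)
The plan is to compute $\mathbb{E}[t_i]$ by conditioning on whether, for each triplet counted, the non-user edge $(v_j, v_k)$ genuinely belongs to $E$ or not, and then to observe that the condition $j < k < i$ makes $\sum_i T_i$ count each triangle exactly once, where $T_i$ is the number of true triangles whose largest-indexed vertex is $v_i$.

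First I will fix $i$ and partition the pairs $(v_j, v_k)$ with $j < k < i$ and $(v_i, v_j), (v_i, v_k) \in E$ into two groups: those with $(v_j, v_k) \in E$ (call this count $T_i$) and those with $(v_j, v_k) \notin E$ (count $s_i - T_i$). For pairs in the first group, the bit in $E'$ flips with probability $p_1$, so it lies in $E'$ with probability $1-p_1$; for pairs in the second group, it lies in $E'$ with probability $p_1$. Summing and using linearity of expectation gives
\begin{align*}
\mathbb{E}[t_i] &= (1-p_1)\, T_i + p_1\,(s_i - T_i) \\
&= (1-2p_1)\, T_i + p_1\, s_i.
\end{align*}
Since $s_i$ is a deterministic function of $G$ (it does not depend on the noisy edges in $E'$), $\mathbb{E}[w_i] = \mathbb{E}[t_i] - p_1 s_i = (1-2p_1)\, T_i$.

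Next I will argue that $\sum_{i=1}^n T_i = f_\triangle(G)$. Every triangle on vertices $\{v_a, v_b, v_c\}$ has a unique vertex with the maximum index, say $v_i$, and the remaining two indices satisfy $j < k < i$; the triangle is therefore counted exactly once, in $T_i$, and not in any $T_{i'}$ for $i' \neq i$. Summing the identity for $\mathbb{E}[w_i]$ over $i$ and dividing by the nonzero factor $1-2p_1$ (note $p_1 = 1/(e^{\epsilon_1}+1) < 1/2$ for $\epsilon_1 > 0$) then yields the claim.

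The only subtlety, and the step I would check most carefully, is independence: the randomness used by the RR on the lower triangular part of $\bmA$ is independent across entries, so the indicator that $(v_j, v_k) \in E'$ is independent of the event $(v_i, v_j), (v_i, v_k) \in E$ (which is deterministic from $G$) and can be handled one triplet at a time even though different triplets for the same $i$ may share noisy edges. Sharing across triplets is irrelevant here because we only need the expectation of $t_i$, not its variance; thus linearity of expectation suffices and there is no real obstacle beyond being careful with the indexing convention $j < k < i$ that makes $\sum_i T_i$ equal to $f_\triangle(G)$ rather than a multiple of it.
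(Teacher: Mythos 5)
Your proof is correct and follows essentially the same route as the paper's: the paper works with the global sums $t_* = \sum_i t_i$ and the decomposition $s_* = s_*^{\wedge} + s_*^{\triangle}$ (where $s_*^{\triangle} = \sum_i T_i = f_\triangle(G)$ in your notation), arriving at the same identity $\mathbb{E}[t_*] = (1-p_1)s_*^{\triangle} + p_1 s_*^{\wedge}$ that you derive per-user before summing. Your added remarks on linearity of expectation handling shared noisy edges and on the max-index convention counting each triangle once are sound and make explicit what the paper leaves implicit.
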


Note that in Proposition~\ref{prop:triangle_emp_2rounds}, 
we count only triplets $(v_i, v_j, v_k)$ with 
$j < k < i$ 
to use only the lower triangular part of $\bmA$. 
$t_i$ is the number of noisy triangles user $v_i$ can see at the second round. 
$s_i$ is the number of $2$-stars of which user $v_i$ is a center. 
Since $t_i$ and $s_i$ can reveal information about an edge in $G$, user $v_i$ adds the Laplacian noise to $w_i$ $(= t_i - p_1 s _i)$ in (\ref{eq:triangle_emp_2rounds}), and sends it to the data collector. 
Then the data collector calculates an unbiased estimate of $f_\triangle(G)$ by (\ref{eq:triangle_emp_2rounds}).

\setlength{\algomargin}{5mm}
\begin{algorithm}
  \SetAlgoLined
  \KwData{Graph $G$ 
  represented as 
  neighbor lists $\bma_1,
    \ldots, \bma_n
    \in \{0,1\}^n$, two privacy budgets
  $\epsilon_1,\epsilon_2 > 0$, $\td_{max} \in \nnints$.}
  \KwResult{Private estimate of $f_\triangle(G)$.}
  $p_1 \leftarrow \frac{1}{e^{\epsilon_1}+1}$\;
  \tcc{First round.}
  \For{$i=1$ \KwTo $n$}{
    $R_i \leftarrow (RR_{\epsilon_1}(a_{i,1}), \ldots, RR_{\epsilon_1}(a_{i,i-1}))$\;
    $release(R_i)$\;
  }
  $G'=(V,E') \leftarrow \texttt{UndirectedGraph}(R_1, \ldots, R_{i-1})$\;
  \tcc{Second round.}
  \For{$i=1$ \KwTo $n$}{
    $\bma_i \leftarrow \texttt{GraphProjection}(\bma_i, \td_{max})$\;
    $t_i \leftarrow |\{(v_i,v_j,v_k) : 
    j<k<i, a_{i,j} = a_{i,k} = 1, 
    (v_j,v_k) \in E'\}|$\;
    $s_i \leftarrow |\{(v_i,v_j,v_k) : 
    j<k<i, a_{i,j} = a_{i,k} = 1\}|$\;
    $w_i \leftarrow t_i - p_1 s_i$\;
    $\hw_i \leftarrow w_i + \Lap(\frac{\td_{max}}{\epsilon_2})$\;
    $release(\hw_i)$\;
  }
  \KwRet{$\frac{1}{1-2p_1}\sum_{i=1}^n \hw_i$}
  \caption{\alg{Local2Rounds$_\triangle$}}\label{alg:subgraph-interactive}
\end{algorithm}

Algorithm~\ref{alg:subgraph-interactive} contains the formal
description of this process. 
It takes as input a graph $G$, 
the privacy budgets $\epsilon_1, \epsilon_2 \in \nnreals$ at the first and second rounds, respectively, 
and 
a non-negative integer $\td_{max} \in \nnints$. 
At the first round, we 
apply the RR to the lower triangular part of $\bmA$ 
(i.e., there is no overlap between edges sent by users) 
and use the \texttt{UndirectedGraph} function to 
obtain a noisy graph $G'=(V,E')$ by the RR in the same way as Algorithm~\ref{alg:subgraph-rr}. 
Note that $G'$ is biased. 
We calculate an unbiased estimate of $f_\triangle(G)$ from $G'$ at the second round.  

At the second round, each user $v_i$ 
calculates $\hw_i = w_i + \Lap(\frac{\td_{max}}{\epsilon_2})$ 
by adding the Laplacian noise to $w_i$ 
in Proposition~\ref{prop:triangle_emp_2rounds} 
whose 
sensitivity is at most 
$\td_{max}$ 
(as we will prove in Theorem~\ref{thm:local2rounds_LDP}). 
Finally, we output $\frac{1}{1-2p_1}\sum_{i=1}^n \hw_i$, which is an unbiased estimate of $f_\triangle(G)$ by Proposition~\ref{prop:triangle_emp_2rounds}. 
We call this algorithm \alg{Local2Rounds$_\triangle$}.

\smallskip
\noindent{\textbf{Theoretical properties.}}~~\alg{Local2Rounds$_\triangle$} 
has the following 
guarantee.

\begin{theorem}\label{thm:local2rounds_LDP}
  \alg{Local2Rounds$_\triangle$}
  provides $(\epsilon_1 + \epsilon_2)$-edge LDP and $(\epsilon_1 + \epsilon_2)$-relationship DP.
\end{theorem}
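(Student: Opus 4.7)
The plan is to establish the two guarantees separately, using the composition theorem to combine the two rounds. The main quantity to pin down is the sensitivity of $w_i$ in round 2; once that is in hand, everything follows cleanly from the structure of Warner's RR and the Laplace mechanism.

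\textbf{Edge LDP (per-user analysis).} Fix a user $v_i$ and two neighbor lists $\bma_i, \bma'_i \in \{0,1\}^n$ differing in a single bit, say position $j$. I would first argue that the round-1 release $R_i$ is $\epsilon_1$-edge LDP on $\bma_i$: if $j \ge i$, the bit never enters the RR and the two distributions of $R_i$ are identical; if $j < i$, the coordinates of $R_i$ other than $j$ have identical distributions under $\bma_i$ and $\bma'_i$, while the $j$-th coordinate is the output of Warner's RR on a single bit with parameter $p_1 = 1/(e^{\epsilon_1}+1)$, giving a pointwise likelihood ratio bounded by $e^{\epsilon_1}$. Next, conditioning on round 1 (hence on $R_1, \ldots, R_n$ and therefore on $G'$), I would show that $\hw_i$ is $\epsilon_2$-edge LDP in $\bma_i$ by invoking the Laplace mechanism with sensitivity bound $\td_{max}$. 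Composing the two rounds then yields $(\epsilon_1 + \epsilon_2)$-edge LDP for the pair $(R_i, \hw_i)$, which is all that is released by $v_i$.

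\textbf{Sensitivity of $w_i$ (the main obstacle).} This is the substantive calculation. After \texttt{GraphProjection}, the projected neighbor list has at most $\td_{max}$ ones. Let $c \le \td_{max}$ denote the number of indices $k < i$ with $a_{i,k} = 1$ in the projected list. Since $s_i = \binom{c}{2}$, flipping one bit $a_{i,j}$ with $j<i$ changes $s_i$ by at most $c \le \td_{max}$. For $t_i$, the same flip can only add or remove triples $(v_i, v_j, v_k)$ with $k<i$ and $a_{i,k}=1$ and $(v_j,v_k)\in E'$, giving $|\Delta t_i| \le c$ with the same sign as $\Delta s_i$. Because $0 \le t_i \le s_i$ holds pointwise, $|\Delta w_i| = |\Delta t_i - p_1 \Delta s_i| \le \max(1-p_1, p_1)\cdot c \le \td_{max}$. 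I also need to verify that the projection step itself does not inflate this bound: the cleanest route is to treat \texttt{GraphProjection} as part of the randomized release and bound the sensitivity of the composed map $\bma_i \mapsto w_i$ by the above argument applied to the projected list, using that the projection always outputs a list with at most $\td_{max}$ ones regardless of the input. Flipping $a_{i,j}$ with $j \ge i$ leaves $t_i, s_i$ unchanged, so only the $j<i$ case matters.

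\textbf{Entire edge LDP.} Consider two graphs differing in the single edge $(v_i, v_j)$ with $i > j$; this flips exactly the two symmetric bits $a_{i,j}$ and $a_{j,i}$. The key observation is that user $v_j$ only ever touches coordinates $a_{j,1}, \ldots, a_{j,j-1}$: the RR in round 1 is restricted to the lower triangular part, and the counts $t_j, s_j$ are defined over triples with $k < \ell < j$, so $a_{j,i}$ never appears for $v_j$ since $i > j$. Thus the outputs of every user $v_k$ with $k \ne i$ are independent of whether this edge is present. Only $v_i$'s outputs $(R_i, \hw_i)$ depend on the edge, and they do so through a single bit change in $\bma_i$. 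The per-user edge LDP analysis above then shows the joint output is $(\epsilon_1+\epsilon_2)$-entire edge LDP, so there is no doubling, exactly as previewed in Section~\ref{sub:LDP} for releases using only the lower triangular part of $\bmA$.
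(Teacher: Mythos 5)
Your overall architecture — per-round composition, a sensitivity bound of $\td_{max}$ on $w_i$ fed to the Laplace mechanism, and the lower-triangular restriction to rule out doubling for entire edge LDP — is the same as the paper's, and your sensitivity computation in the case where no projection occurs is correct (it matches the paper's Case $d_i < \td_{max}$: $0 \le \Delta t_i \le \Delta s_i = \bd_i$, hence $|\Delta w_i| \le (1-p_1)\bd_i < \td_{max}$). The entire-edge-LDP argument is also right: only $v_i$ with $i>j$ ever reads the bit corresponding to edge $(v_i,v_j)$, since both the RR and the counts $t_i, s_i$ are confined to the lower triangular part of $\bmA$.

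The gap is in the projection case, which you flag but do not resolve, and the route you sketch does not go through as stated. When $d_i \ge \td_{max}$, a one-bit change in the input $\bma_i$ does \emph{not} induce a one-bit change in the projected list: \texttt{GraphProjection} keeps exactly $\td_{max}$ neighbors chosen by a random permutation, so adding one edge to $\bma_i$ can cause a different edge to be evicted, and the two projected lists can differ in \emph{two} positions (a swap). Your "above argument applied to the projected list" only covers a single flipped bit, so it does not cover this case; in particular your claim that $\Delta t_i$ has the same sign as $\Delta s_i$ fails for a swap. The paper's proof handles this explicitly: conditioning on the (data-independent) permutation, either the projected lists coincide, or they differ in one bit (handled as before), or they differ in two bits below index $i$, in which case $\bd_i = \bd'_i$ so $s_i = s'_i$ and $|\Delta w_i| = |\Delta t_i| \le \bd_i - 1 < \td_{max}$ because only pairs involving the swapped-in or swapped-out neighbor change. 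The final bound is the same, but it requires this separate case analysis, which is the genuinely delicate part of the theorem and is missing from your proposal.
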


As with \alg{LocalRR$_\triangle$}, \alg{Local2Rounds$_\triangle$} does not have the doubling issue; i.e., it provides $\epsilon$-relationship DP (not $2\epsilon$). 
This follows from the fact that we 
use only the lower triangular part of $\bmA$; 
i.e., we assume 
$j<k<i$ 
in counting $t_i$ and $s_i$. 

\begin{theorem}\label{thm:local2rounds}
  Let 
  $\hf_{\triangle}(G, \epsilon_1, \epsilon_2, \td_{max})$ 
  be the output of 
  \alg{Local2Rounds$_\triangle$}. 
  Then, for all 
  $\epsilon_1,\epsilon_2 \in \nnreals$, 
  $\td_{max} \in \nnints$,
  and $G\in \calG$ such that the
  maximum degree $d_{max}$ of $G$ is 
  at most 
  $\td_{max}$,
  $\mathbb{E}[l_2^2(\hf_{\triangle}(G, \epsilon_1, \epsilon_2, \td_{max}), f_\triangle(G))] 
  \leq
    O\left(\frac{e^{\epsilon_1}}{(1-e^{\epsilon_1})^2} \left(\td_{max}^3 n +
    \frac{e^{\epsilon_1}}{\epsilon_2^2}\td_{max}^2 n\right)\right)$.
\end{theorem}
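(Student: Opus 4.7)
The plan is to exploit the fact that $\hf_\triangle := \frac{1}{1-2p_1}\sum_i \hw_i$ is unbiased, so the expected $l_2^2$ loss equals the variance of $\hf_\triangle$. Proposition~\ref{prop:triangle_emp_2rounds} already gives $\E[\sum_i w_i] = (1-2p_1)\,f_\triangle(G)$, and each added Laplace noise is mean-zero and independent of everything else, so $\E[\hf_\triangle]=f_\triangle(G)$ and
\begin{align*}
\E\bigl[l_2^2(\hf_\triangle,f_\triangle(G))\bigr]
  = \frac{1}{(1-2p_1)^2}\left(\V\bigl[\textstyle\sum_i w_i\bigr] + n\cdot \frac{2\td_{max}^2}{\epsilon_2^2}\right).
\end{align*}
Since $1/(1-2p_1)^2 = (e^{\epsilon_1}+1)^2/(e^{\epsilon_1}-1)^2 \le 4 e^{2\epsilon_1}/(e^{\epsilon_1}-1)^2$, the Laplace contribution immediately yields the second summand of the claimed bound, $O(e^{2\epsilon_1}\td_{max}^2 n / ((e^{\epsilon_1}-1)^2\epsilon_2^2))$.

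The substantive work is bounding $\V[\sum_i w_i]$. Let $X_{j,k}:=\mathbf{1}[(v_j,v_k)\in E']$ for $j<k$; these are mutually independent Bernoullis with $\V[X_{j,k}]=p_1(1-p_1)$. Since $s_i$ is a deterministic function of $G$, $\V[\sum_i w_i]=\V[\sum_i t_i]$. Swapping the order of summation gives
\begin{align*}
\textstyle\sum_i t_i \;=\; \sum_{j<k} X_{j,k}\,D_{j,k},\qquad D_{j,k} := \textstyle\sum_{i>k} a_{i,j}\,a_{i,k},
\end{align*}
so independence of the $X_{j,k}$ across distinct pairs yields $\V[\sum_i w_i] = p_1(1-p_1)\sum_{j<k} D_{j,k}^2$.

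Finally, I would bound $\sum_{j<k}D_{j,k}^2$ by invoking $\td_{max}$ twice. Since the hypothesis $d_{max}\le\td_{max}$ makes graph projection inert, $D_{j,k}\le d_j\le\td_{max}$. Double-counting the number of 2-stars at each vertex also gives $\sum_{j<k}D_{j,k}=\sum_i\binom{d_i^<}{2}\le n\binom{\td_{max}}{2}$, where $d_i^<$ is $v_i$'s number of neighbors with smaller index. Combining, $\sum_{j<k}D_{j,k}^2 \le \td_{max}\sum_{j<k}D_{j,k}=O(n\,\td_{max}^3)$. Multiplying by $p_1(1-p_1)/(1-2p_1)^2 = e^{\epsilon_1}/(e^{\epsilon_1}-1)^2$ recovers the first summand of the claimed bound. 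The main obstacle is exactly this $K_{2,2}$-style bound: the naive estimate $|\{(j,k)\}|\cdot\max_{j,k} D_{j,k}^2 = O(n^2\td_{max}^2)$ is a factor $n/\td_{max}$ too weak, and one has to combine the per-pair bound $D_{j,k}\le\td_{max}$ with the linear-in-$n$ bound on $\sum_{j<k} D_{j,k}$.
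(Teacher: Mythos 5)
Your proposal is correct and follows essentially the same route as the paper's proof: bias--variance decomposition via the unbiasedness from Proposition~\ref{prop:triangle_emp_2rounds}, separating the Laplace variance, rewriting $\sum_i t_i$ as a sum of independent edge indicators weighted by $c_{jk}=D_{j,k}$, and bounding $\sum_{j<k}c_{jk}^2$ by combining the pointwise bound $c_{jk}\le\td_{max}$ with the $O(n\td_{max}^2)$ bound on $\sum_{j<k}c_{jk}$. Your inequality $\sum_{j<k}D_{j,k}^2\le\td_{max}\sum_{j<k}D_{j,k}$ is a slightly cleaner phrasing of the paper's ``Jensen's inequality style'' extremal argument, but it is the same idea.
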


Theorem~\ref{thm:local2rounds} means that for triangles, the $l_2$ loss 
is reduced from $O(n^4)$ to $O(\td_{max}^3n)$ by introducing an additional round.

\smallskip
\noindent{\textbf{Private calculation of $d_{max}$.}}~~As with $k$-stars, we can privately calculate $d_{max}$ 
by using the method described in Section~\ref{sub:non-interactive_k_stars}. 
Furthermore, the private calculation of $d_{max}$ does not increase the number of rounds; i.e., we can run \alg{Local2Rounds$_\triangle$} with the private calculation of $d_{max}$ in two rounds. 

Specifically, let $\epsilon_0 \in \nnreals$ be the privacy budget for the private calculation of $d_{max}$. 
At the first round, each user $v_i$ adds $\Lap(\frac{1}{\epsilon_0})$ to her degree $d_i$, 
and sends the noisy degree $\hd_i$ ($=d_i + \Lap(\frac{1}{\epsilon_0})$) to the data collector, along with the outputs $R_i = (RR_\epsilon(a_{i,1}), \ldots, RR_\epsilon(a_{i,i-1}))$ of the RR. 
The data collector calculates the noisy max degree $\hd_{max}$ ($= \max\{\hd_1,
\ldots, \hd_n\}$) as an estimate of $d_{max}$, and sends it back to all users. 
At the second round, we run \alg{Local2Rounds$_\triangle$} with input $G$ (represented as $\bma_1, \ldots, \bma_n$), $\epsilon_1$, $\epsilon_2$, and $\lfloor \hd_{max} \rfloor$.

At the first round, the calculation of $\hd_{max}$ provides $\epsilon_0$-edge LDP. 
Note that it provides $2\epsilon_0$-relationship DP (i.e., it has the doubling issue) because one edge $(v_i,v_j) \in E$ affects both of the degrees $d_i$ and $d_j$ by 1. 
At the second round, \alg{LocalLap$_{k\star}$} provides $(\epsilon_1 + \epsilon_2)$-edge LDP and 
$(\epsilon_1 + \epsilon_2)$-relationship DP (Theorem~\ref{thm:local2rounds_LDP}). 
Then by the composition theorem~\cite{DP}, this two-rounds algorithm provides $(\epsilon_0 + \epsilon_1 + \epsilon_2)$-edge LDP and $(2\epsilon_0 + \epsilon_1 + \epsilon_2)$-relationship DP. 
Although the total privacy budget is larger for relationship DP, the difference ($=\epsilon_0$) can be very small. 
In fact, we set $(\epsilon_0, \epsilon_1, \epsilon_2) = (0.1, 0.45, 0.45)$ or $(0.2, 0.9, 0.9)$ in our experiments (i.e., the difference is $0.1$ or $0.2$), and show that this algorithm provides almost the same utility as \alg{Local2Rounds$_\triangle$} with the true max degree $d_{max}$. 

\smallskip
\noindent{\textbf{Time complexity.}}~~We also note that \alg{Local2Rounds$_\triangle$} has an advantage over \alg{LocalRR$_\triangle$} in terms of the time complexity. 

Specifically, \alg{LocalRR$_\triangle$} is inefficient because the data collector has to count the number of triangles $m_3$ in the noisy graph $G'$. 
Since the noisy graph $G'$ is dense (especially when $\epsilon$ is small) and there are $\binom{n}{3}$ subgraphs with three nodes in $G'$, the number of triangles is $m_3 = O(n^3)$. 
Then, the time complexity of \alg{LocalRR$_\triangle$} is 
also $O(n^3)$, which is not practical for a graph with a large number of users $n$.
In fact, we implemented \alg{LocalRR$_\triangle$} ($\epsilon=1$) with C/C++ and measured its running time using one node of a supercomputer (ABCI: AI Bridging Cloud Infrastructure \cite{ABCI}). 
When $n=5000$, $10000$, $20000$, and $40000$, the running time was $138$, $1107$, $9345$, and $99561$ seconds, respectively; i.e., the running time was almost cubic in $n$. 
We can also estimate the running time for larger $n$. 
For example, when $n=1000000$, \alg{LocalRR$_\triangle$} ($\epsilon=1$) would require about $35$ years $(=1107 \times 100^3 /(3600 \times 24 \times 365))$. 

In contrast, the time complexity of \alg{Local2Rounds$_\triangle$} 
is 
$O(n^2 + n d_{max}^2)$\footnote{When we 
evaluate 
\alg{Local2Rounds$_\triangle$} in our experiments, 
we can apply the RR to only edges that are required at the second round; i.e., $(v_j,v_k) \in G'$ in line 8 of Algorithm~\ref{alg:subgraph-interactive}. 
Then the time complexity of \alg{Local2Rounds$_\triangle$} 
can be reduced to 
$O(n d_{max}^2)$ in total. 
We also confirmed that when $n=1000000$, the running time of \alg{Local2Rounds$_\triangle$} was $311$ seconds on one node of the ABCI. 
Note, however, that this does \textit{not} protect individual privacy, because it reveals the fact that users $v_j$ and $v_k$ are friends with $u_i$ to the data collector.}. 
The factor of $n^2$ comes from the fact that the size of the noisy graph $G'$ is $O(n^2)$. 
This also causes a large communication overhead, as explained below.

\smallskip
\noindent{\textbf{Communication overhead.}}~~In 
\alg{Local2Rounds$_\triangle$}, each user need to see the noisy graph $G'$ 
of size $O(n^2)$ 
to count $t_i$ and $s_i$. 
This results in a per-user communication overhead of $O(n^2)$. 
Although we 
do not simulate the communication overhead in our experiments that use \alg{Local2Rounds$_\triangle$}, 
the $O(n^2)$ overhead 
might 
limit its application in very large graphs. 
An interesting avenue of future work is how to compress the graph size (e.g., via graph projection or random projection) to reduce both the time complexity and the communication overhead.

\begin{table*}
  \centering
\begin{tabular}{|l|l|l|l|l|l|l|}
  \hline
  & Centralized & \spantwo{One-round local} & Two-rounds local \\
  \hline
  & Upper Bound & Lower Bound & Upper Bound & Upper Bound \\ \hline

  $f_{k\star}$
  & $O\left( \frac{d_{max}^{2k-2}}{\epsilon^2} \right)$  
  &  $\Omega\left( \frac{e^{2\epsilon}}{(e^{2\epsilon}+1)^2}d_{max}^{2k-2}n \right)$ 
  &  $O\left( \frac{d_{max}^{2k-2}}{\epsilon^2}n \right)$ 
  &  $O\left( \frac{d_{max}^{2k-2}}{\epsilon^2}n \right)$ \\ \hline

 $f_\triangle$ 
  &  $O\left(\frac{d_{max}^2}{\epsilon^2}\right)$ 
  &  $\Omega\left( \frac{e^{2\epsilon}}{(e^{2\epsilon}+1)^2}d_{max}^2n \right)$
  &  $O\left(\frac{e^{6\epsilon}}{(e^{\epsilon}-1)^6}n^4\right)$ 
  (when $G \sim \bmG(n,\alpha)$)
  &  $O\left(\frac{e^\epsilon}{(e^\epsilon-1)^2}(d_{max}^3 n +
  \frac{e^\epsilon}{\epsilon^2}d_{max}^2 n)\right)$ \\ \hline

\end{tabular}
\vspace{-2mm}
\caption{Bounds on $l_2$ losses for privately estimating $f_{k\star}$ and
$f_{\triangle}$ with $\epsilon$-edge LDP. For upper-bounds, we assume that  $\td_{max}=d_{max}$. 
For the centralized model, we use the Laplace mechanism. For
the one-round $f_\triangle$ algorithm, we apply Theorem~\ref{thm:subgraph-rr} 
with constant $\alpha$. For the two-round protocol $f_\triangle$ algorithm, we
apply Theorem~\ref{thm:local2rounds} with
$\epsilon_1=\epsilon_2=\frac{\epsilon}{2}$. }\label{tab:perf}
\end{table*}

\subsection{Lower Bounds}
\label{sub:lower_bounds}

We show a general lower bound on the $l_2$ loss of private estimators $\hat{f}$ of
real-valued functions $f$ in the one-round LDP model. Treating $\epsilon$ as a
constant, we have shown that 
when $\td_{max}=d_{max}$, 
the expected $l_2$ loss of \alg{LocalLaplace$_{k\star}$} is 
$O(nd_{max}^{2k-2})$
(Theorem~\ref{thm:k-stars}). 
However, in
the centralized 
model, we can use
the Laplace mechanism with sensitivity 
$2\binom{d_{max}}{k-1}$ 
to obtain $l_2^2$ errors of $O(d_{max}^{2k-2})$ 
for $f_{k\star}$. 
Thus, we ask
if 
the factor of $n$ is 
necessary 
in the one-round LDP model.

We 
answer this question affirmatively.
We show for many types of queries $f$, there is a lower bound on 
$l_2^2(f(G), \hf(G))$ 
for any private estimator $\hf$ of the form
\begin{equation}\label{eq:one-round-lower}
  \hf(G) = \tilde{f}(\calR_1(\bma_1), \ldots, \calR_n(\bma_n)),
\end{equation}
where 
$\calR_1, \ldots, \calR_n$ satisfy 
$\epsilon$-edge LDP or $\epsilon$-relationship DP 
and $\tilde{f}$ is an aggregate function that takes $\calR_1(\bma_1), \ldots, \calR_n(\bma_n)$ as input and outputs $\hf(G)$. 
Here we assume that $\calR_1, \ldots, \calR_n$ 
are independently run, meaning that they are in the one-round
setting.
For our 
lower bound,
we 
require 
that 
input edges to $f$ 
be ``independent'' in the sense that 
adding an edge to an input graph $G$  
independently 
change 
$f$ by at least $D \in \reals$. 
The specific structure of input graphs we require is as follows:

\begin{definition}\label{def:mono-cube}[$(n,D)$-independent cube for $f$]
  Let 
  $D \in \nnreals$. 
  For 
  $\kappa \in \nats$, 
  let $G=(V,E) \in \calG$ be a graph on 
  $n = 2\kappa$ 
  nodes, and let 
  $M = \{(v_{i_1}, v_{i_2}),(v_{i_3},v_{i_4}),\ldots,(v_{i_{2k-1}},v_{i_{2\kappa}})\}$ 
  for integers $i_j \in [n]$ 
  be 
  a set of edges such that each of $i_1, \ldots, i_{2\kappa}$ is distinct (i.e., perfect matching on the nodes).
  Suppose that 
  $M$ is disjoint from 
  $E$; 
  i.e., 
  $(v_{i_{2j-1}}, v_{i_{2j}}) \notin E$ 
  for any 
  $j\in[\kappa]$. 
  Let 
  $\calA = \{(V, E \cup N): N \subseteq M\}$.
  Note that 
  $\calA$ is a set of 
  $2^\kappa$ 
  graphs.
  We say $\calA$ 
  is 
  an \emph{$(n,D)$-independent cube for $f$} if for all
  $G'=(V,E') \in \calA$, 
  we have
  \[
    f(G') = f(G) + \sum_{e \in E' \cap M} C_e,
  \]
  where $C_e \in \reals$ satisfies $|C_e| \geq D$ for any $e \in M$.
\end{definition}

\begin{figure}[t]
  \centering
  \includegraphics[width=0.9\linewidth]{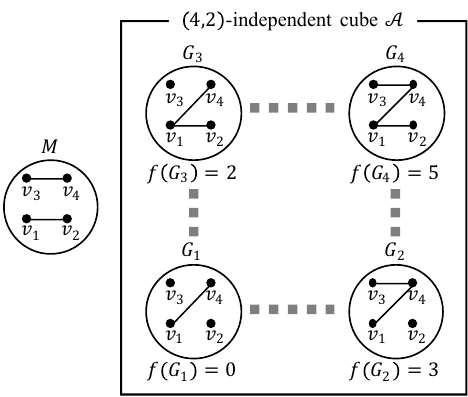}
  \vspace{-2mm}
  \caption{
    $(4,2)$-independent cube $\calA$ for $f$. 
    In this example, $M = \{(v_1,v_2),(v_3,v_4)\}$, $G_1=(V,E)$, $\calA = \{(V, E \cup N): N \subseteq M\}$, 
    $C_{(v_1,v_2)}=2$, and $C_{(v_3,v_4)}=3$.
    Adding $(v_1,v_2)$ and $(v_3,v_4)$ increase $f$ by $2$ and $3$, respectively.
  }\label{fig:mono-cube}
\end{figure}

\begin{figure}[t]
  \centering
  \includegraphics[width=0.9\linewidth]{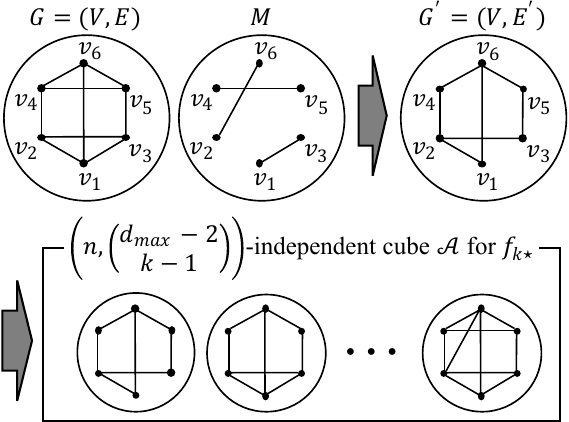}
  \vspace{-2mm}
  \caption{
    Construction of an independent cube for a $k$-star function ($n=6$, $d_{max}=4$). 
    From a 
    $3$-regular graph $G=(V,E)$ and $M=\{(v_1,v_3),(v_2,v_6),(v_4,v_5)\}$, we make a graph $G'=(V,E')$ such that $E' = E \setminus M$. 
    Then $\calA = \{(V, E' \cup N): N \subseteq M\}$ forms an
    $(n, 2\binom{d_{max}-2}{k-1})$-independent cube for $f_{k\star}$.
  }\label{fig:mono-cube_kstar}
\end{figure}

Such a set of inputs has an ``independence'' property because,
regardless of which edges from $M$ has been added before, adding edge $e \in M$
always changes $f$ by $C_e$. 
Figure~\ref{fig:mono-cube} shows an example of a $(4,2)$-independent cube for $f$. 

We can also construct 
a independent cube for 
a $k$-star function 
as follows. 
Assume that $n$ is even. 
It is well known in graph theory that if $n$ is even, then 
for any $d\in[n-1]$, there exists a 
$d$-regular graph where every node has degree $d$ \cite{Ganesan_arXiv18}. 
Therefore, there exists a 
$(d_{max}-1)$-regular graph $G=(V,E)$ 
of size $n$. 
Pick an arbitrary perfect matching $M$ on the nodes. Now, let 
$G' = (V,E')$ such that $E' = E \setminus M$. 
Every node in $G'$ has degree between $d_{max}-2$ and $d_{max}-1$. 
Adding an edge in $M$ to $G'$ will produce at least
$2\binom{d_{max}-2}{k-1}$ new $k$-stars.
Thus, $\calA = \{(V, E' \cup N): 
N \subseteq M\}$ forms an
$(n, 2\binom{d_{max}-2}{k-1})$-independent cube for $f_{k\star}$. 
Note that the maximum degree of each graph in $\calA$ is at most $d_{max}$. 
Figure~\ref{fig:mono-cube_kstar} shows how to construct an independent cube for a $k$-star function when $n=6$ and $d_{max}=4$. 

Using the structure that the $(n,D)$-independent cube imposes on $f$,
we can prove a lower bound:
\begin{theorem}\label{thm:lower-bound}
  Let 
  $\hf(G)$ 
  have the form of~\eqref{eq:one-round-lower}, 
  where $\calR_1, \ldots, \calR_n$ are independently run.
  Let $\cal{A}$ be an $(n,D)$-independent cube for $f$. 
  If 
  $(\calR_1, \ldots, \calR_n)$ 
  provides 
  $\epsilon$-relationship DP, 
  then 
  we have
  \[
    \frac{1}{\calA} \sum_{G \in \calA} \E[l_2^2(f(G), \hf(G))] =
    \Omega\left(\frac{e^{\epsilon}}{(e^{\epsilon}+1)^2}nD^2\right).
  \]
\end{theorem}

A corollary of Theorem~\ref{thm:lower-bound} is that if $\calR_1, \ldots,
\calR_n$ satisfy $\epsilon$-edge LDP, then they satisfy $2\epsilon$
-relationship DP and thus for edge LDP we have a lower bound of
$\Omega\left(\frac{e^{2\epsilon}}{(e^{2\epsilon}+1)^2}nD^2\right)$.

Theorem~\ref{thm:lower-bound}, combined with the fact that there exists an
$(n,2\binom{d_{max}-2}{k-1})$-independent cube for 
a $k$-star function 
implies Corollary~\ref{cor:kstars-lb}. 
In Appendix~\ref{sub:cube_triangle}, we also construct an $(n, \frac{d_{max}}{2}-2)$
independent cube 
for $f_\triangle$ and establish a lower bound of 
$\Omega(\frac{e^{2\epsilon}}{(e^{2\epsilon}+1)^2} nd_{max}^2)$ for
$f_\triangle$. 

The upper and lower bounds on the $l_2$ losses 
shown in
this section appear in Table~\ref{tab:perf}.

\section{Experiments}
\label{sec:experiments}

Based on our theoretical results in Section~\ref{sec:algorithms}, we would like to pose the following questions:
\begin{itemize}
    \item For triangle counts, how much does the two-rounds interaction help over a single round in practice?
    \item What is the privacy-utility trade-off 
    of our LDP algorithms 
    (i.e., how beneficial are our LDP algorithms)?
\end{itemize}
We conducted experiments to answer to these questions. 

\subsection{Experimental Set-up}
\label{sub:setup}
We used the following two large-scale datasets:

\smallskip
\noindent{\textbf{IMDB.}}~~The Internet Movie Database (denoted by \IMDB{}) 
\cite{IMDB_GD05} 
includes a bipartite graph between $896308$ actors and $428440$ movies. 
We assumed actors as users. 
From the bipartite graph, we extracted 
a graph $G^*$ with $896308$ nodes (actors), where an edge between two actors represents that they have played in the same movie. 
There are $57064358$ edges in $G^*$, and the average degree in $G^*$ is $63.7$ $(=\frac{57064358}{896308})$.

\smallskip
\noindent{\textbf{Orkut.}}~~The 
Orkut online social network dataset (denoted by \Orkut{})  \cite{snapnets} includes a graph $G^*$ with $3072441$ users and $117185083$ edges. 
The average degree in $G^*$ is $38.1$ $(=\frac{117185083}{3072441})$. 
Therefore, \Orkut{} is more sparse than \IMDB{} (whose average degree in $G^*$ is $63.7$). 

\smallskip
For each dataset, we randomly selected $n$ users from the whole graph $G^*$, and extracted a graph $G=(V,E)$ with 
$n$ users. 
Then we estimated the number of triangles $f_\triangle(G)$, the number of $k$-stars $f_{k\star}(G)$, and the clustering coefficient ($=\frac{3 f_\triangle(G)}{f_{2\star}(G)}$) using $\epsilon$-edge LDP (or $\epsilon$-edge centralized DP) algorithms in Section~\ref{sec:algorithms}. 
Specifically, we used the following algorithms:

\smallskip
\noindent{\textbf{Algorithms for triangles.}}~~For algorithms for estimating $f_\triangle(G)$, we used the following three algorithms: 
(1) the RR (Randomized Response) with the empirical estimation method in the local model (i.e., \alg{LocalRR$_\triangle$} in Section~\ref{sub:non-interactive_triangles}), 
(2) the two-rounds algorithm in the local model (i.e., \alg{Local2Rounds$_\triangle$} in Section~\ref{sub:two_rounds}), and 
(3) the Laplacian mechanism in the centralized model (i.e., \alg{CentralLap$\triangle$} in Section~\ref{sub:non-interactive_triangles}).

\smallskip
\noindent{\textbf{Algorithms for $k$-stars.}}~~For algorithms for estimating $f_{k\star}(G)$, we used the following two algorithms: 
(1) the Laplacian mechanism in the local model (i.e., \alg{LocalLap$_k\star$} in Section~\ref{sub:non-interactive_k_stars}) and 
(2) the Laplacian mechanism in the centralized model (i.e., \alg{CentralLap$_k\star$} in Section~\ref{sub:non-interactive_k_stars}). 

\smallskip
For each algorithm, we evaluated the $l_2$ loss and the relative error (as described in Section~\ref{sub:graph_statistics}), while changing the values of $n$ and $\epsilon$. 
To stabilize the performance, we attempted $\gamma \in \nats$ ways to randomly select $n$ users from $G^*$, and averaged the utility value over all the $\gamma$ ways to randomly select $n$ users. 
When we changed $n$ from $1000$ to $10000$, we set $\gamma = 100$ because the variance was large. For other cases, we set $\gamma = 10$. 

In Appendix~\ref{sec:BAGraph}, 
we also report experimental results using artificial graphs based on the Barab\'{a}si-Albert model \cite{NetworkScience}.

\begin{figure}[t]
\centering
\includegraphics[width=0.99\linewidth]{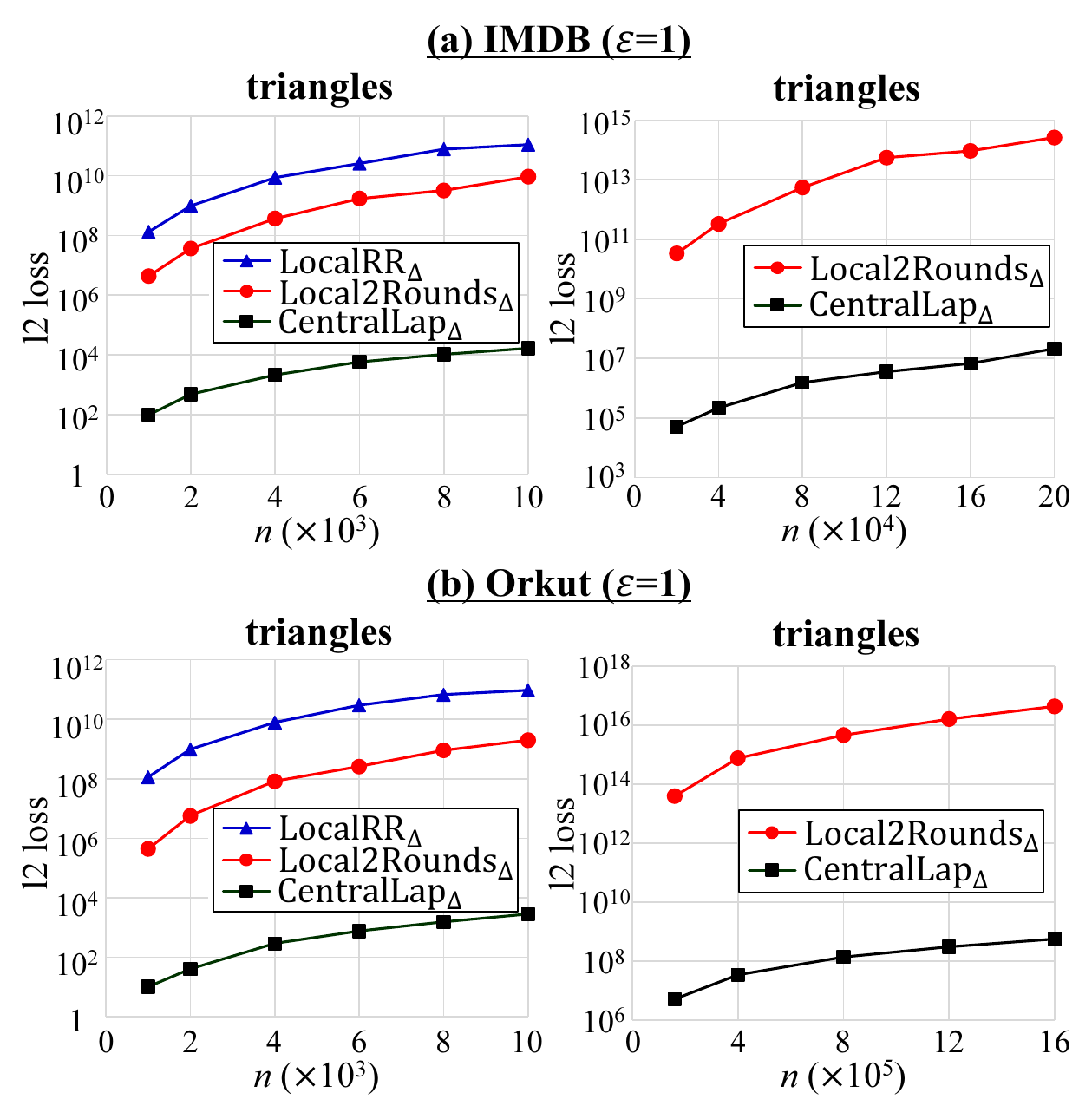}
\vspace{-4mm}
\caption{Relation between the number of users $n$ and the $l_2$ loss in triangle counts when $\epsilon = 1$ ($\epsilon_1 = \epsilon_2 = \frac{1}{2}$, $\td_{max} = d_{max}$). 
Here we do not evaluate \alg{LocalRR$_\triangle$} when $n > 10000$, because it is inefficient (see Section~\ref{sub:two_rounds} ``Time complexity'').}
\label{fig:res1_n_l2loss_tri}
\end{figure}

\begin{figure}[t]
\centering
\includegraphics[width=0.99\linewidth]{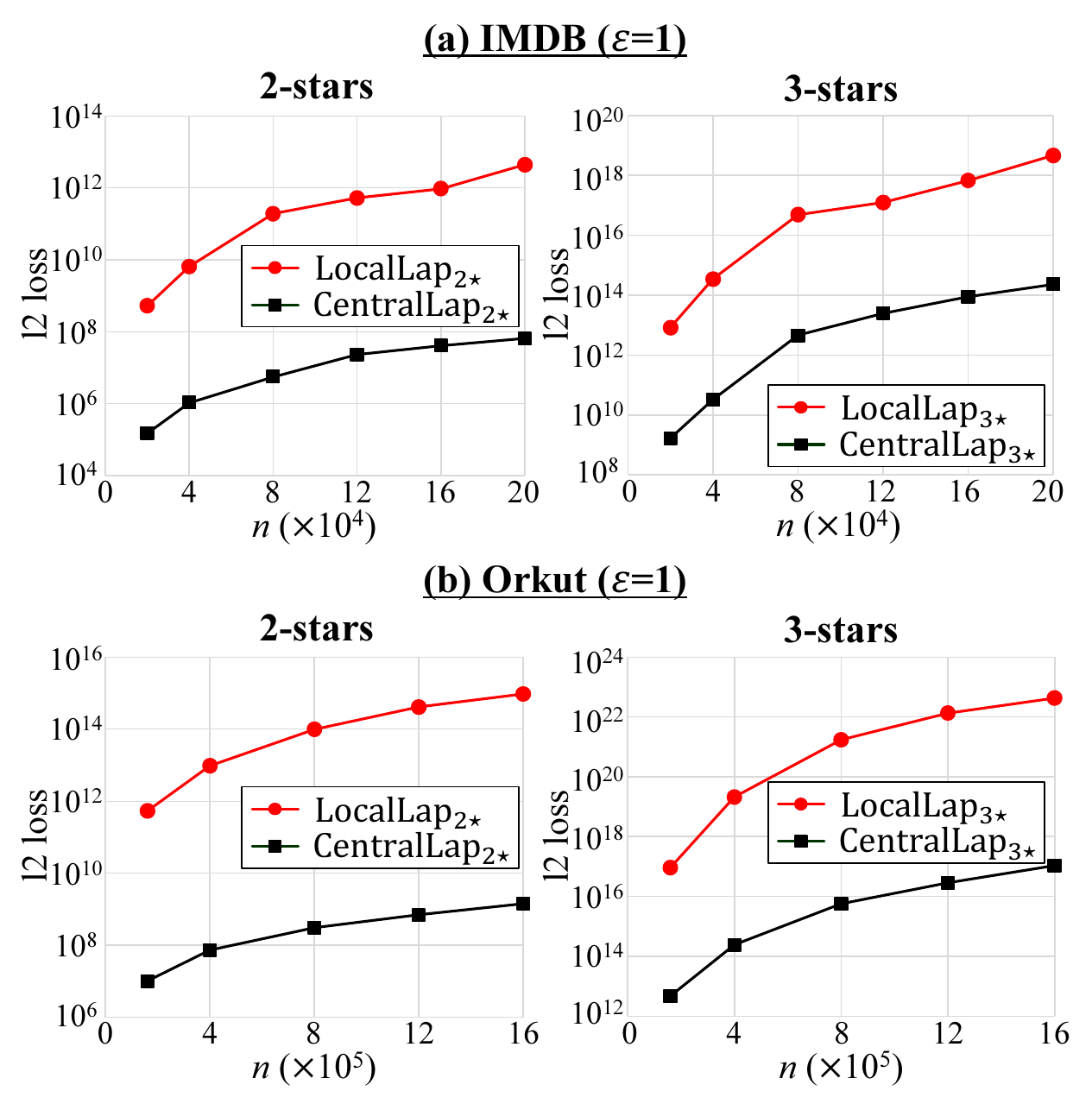}
\vspace{-5mm}
\caption{Relation between the number of users $n$ and the $l_2$ loss in $k$-star counts when $\epsilon=1$ ($\epsilon_1 = \epsilon_2 = \frac{1}{2}$, $\td_{max} = d_{max}$).}
\label{fig:res1_n_l2loss_kst}
\end{figure}

\subsection{Experimental Results}
\label{sub:results}
\noindent{\textbf{Relation between $n$ and the $l_2$ loss.}}~~We first evaluated the $l_2$ loss of the estimates of 
$f_\triangle(G)$, 
$f_{2\star}(G)$, 
and $f_{3\star}(G)$ 
while changing the number of users $n$. 
Figures~\ref{fig:res1_n_l2loss_tri} and \ref{fig:res1_n_l2loss_kst} shows the results ($\epsilon=1$). 
Here 
we did not evaluate \alg{LocalRR$_\triangle$} when $n$ was larger than $10000$, because \alg{LocalRR$_\triangle$} was inefficient 
(as described in Section~\ref{sub:two_rounds} ``Time complexity''). 
In \alg{Local2Rounds$_\triangle$}, we set $\epsilon_1 = \epsilon_2 = \frac{1}{2}$. 
As for $\td_{max}$, 
we set $\td_{max} = d_{max}$ (i.e., we assumed that $d_{max}$ is publicly available and did not perform graph projection) 
because we want to examine how well our theoretical results hold in our experiments. 
We also evaluate the effectiveness of the private calculation of $d_{max}$ at the end of Section~\ref{sub:results}. 

Figure~\ref{fig:res1_n_l2loss_tri} shows that \alg{Local2Rounds$_\triangle$} significantly outperforms \alg{LocalRR$_\triangle$}. 
Specifically, the $l_2$ loss of \alg{Local2Rounds$_\triangle$} is smaller than that of \alg{LocalRR$_\triangle$} by a factor of about $10^2$. 
The difference between \alg{Local2Rounds$_\triangle$} and \alg{LocalRR$_\triangle$} is larger in \Orkut{}. 
This is because \Orkut{} is more sparse, as described in Section~\ref{sub:setup}. 
For example, when $n=10000$, the maximum degree $d_{max}$ in $G$ was $73.5$ and $27.8$ on average in \IMDB{} and \Orkut{}, respectively. 
Recall that for a fixed $\epsilon$, 
the expected $l_2$ loss of \alg{Local2Rounds$_\triangle$} and \alg{LocalRR$_\triangle$} 
can be expressed as $O(nd_{max}^3)$ and $O(n^4)$, respectively. 
Thus \alg{Local2Rounds$_\triangle$} significantly outperforms \alg{LocalRR$_\triangle$}, especially in sparse graphs.

Figures~\ref{fig:res1_n_l2loss_tri} and \ref{fig:res1_n_l2loss_kst} show that the $l_2$ loss is roughly consistent with 
our upper-bounds in terms of $n$. 
Specifically, 
\alg{LocalRR$_\triangle$}, 
\alg{Local2Rounds$_\triangle$},  \alg{CentralLap$_\triangle$}, 
\alg{LocalLap$_{k\star}$}, and 
\alg{CentralLap$_{k\star}$} achieve 
the expected $l_2$ loss of $O(n^4)$, $O(nd_{max}^3)$, $O(d_{max}^2)$, $O(nd_{max}^{2k-2})$, and $O(d_{max}^{2k-2})$, respectively. 
Here note that 
each user's degree increases roughly in proportion to $n$ (though the degree is much smaller than $n$), 
as we randomly select $n$ users from the whole graph $G^*$. Assuming that $d_{max} = O(n)$,  Figures~\ref{fig:res1_n_l2loss_tri} and \ref{fig:res1_n_l2loss_kst} are roughly consistent with the upper-bounds. 
The figures also show the limitations of the local model in terms of the utility when compared to the centralized model.

\smallskip
\noindent{\textbf{Relation between $\epsilon$ and the $l_2$ loss.}}~~Next we evaluated the $l_2$ loss 
when we changed the privacy budget $\epsilon$ in edge LDP. 
Figure~\ref{fig:res2_eps_l2loss} shows the results for triangles and $2$-stars ($n=10000$). 
Here we omit the result of $3$-stars because it is similar to that of $2$-stars. 
In \alg{Local2Rounds$_\triangle$}, we set $\epsilon_1 = \epsilon_2 = \frac{\epsilon}{2}$. 

Figure~\ref{fig:res2_eps_l2loss} shows that the $l_2$ loss is roughly consistent with 
our upper-bounds in terms of $\epsilon$. 
For example, when we decrease $\epsilon$ from $0.4$ to $0.1$, the $l_2$ loss 
increases by a factor of about $5000$, $200$, and $16$ for both the datasets in \alg{LocalRR$_\triangle$}, \alg{Local2Rounds$_\triangle$}, and \alg{CentralLap$_\triangle$}, respectively. 
They are 
roughly consistent with our theoretical results 
that for small $\epsilon$, the expected $l_2$ loss of \alg{LocalRR$_\triangle$}, \alg{Local2Rounds$_\triangle$}, and \alg{CentralLap$_\triangle$} is  $O(\epsilon^{-6})$\footnote{We used $e^\epsilon \approx \epsilon + 1$ to derive the upper-bound of \alg{LocalRR$_\triangle$} for small $\epsilon$.}, $O(\epsilon^{-4})$, and $O(\epsilon^{-2})$, respectively. 

\begin{figure}[t]
\centering
\includegraphics[width=0.99\linewidth]{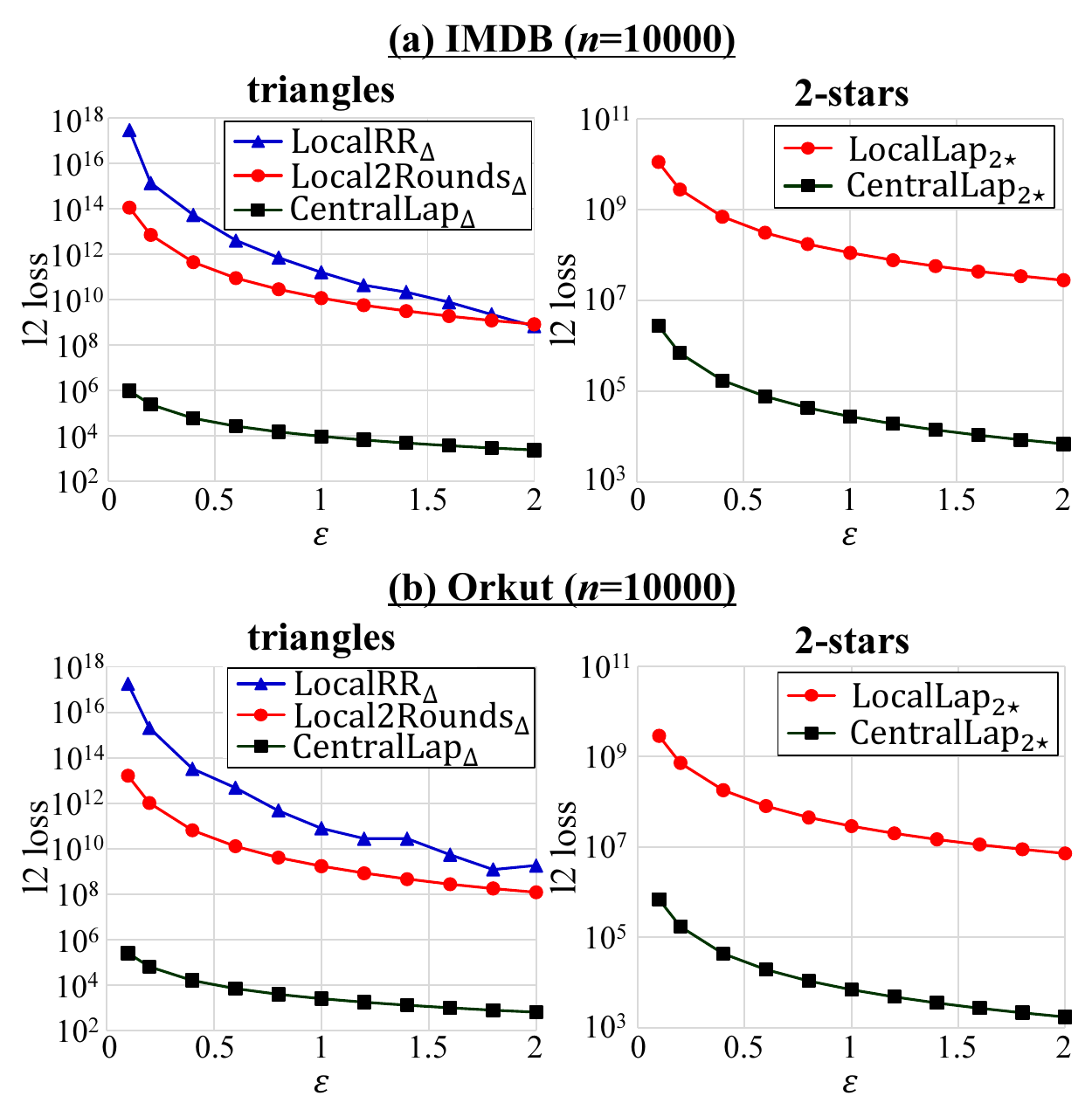}
\vspace{-5mm}
\caption{Relation between $\epsilon$ in edge LDP and the $l_2$ loss when $n=10000$ ($\epsilon_1 = \epsilon_2 = \frac{\epsilon}{2}$, $\td_{max} = d_{max}$).}
\label{fig:res2_eps_l2loss}
\end{figure}

\begin{figure}[t]
\centering
\includegraphics[width=0.99\linewidth]{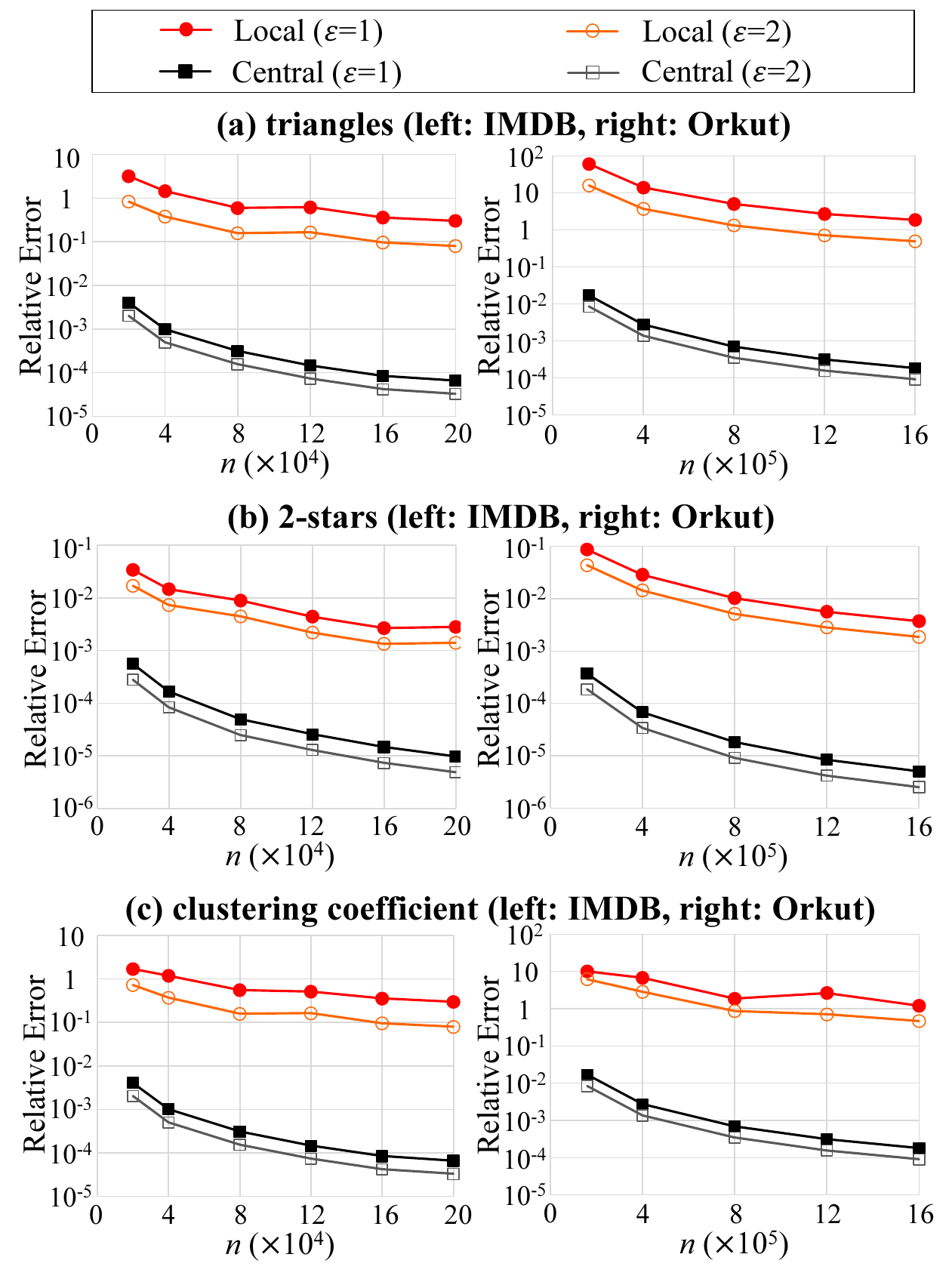}
\vspace{-5mm}
\caption{Relation between 
$n$ and the relative error. In the local model, we used \alg{Local2Rounds$_\triangle$} ($\epsilon = 1$ or $2$) and \alg{LocalLap$_k\star$} ($\epsilon = 1$ or $2$) for estimating triangle counts $f_\triangle(G)$ and $k$-star counts $f_{k\star}(G)$, respectively ($\td_{max} = d_{max}$).}
\label{fig:res3_n_relerr}
\end{figure}

Figure~\ref{fig:res2_eps_l2loss} also shows that 
\alg{Local2Rounds$_\triangle$} significantly outperforms \alg{LocalRR$_\triangle$} especially when $\epsilon$ is small, which is also consistent with our theoretical results. 
Conversely, the difference between \alg{LocalRR$_\triangle$} and \alg{Local2Rounds$_\triangle$} is small when $\epsilon$ is large. 
This is because when $\epsilon$ is large, the RR outputs the true value with high probability. 
For example, when 
$\epsilon \geq 5$, 
the RR outputs the true value with 
$\frac{e^\epsilon}{e^\epsilon+1} > 0.993$. 
However, \alg{LocalRR$_\triangle$} with 
such a large value of $\epsilon$ 
does not guarantee strong privacy, because it outputs the true value in most cases. 
\alg{Local2Rounds$_\triangle$} significantly outperforms \alg{LocalRR$_\triangle$} 
when we want to estimate $f_\triangle(G)$ or $f_{k\star}(G)$ 
with a strong privacy guarantee; e.g., $\epsilon \leq 1$ \cite{DP_Li}. 

\smallskip
\noindent{\textbf{Relative error.}}~~As the number of users $n$ increases, the numbers of triangles $f_\triangle(G)$ and $k$-stars $f_{k\star}(G)$ increase. 
This causes the increase of the $l_2$ loss. 
Therefore, we also evaluated the relative error, as described in Section~\ref{sub:graph_statistics}. 

Figure~\ref{fig:res3_n_relerr} shows the relation between $n$ and the relative error 
(we omit the result of $3$-stars because it is similar to that of $2$-stars). 
In the local model, we used \alg{Local2Rounds$_\triangle$} and \alg{LocalLap$_k\star$} for estimating 
$f_\triangle(G)$ 
and 
$f_{k\star}(G)$, 
respectively 
(we did not use \alg{Local2RR$_\triangle$}, because it is both inaccurate and inefficient). 
For both algorithms, we set $\epsilon = 1$ or $2$ 
($\epsilon_1 = \epsilon_2 = \frac{\epsilon}{2}$ in \alg{Local2Rounds$_\triangle$}) and $\td_{max} = d_{max}$. 
Then we estimated the clustering coefficient as: $\frac{3\hf_{\triangle}(G, \epsilon_1, \epsilon_2, d_{max})}{\hf_{k\star}(G, \epsilon, d_{max})}$, where 
$\hf_{\triangle}(G, \epsilon_1, \epsilon_2, d_{max})$ and 
$\hf_{k\star}(G, \epsilon, d_{max})$ are the estimates of $f_\triangle(G)$ and $f_{k\star}(G)$, respectively. 
If the estimate of the clustering coefficient is smaller than $0$ (resp.~larger than $1$), we set the estimate to $0$ (resp.~$1$) because the clustering coefficient is always between $0$ and $1$. 
In the centralized model, we used \alg{CentralLap$_\triangle$} and \alg{CentralLap$_k\star$} ($\epsilon=1$ or $2$, $\td_{max} = d_{max}$) and calculated the clustering coefficient in the same way. 

Figure~\ref{fig:res3_n_relerr} shows that for all cases, the relative error decreases with increase in $n$. 
This is because both $f_\triangle(G)$ and $f_{k\star}(G)$ significantly increase with increase in $n$. 
Specifically, let $f_{\triangle,v_i}(G) \in \nnints$ the number of triangles that involve user $v_i$, and $f_{k\star,v_i}(G) \in \nnints$ be the number of $k$-stars of which user $v_i$ is a center. 
Then $f_\triangle(G) = \frac{1}{3}\sum_{i=1}^n f_{\triangle,v_i}(G)$ and $f_{k\star,v_i}(G) = \sum_{i=1}^n f_{k\star,v_i}(G)$. 
Since both $f_{\triangle,v_i}(G)$ and $f_{k\star,v_i}(G)$ increase with increase in $n$, both $f_\triangle(G)$ and $f_{k\star}(G)$ increase \textit{at least} in proportion to $n$. 
Thus $f_\triangle(G)^2 \geq \Omega(n^2)$ and $f_{k\star}(G)^2 \geq \Omega(n^2)$. 
In contrast, \alg{Local2Rounds$_\triangle$}, \alg{LocalLap$_k\star$}, \alg{CentralLap$_\triangle$}, and \alg{CentralLap$_k\star$} achieve the expected $l_2$ loss of $O(n)$, $O(n)$, $O(1)$, and $O(1)$, respectively (when we ignore $d_{max}$ and $\epsilon$), all of which are smaller than $O(n^2)$. 
Therefore, the relative error decreases with increase in $n$. 

This result demonstrates that we can accurately estimate graph statistics for large $n$ in the local model. 
In particular, the relative error is smaller in \IMDB{} 
because \IMDB{} is denser and includes a larger number of triangles and $k$-stars; i.e., the denominator of the relative error is large. 
For example, when $n=200000$ and $\epsilon=1$, the relative error is 
$0.30$ and 
$0.0028$ 
for triangles and $2$-stars, 
respectively. 
Note that the clustering coefficient requires $2\epsilon$ 
because we need to estimate both $f_\triangle(G)$ and $f_{k\star}(G)$. 
Yet, we can still accurately calculate the clustering coefficient with a moderate privacy budget; 
e.g., the relative error of the clustering coefficient is 
$0.30$ 
when the privacy budget is $2$ (i.e., $\epsilon = 1$).
If $n$ is larger, then $\epsilon$ would be smaller at the same value of the relative error. 

\begin{figure}[t]
\centering
\includegraphics[width=0.99\linewidth]{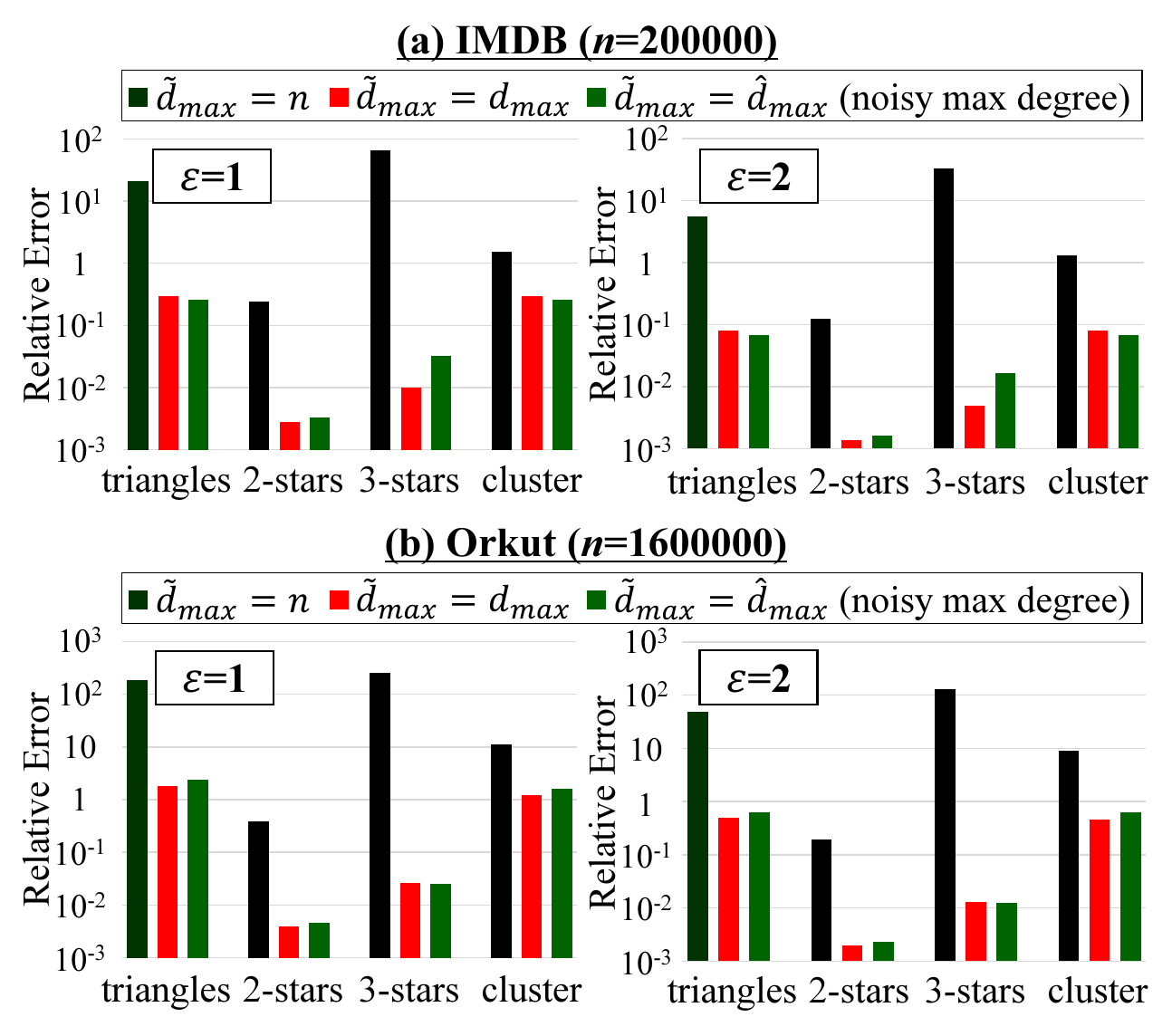}
\vspace{-4mm}
\caption{Relative error 
when $\td_{max} = n$ (\#users), 
$d_{max}$ (max degree), or 
$\hd_{max}$ (noisy max degree). 
We used \alg{Local2Rounds$_\triangle$} ($\epsilon = 1$ or $2$) and \alg{LocalLap$_k\star$} ($\epsilon = 1$ or $2$) for estimating triangle counts $f_\triangle(G)$ and $k$-star counts $f_{k\star}(G)$, respectively. 
}
\label{fig:res4_noisy_local}
\end{figure}

\smallskip
\noindent{\textbf{Private calculation of $d_{max}$.}}~~We have so far assumed that $\td_{max} = d_{max}$ (i.e., $d_{max}$ is publicly available) in our experiments. 
We 
finally evaluate the methods to privately calculate $d_{max}$ with $\epsilon_0$-edge LDP 
(described in Sections~\ref{sub:non-interactive_k_stars} and \ref{sub:two_rounds}). 

Specifically, we used \alg{Local2Rounds$_\triangle$} and \alg{LocalLap$_k\star$} for estimating $f_\triangle(G)$ and $f_{k\star}(G)$, respectively, and evaluated the following three methods for setting $\td_{max}$: 
(i) 
$\td_{max} = n$; 
(ii) 
$\td_{max} = d_{max}$; 
(iii) 
$\td_{max} = \hd_{max}$, where $\hd_{max}$ is the private estimate of $d_{max}$ 
(noisy max degree) in Sections~\ref{sub:non-interactive_k_stars} and \ref{sub:two_rounds}. 

We set $n=200000$ in \IMDB{} and $n=1600000$ in \Orkut{}. 
Regarding the total privacy budget $\epsilon$ in edge LDP for estimating $f_\triangle(G)$ or $f_{k\star}(G)$, we set $\epsilon=1$ or $2$. 
We used $\frac{\epsilon}{10}$ for privately calculating $d_{max}$ (i.e., $\epsilon_0 = \frac{\epsilon}{10}$), and the remaining privacy budget $\frac{9\epsilon}{10}$ as input to \alg{Local2Rounds$_\triangle$} or \alg{LocalLap$_k\star$}. 
In \alg{Local2Rounds$_\triangle$}, we set $\epsilon_1 = \epsilon_2$; i.e., we set $(\epsilon_0, \epsilon_1, \epsilon_2) = (0.1, 0.45, 0.45)$ or $(0.2, 0.9, 0.9)$. 
Then we estimated the clustering coefficient in the same way as Figure~\ref{fig:res3_n_relerr}. 

Figure~\ref{fig:res4_noisy_local} shows the results. 
Figure~\ref{fig:res4_noisy_local} shows that 
the algorithms with $\td_{max} = \hd_{max}$ (noisy max degree) 
achieves the relative error close to (sometimes almost the same as) 
the algorithms with $\td_{max} = d_{max}$ 
and significantly outperforms 
the algorithms with $\td_{max} = n$. 
This means that we can privately estimate $d_{max}$ without a significant loss of utility. 

\smallskip
\noindent{\textbf{Summary of results.}}~~In summary, 
our experimental results showed that the estimation error of triangle counts is significantly reduced by introducing the interaction between users and a data collector. 
The results also showed that 
we can achieve small relative errors 
(much smaller than 1) for subgraph counts 
with privacy budget $\epsilon=1$ or $2$ in edge LDP. 

As described in Section~\ref{sec:intro}, non-private 
subgraph 
counts may reveal some friendship information, and a central server may face data breaches. 
Our LDP algorithms are highly beneficial because they enable us to analyze the connection patterns in a graph 
(i.e., subgraph counts) 
or to understand how likely two friends of an individual will also be a friend 
(i.e., clustering coefficient) 
while strongly protecting individual privacy.

\section{Conclusions}
\label{sec:conclusions}
We presented a series of algorithms for counting triangles and $k$-stars under LDP. 
We 
showed that an additional round can significantly reduce the estimation error in triangles, and the algorithm based on the Laplacian mechanism provides an order optimal error in the non-interactive local model. 
We also showed lower-bounds for general functions including triangles and $k$-stars. 
We conducted experiments using two real datasets, and showed that our algorithms achieve small relative errors, especially when the number of users is large.

As future work, we would like to develop algorithms for other subgraph counts such as cliques and $k$-triangles \cite{Karwa_PVLDB11}. 

\section*{Acknowledgments}
Kamalika Chaudhuri and Jacob Imola would like to thank ONR under N00014-20-1-2334 and UC Lab Fees under LFR 18-548554  for research support. 
Takao Murakami was supported in part by JSPS KAKENHI JP19H04113.

\bibliographystyle{plain}
\bibliography{main}

\appendix

\section{Effectiveness of empirical estimation in \alg{LocalRR$_\triangle$}}
\label{sec:RR_emp}
In Section~\ref{sub:non-interactive_triangles}, we presented \alg{LocalRR$_\triangle$}, which uses the empirical estimation method after the RR. 
Here we show the effectiveness of empirical estimation by comparing \alg{LocalRR$_\triangle$} with the RR without empirical estimation \cite{Qin_CCS17,Ye_ICDE20}. 

As the RR without empirical estimation, we applied the RR to the lower triangular part of the adjacency matrix $\bmA$; i.e., we ran lines 1 to 6 in Algorithm~\ref{alg:subgraph-rr}. 
Then we output the number of noisy triangles $m_3$. 
We denote this algorithm by \alg{RR w/o emp}.

Figure~\ref{fig:res5_RR_wo_emp} shows the $l_2$ loss of \alg{LocalRR$_\triangle$} and \alg{RR w/o emp} when we changed $n$ from $1000$ to $10000$ or $\epsilon$ in edge LDP from $0.1$ to $2$. 
The experimental set-up is the same as Section~\ref{sub:setup}. 
Figure~\ref{fig:res5_RR_wo_emp} shows that \alg{LocalRR$_\triangle$} significantly outperforms \alg{RR w/o emp}, which means that the $l_2$ loss is significantly reduced by empirical estimation. 
As shown in Section~\ref{sec:experiments}, the $l_2$ loss of \alg{LocalRR$_\triangle$} is also significantly reduced by an additional round of interaction.

\begin{figure}[t]
\centering
\includegraphics[width=0.99\linewidth]{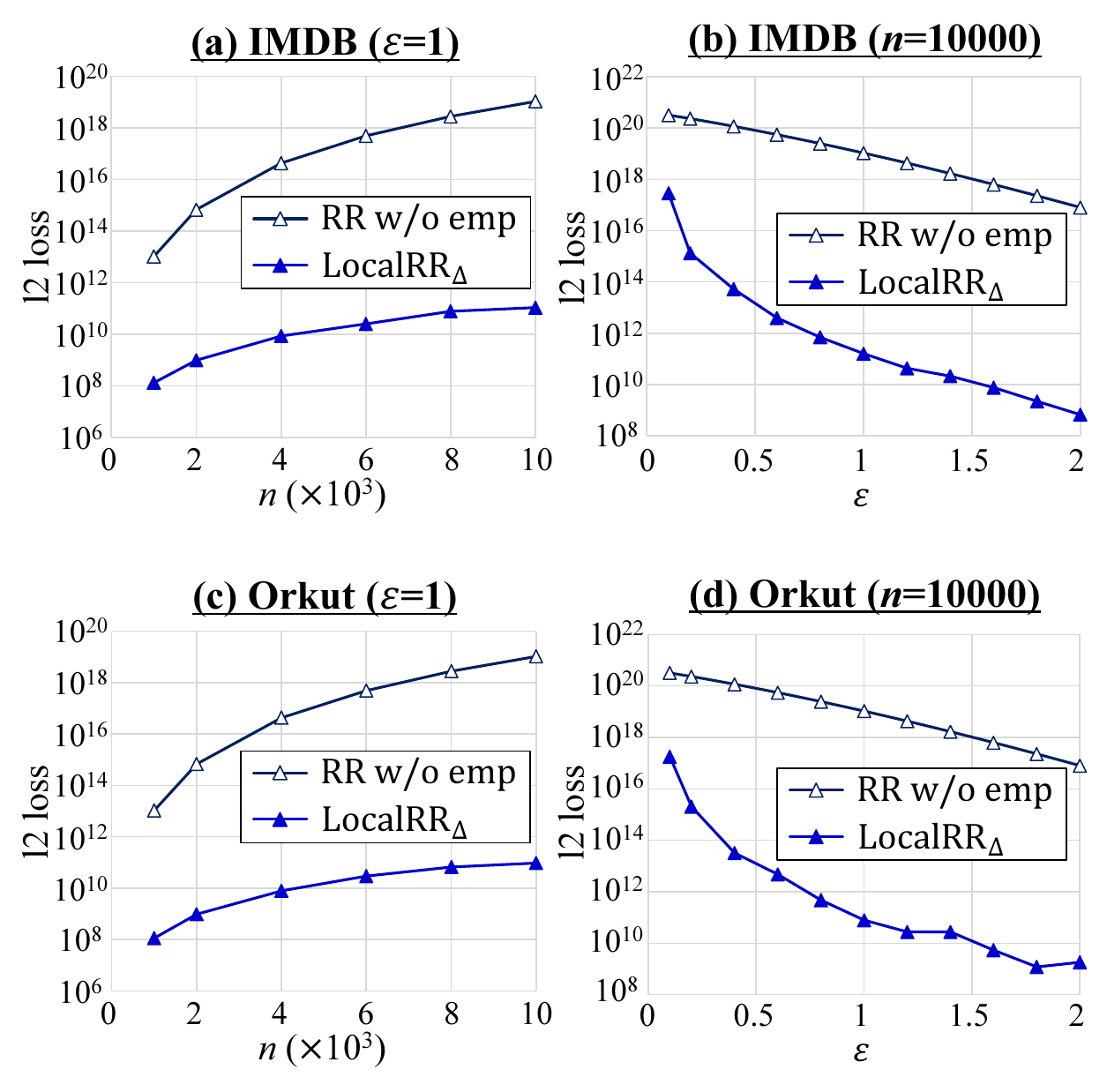}
\vspace{-5mm}
\caption{$l_2$ loss of \alg{LocalRR$_\triangle$} and the RR without empirical estimation (\alg{RR w/o emp}).}
\label{fig:res5_RR_wo_emp}
\end{figure}

\section{Experiments on Barab\'{a}si-Albert Graphs}
\label{sec:BAGraph}
\noindent{\textbf{Experimental set-up.}}~~In Section~\ref{sec:experiments}, we evaluated our algorithms using two real datasets: \IMDB{} and \Orkut{}. 
We also evaluated our algorithms using artificial graphs that have power-law degree distributions. 
We used the BA (Barab\'{a}si-Albert) model \cite{NetworkScience} to generate such graphs.

In the BA model, an artificial graph (referred to as a BA graph)
is grown by adding new nodes one at a time. 
Each new node is connected to $\lambda \in \nats$ existing nodes with probability proportional to the degree of the existing node. 
In our experiments, we used 
NetworkX \cite{Hagberg_SciPy08}, a Python package for graph analysis, to generate BA graphs.

We generated a BA graph $G^*$ with $1000000$ nodes using NetworkX. 
For the attachment parameter $\lambda$, we set $\lambda=10$ or $50$. 
When $\lambda=10$ (resp.~$50$), the average degree of $G^*$ was $10.0$ (resp.~$50.0$). 
For each case, we randomly generated $n$ users from the whole graph $G^*$, and extracted a graph $G=(V,E)$ with the $n$ users. 
Then we estimated the number of triangles $f_\triangle(G)$ and the number of $2$-stars $f_{2\star}(G)$. 
For triangles, we evaluated \alg{LocalRR$_\triangle$}, \alg{Local2Rounds$_\triangle$}, and \alg{CentralLap$\triangle$}. 
For $2$-stars, we evaluated \alg{LocalLap$_2\star$} and \alg{CentralLap$_2\star$}. 
In \alg{Local2Rounds$_\triangle$}, we set $\epsilon_1 = \epsilon_2$.
For $\td_{max}$, we set $\td_{max} = d_{max}$. 

We evaluated the $l_2$ loss while changing $n$ and $\epsilon$. 
We attempted $\gamma \in \nats$ ways to randomly select $n$ users from $G^*$, and averaged the $l_2$ loss over all the $\gamma$ ways to randomly select $n$ users. 
As with Section~\ref{sec:experiments}, we set $\gamma=100$ and changed $n$ from $1000$ to $10000$ while fixing $\epsilon=1$. 
Then we set $\gamma=10$ and changed $\epsilon$ from $0.1$ to $2$ while fixing $n=10000$.

\smallskip
\noindent{\textbf{Experimental results.}}~~Figure~\ref{fig:res6_BAGraph} shows the results. 
Overall, Figure~\ref{fig:res6_BAGraph} has a similar tendency to Figures~\ref{fig:res1_n_l2loss_tri}, \ref{fig:res1_n_l2loss_kst}, and \ref{fig:res2_eps_l2loss}. 
For example, \alg{Local2Rounds$_\triangle$} significantly outperforms \alg{LocalRR$_\triangle$}, especially when the graph $G$ is sparse; i.e., $\lambda = 10$. 
In \alg{Local2Rounds$_\triangle$}, \alg{CentralLap$\triangle$}, \alg{LocalLap$_2\star$}, and \alg{CentralLap$_2\star$}, the $l_2$ loss increases with increase in $\lambda$. 
This is because the maximum degree $d_{max}$ $(= \td_{max})$ increases with increase in $\lambda$.

\begin{figure}[t]
\centering
\includegraphics[width=0.99\linewidth]{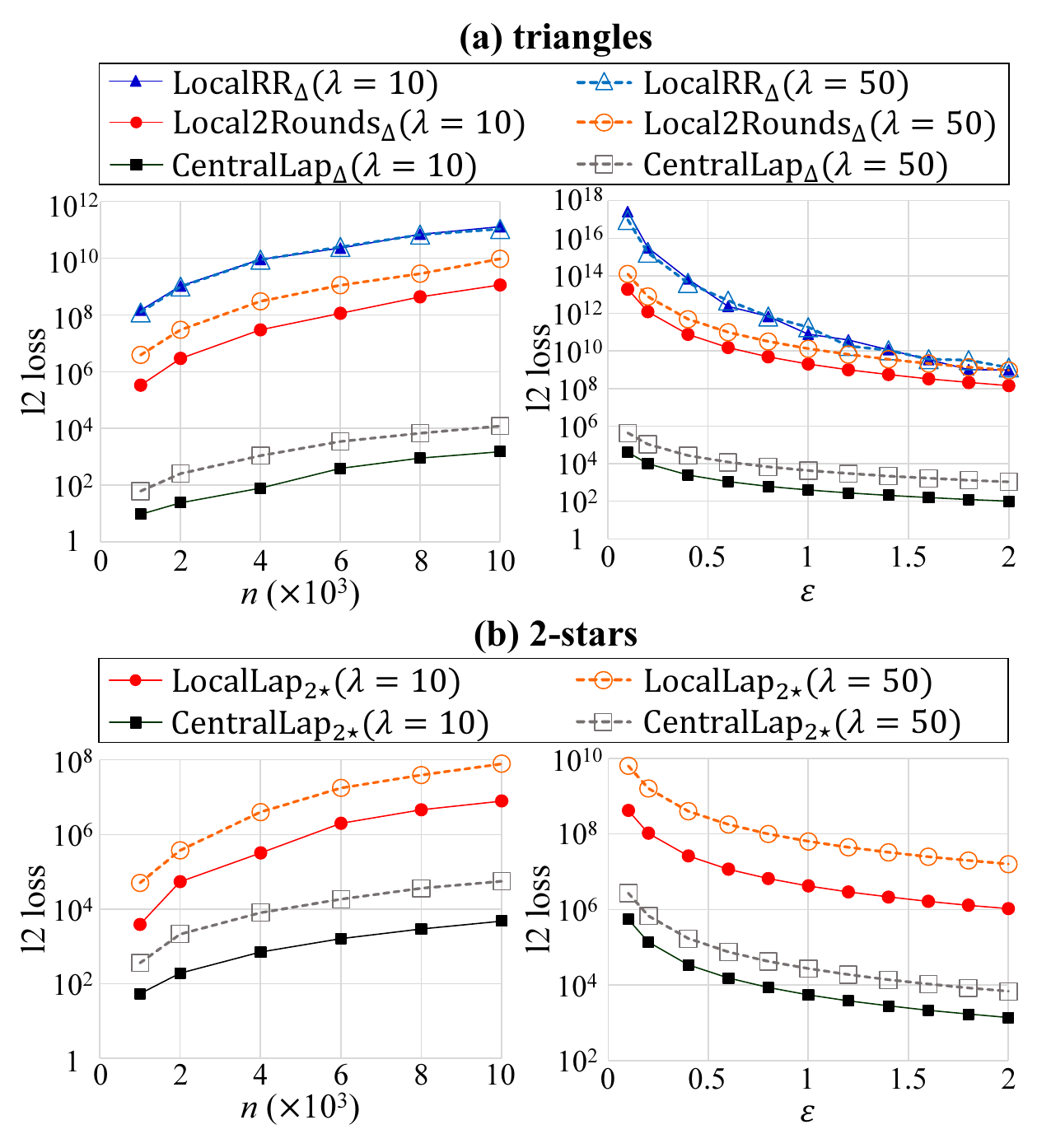}
\vspace{-2mm}
\caption{$l_2$ loss in the Barab\'{a}si-Albert graph datasets 
(left: $\epsilon=1$, right: $n=10000$). 
We set the attachment parameter $\lambda$ in the BA model to $\lambda=10$ or $50$, and $\td_{max}$ to $\td_{max} = d_{max}$.}
\label{fig:res6_BAGraph}
\end{figure}

Figure~\ref{fig:res6_BAGraph} also shows that the $l_2$ loss is roughly consistent with our upper-bounds in Section~\ref{sec:algorithms}. 
For example, recall that \alg{LocalRR$_\triangle$}, \alg{Local2Rounds$_\triangle$}, \alg{CentralLap$_\triangle$}, \alg{LocalLap$_{2\star}$}, and \alg{CentralLap$_{2\star}$} achieve the expected $l_2$ loss of $O(n^4)$, $O(nd_{max}^3)$, $O(d_{max}^2)$, $O(nd_{max}^{2})$, and $O(d_{max}^{2})$, respectively. 
Assuming that $d_{max} = O(n)$, the left panels of Figure~\ref{fig:res6_BAGraph} are roughly consistent with these upper-bounds. 
In addition, the right panels of Figure~\ref{fig:res6_BAGraph} show that when we set $\lambda=10$ and decrease $\epsilon$ from $0.4$ to $0.1$, the $l_2$ loss increases by a factor of about $3800$, $250$, and $16$ in \alg{LocalRR$_\triangle$}, \alg{Local2Rounds$_\triangle$}, and \alg{CentralLap$_\triangle$}, respectively. 
They are roughly consistent with our upper-bounds -- for small $\epsilon$, the expected $l_2$ loss of \alg{LocalRR$_\triangle$}, \alg{Local2Rounds$_\triangle$}, and \alg{CentralLap$_\triangle$} is  $O(\epsilon^{-6})$, $O(\epsilon^{-4})$, and $O(\epsilon^{-2})$, respectively.

In summary, for both the two real datasets and the BA graphs, our experimental results showed the following findings: 
(1) \alg{Local2Rounds$_\triangle$} significantly outperforms \alg{LocalRR$_\triangle$}, especially when the graph $G$ is sparse; 
(2) our experimental results are roughly consistent with our upper-bounds.

\section{Construction of an $(n, \frac{d_{max}}{2}-2)$ independent cube for $f_\triangle$}
\label{sub:cube_triangle}
Suppose that $n$ is even and $d_{max}$ is divisible by $4$.
Since $d_{max} < n$, it is possible to write 
$n = \eta_1 \frac{d_{max}}{2} + \eta_2$ 
for integers 
$\eta_1, \eta_2$
such that 
$\eta_1 \geq 1$ and $1 \leq \eta_2 < \frac{d_{max}}{2}$. 
Because 
$\eta_1 \frac{d_{max}}{2}$ 
and $n$ are even, 
we must have 
$\eta_2$ 
is even.
Now, we can write 
$n = (\eta_1-1) \frac{d_{max}}{2} + (\eta_2 + \frac{d_{max}}{2})$.
Thus, 
we can define 
a graph $G=(V,E)$ on $n$ 
nodes 
consisting of 
$(\eta_1-1)$ 
cliques of 
even 
size
$\frac{d_{max}}{2}$ and one final clique of an even size $\eta_2+\frac{d_{max}}{2} \in (\frac{d_{max}}{2}, d_{max})$ 
with all cliques disjoint. 

Since $G=(V,E)$ consists of even-sized cliques, it contains a perfect matching $M$. 
Figure~\ref{fig:mono-cube_triangle} shows examples of $G$ and $M$, where $n=14$, $d_{max} = 8$, $\eta_1 = 3$, and $\eta_2 = 2$. 
Let 
$G'=(V,E')$ such that $E' = E \setminus M$. 
Let $\calA = \{(V,E' \cup N: N \subseteq M\}$. 
Each edge in $G$ is part of at least $\frac{d_{max}}{2}-2$ triangles. 
For each pair of edges in $M$, the triangles of $G$ of which they are part
are disjoint. 
Thus, 
for any edge $e \in M$, 
removing $e$ 
from 
a graph in $\calA$ 
will remove at least $\frac{d_{max}}{2}-2$ triangles. This implies 
that $\calA$ 
is an $(n,\frac{d_{max}}{2}-2)$ independent cube for $f_\triangle$.

\begin{figure}[t]
  \centering
  \includegraphics[width=0.88\linewidth]{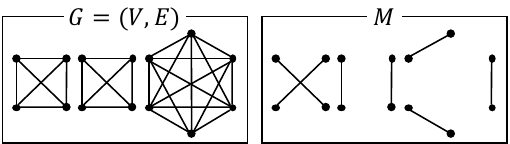}
  \vspace{-2mm}
  \caption{
    Examples of $G$ and $M$ for constructing an independent cube for $f_\triangle$ ($n=14$, $d_{max} = 8$, $\eta_1 = 3$, $\eta_2 = 2$).
  }\label{fig:mono-cube_triangle}
\end{figure}

\arxiv{
\section{Proof of Statements in Section~\ref{sec:algorithms}}
\label{sec:proof}
Here we prove the statements in Section~\ref{sec:algorithms}. 
Our proofs will repeatedly use the well-known bias-variance decomposition \cite{mlpp}, which we briefly explain below. 
We denote the variance of the random variable $X$ by $\mathbb{V}[X]$. 
If we are producing a private, randomized estimate $\hat{f}(G)$ of the graph function $f(G)$, then the expected $l_2$ loss (over the randomness in the algorithm) can be written as: 
\begin{equation}\label{eq:bias-var}
  \E[l_2^2(\hat{f}(G),f(G))] = \left(\E[\hat{f}(G)] - f(G)\right)^2
  + \V[\hat{f}(G)].
\end{equation}
The first term is the bias, and the second term is the variance. 
If the estimate is unbiased (i.e., $\E[\hat{f}(G)] = f(G)$), then the expected $l_2$ loss is equal to the variance.

\subsection{Proof of Theorem~\ref{thm:k-stars_LDP}}
Let $\calR_i$ be \alg{LocalLap$_{k\star}$}. 
Let $d_i,d'_i \in \nnints$ be the number of ``1''s in two neighbor lists $\bma_i,\bma'_i \in \{0,1\}^n$ that differ in one bit. 
Let $r_i = \binom{d_i}{k}$ and $r'_i = \binom{d'_i}{k}$. 
Below we consider two cases about $d_i$: when $d_i < \td_{max}$ and when $d_i \geq \td_{max}$.

\smallskip
\noindent{\textbf{Case 1: $d_i < \td_{max}$.}}~~In this case, both $\bma_i$ and $\bma'_i$ do not change after graph projection, as $d'_i \leq d_i + 1 \leq \td_{max}$. 
Then we obtain:
\begin{align*}
\Pr[\calR_i(\bma_i) = \hr_i] &= \exp\left(-\frac{\epsilon |\hr_i- r_i|}{\Delta}\right) \\
\Pr[\calR_i(\bma'_i) = \hr_i] &= \exp\left(-\frac{\epsilon |\hr_i- r'_i|}{\Delta}\right),
\end{align*}
where $\Delta = \binom{\td_{max}}{k-1}$. 
Therefore, 
\begin{align}
\frac{\Pr[\calR_i(\bma_i) = \hr_i]}{\Pr[\calR_i(\bma'_i) = \hr_i]} 
&= \exp\left( \frac{\epsilon |\hr_i- r'_i|}{\Delta} - \frac{\epsilon |\hr_i- r_i|}{\Delta}\right) \nonumber\\
&\leq  \exp\left( \frac{\epsilon |r'_i- r_i|}{\Delta} \right) \label{eq:Pr_R_i_a'_i_a_i}\\
& \hspace{4mm} (\text{by the triangle inequality}). \nonumber
\end{align}
If $d'_i = d_i + 1$, then $|r'_i- r_i|$ in (\ref{eq:Pr_R_i_a'_i_a_i}) can be written as follows:
\begin{align*}
|r'_i- r_i| 
= \binom{d_i+1}{k} - \binom{d_i}{k} 
= \binom{d_i}{k-1}
< \binom{\td_{max}}{k-1}
= \Delta, 
\end{align*}
Since we add $\Lap(\frac{\Delta}{\epsilon})$ to $r_i$, we obtain:
\begin{align}
\Pr[\calR_i(\bma_i) = \hr_i] \leq e^\epsilon \Pr[\calR_i(\bma'_i) = \hr_i]. 
\label{eq:R_i_a_i_hr_i}
\end{align}
If $d'_i = d_i - 1$, then $|r'_i- r_i| = \binom{d_i}{k} - \binom{d_i-1}{k} = \binom{d_i-1}{k-1} < \Delta$ and (\ref{eq:R_i_a_i_hr_i}) holds. 
Therefore, \alg{LocalLap$_{k\star}$} provides $\epsilon$-edge LDP. 

\smallskip
\noindent{\textbf{Case 2: $d_i \geq \td_{max}$.}}~~Assume 
that $d'_i = d_i + 1$. 
In this case, $d'_i > \td_{max}$. 
Therefore, $d'_i$ becomes $\td_{max}$ after graph projection. 
In addition, 
$d_i$ also becomes $\td_{max}$ after graph projection. 
Therefore, we obtain 
$d_i = d'_i = \td_{max}$ after graph projection. 
Thus 
$\Pr[\calR_i(\bma_i) = \hr_i] = \Pr[\calR_i(\bma'_i) = \hr_i]$. 

Assume that $d'_i = d_i - 1$. 
If $d_i > \td_{max}$, then $d_i = d'_i = \td_{max}$ after graph projection. 
Thus $\Pr[\calR_i(\bma_i) = \hr_i] = \Pr[\calR_i(\bma'_i) = \hr_i]$. 
If $d_i = \td_{max}$, then (\ref{eq:R_i_a_i_hr_i}) holds. 
Therefore, \alg{LocalLap$_{k\star}$} provides $\epsilon$-edge LDP. \qed

\subsection{Proof of Theorem~\ref{thm:k-stars}}
Assuming the maximum degree $d_{max}$ of $G$ is at most $\td_{max}$, the only
randomness in the algorithm will be the Laplace noise since graph projection
will not occur.
Since the Laplacian noise $\Lap(\frac{\Delta}{\epsilon})$ has mean $0$, the estimate $\hf_{k\star}(G, \epsilon, \td_{max})$ is unbiased. 
Then by the bias-variance decomposition \cite{mlpp}, 
the expected $l_2$ loss 
$\mathbb{E}[l_2^2(\hf_{k\star}(G, \epsilon, \td_{max}),\allowbreak f_{k\star}(G))]$ is equal to the variance of $\hf_{k\star}(G, \epsilon, \td_{max})$. 
The variance of $\hf_{k\star}(G, \epsilon, \td_{max})$ can be written as follows:
\begin{align*}
    \mathbb{V}[\hf_{k\star}(G, \epsilon, \td_{max})] 
    &= \mathbb{V}\left[ \sum_{i=1}^n \Lap\left( \frac{\Delta}{\epsilon} \right) \right] \\
    &= \frac{n \Delta^2}{\epsilon^2}.
\end{align*}
Since $\Delta = \binom{\td_{max}}{k-1} = O(\td_{max}^{k-1})$, we obtain:
\begin{align*}
    \mathbb{E}[l_2^2(\hf_{k\star}(G, \epsilon, \td_{max}), f_{k\star}(G))] 
    &= \mathbb{V}[\hf_{k\star}(G, \epsilon, \td_{max})] \\
    &= O\left(\frac{n \td_{max}^{2k-2}}{\epsilon^2}\right).
\end{align*}
\qed

\subsection{Proof of Proposition~\ref{prop:triangle_emp}}
Let $\mu = e^\epsilon$ and $\bmQ \in [0,1]^{4 \times 4}$ be a $4 \times 4$ matrix such that:
\begin{align}
  \bmQ = \frac{1}{(\mu+1)^3} \left(
    \begin{array}{cccc}
      \mu^3 & 3\mu^2 & 3\mu & 1 \\
      \mu^2 & \mu^3+2\mu & 2\mu^2+1 & \mu \\
      \mu & 2\mu^2+1 & \mu^3+2\mu & \mu^2 \\
      1 & 3\mu & 3\mu^2 & \mu^3
    \end{array}
  \right).
  \label{eq:Q_1}
\end{align}
Let $c_3, c_2, c_1, c_0 \in \nnints$ be respectively the number of triangles, 2-edges, 1-edge, and no-edges in $G$. 
Then we obtain:
\begin{align}
(\mathbb{E}[m_3], \mathbb{E}[m_2], \mathbb{E}[m_1],
\mathbb{E}[m_0]) = (c_3, c_2, c_1, c_0) \bmQ.
\label{eq:bmQ}
\end{align}
In other words, $\bmQ$ is a transition matrix from a type of subgraph (i.e., triangle, 2-edges, 1-edge, or no-edge) in $G$ to a type of subgraph in $G'$. 

Let $\hat{c}_3, \hat{c}_2, \hat{c}_1, \hat{c}_0 \in \reals$ be the empirical estimate of $(c_3, c_2, c_1, c_0)$. 
By (\ref{eq:bmQ}), they can be written as follows:
\begin{align}
(\hat{c}_3, \hat{c}_2, \hat{c}_1, \hat{c}_0) = (m_3, m_2, m_1, m_0) \bmQ^{-1}.
\label{eq:hc3_hc2_hc1_hc0}
\end{align}
Let $\bmQ_{i,j}^{-1}$ be the ($i,j$)-th element of $\bmQ^{-1}$. 
By using Cramer's rule, we obtain: 
\begin{align}
\bmQ_{1,1}^{-1} &= \textstyle{\frac{\mu^3}{(\mu-1)^3}},~ \bmQ_{2,1}^{-1} =  \textstyle{-\frac{\mu^2}{(\mu-1)^3}}, \label{eq:bmQ11_bmQ21}\\
\bmQ_{3,1}^{-1} &= \textstyle{\frac{\mu}{(\mu-1)^3}},~ \bmQ_{4,1}^{-1} = \textstyle{-\frac{1}{(\mu-1)^3}}.
\label{eq:bmQ31_bmQ41}
\end{align}
By (\ref{eq:hc3_hc2_hc1_hc0}), (\ref{eq:bmQ11_bmQ21}), and (\ref{eq:bmQ31_bmQ41}), we obtain:
\begin{align*}
\textstyle{\hat{c}_3 = \frac{\mu^3}{(\mu-1)^3} m_3 - \frac{\mu^2}{(\mu-1)^3} m_2 + \frac{\mu}{(\mu-1)^3} m_1 - \frac{1}{(\mu-1)^3} m_0.}
\end{align*}
Since $\mu = e^\epsilon$ and the empirical estimate is unbiased \cite{Kairouz_ICML16,Wang_USENIX17}, we obtain (\ref{eq:triangle_emp}) in Proposition~\ref{prop:triangle_emp}. \qed

\subsection{Proof of Theorem~\ref{thm:subgraph-rr_LDP}}
Since \alg{LocalRR$_\triangle$} applies the RR to the lower triangular part of the adjacency matrix $\bmA$, it provides $\epsilon$-edge LDP for $(R_1, \ldots, R_n)$. 
Lines 5 to 8 in Algorithm~\ref{alg:subgraph-rr} are post-processing of $(R_1, \ldots, R_n)$. 
Thus, by the immunity to post-processing \cite{DP}, \alg{LocalRR$_\triangle$} provides $\epsilon$-edge LDP for the output $\frac{1}{(\mu-1)^3}(\mu^3 m_3 -\mu^2 m_2 + \mu m_1 - m_0)$. 

In addition, the existence of edge $(v_i,v_j) \in E$ $(i>j)$ affects only one element $a_{i,j}$ in the lower triangular part of $\bmA$. 
Therefore, \alg{LocalRR$_\triangle$} provides $\epsilon$-relationship DP.

\subsection{Proof of Theorem~\ref{thm:subgraph-rr}}
\label{sub:proof_thm:subgraph-rr}
By Proposition~\ref{prop:triangle_emp}, the estimate $\hf_{\triangle}(G, \epsilon)$ by \alg{LocalRR$_\triangle$} is unbiased. 
Then by the bias-variance decomposition \cite{mlpp}, 
the expected $l_2$ loss $\mathbb{E}[l_2^2(\hf_{\triangle}(G, \epsilon), f_\triangle(G))]$ is equal to the variance of $\hf_{\triangle}(G, \epsilon)$. 
Let 
$a_3 = \frac{\mu^3}{(\mu-1)^3}$, 
$a_2 = - \frac{\mu^2}{(\mu-1)^3}$, 
$a_1 = \frac{\mu}{(\mu-1)^3}$, and 
$a_0 = - \frac{1}{(\mu-1)^3}$. 
Then the variance of $\hf_{\triangle}(G, \epsilon)$ can be written as follows:
\begin{align}
    \V[\hf_{\triangle}(G, \epsilon)] 
    &= \V[a_3 m_3 + a_2 m_2 + a_1 m_1 + a_0 m_0] \nonumber\\
    &= a_3^2 \V[m_3] + a_2^2 \V_{RR}[m_2] + a_1^2 \V[m_1] + a_0^2 \V[m_0] \nonumber\\
    &\hspace{3.5mm} + \sum_{i=0}^3 \sum_{j=0, j \ne i}^3 2a_i a_j \text{cov}(m_i, m_j),
    \label{eq:V_a3m3_a2m2_a1m1_a0m0}
\end{align}
where $\text{cov}(m_i, m_j)$ represents the covariance of $m_i$ and $m_j$. 
The covariance $\text{cov}(m_i, m_j)$ can be written as follows:
\begin{align}
    \text{cov}(m_i, m_j)
    &\leq \sqrt{\V[m_i] \V[m_j]} \nonumber\\
    &\hspace{4.2mm} (\text{by Cauchy-Schwarz inequality}) \nonumber\\
    &\leq  \max\{ \V[m_i], \V[m_j]\} \nonumber\\
    &\leq \V[m_i] + \V[m_j].
    \label{eq:cov_mi_mj}
\end{align}
By (\ref{eq:V_a3m3_a2m2_a1m1_a0m0}) and (\ref{eq:cov_mi_mj}), we obtain:
\begin{align}
    &\V[\hf_{\triangle}(G, \epsilon)] \nonumber\\
    &\leq (a_3^2 + 4a_3(a_2 + a_1 + a_0)) \V[m_3] \nonumber\\
    & \hspace{4.5mm} + (a_2^2 + 4a_2(a_3 + a_1 + a_0)) \V[m_2] \nonumber\\
    & \hspace{4.5mm} + (a_1^2 + 4a_1(a_3 + a_2 + a_0)) \V[m_1] \nonumber\\
    & \hspace{4.5mm} + (a_0^2 + 4a_0(a_3 + a_2 + a_1)) \V[m_0] \nonumber\\
    &= O\left( \frac{e^{6\epsilon}}{(e^\epsilon-1)^6} (\V[m_3] + \V[m_2] + \V[m_1] + \V[m_0]) \right).
    \label{V_RR_m_3}
\end{align}

Below we calculate $\V[m_3]$, $\V[m_2]$, $\V[m_1]$, and $\V[m_0]$ by assuming the Erd\"{o}s-R\'{e}nyi model $\bmG(n, \alpha)$ for $G$:

\begin{lemma}\label{lem:erdos-renyi-variance}
  Let $G \sim \textbf{G}(n,\alpha)$. 
  Let $p = \frac{1}{e^\epsilon+1}$ and 
  $\beta = \alpha(1-p) + (1-\alpha)p$. 
  Then $\V[m_3] = O(\beta^5 n^4 + \beta^3
  n^3)$, $\V[m_2] = O(\beta^3 n^4 + \beta^2 n^3)$, and 
  $\V[m_1] = \V[m_0] = O(\beta n^4)$.
\end{lemma}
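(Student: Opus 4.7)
The plan is to exploit the fact that, starting from $G\sim\bmG(n,\alpha)$, applying the RR independently to each bit of the lower-triangular part of $\bmA$ yields a noisy graph whose potential edges are again independent Bernoullis, each present in $G'$ with probability $\beta=\alpha(1-p)+(1-\alpha)p$. In particular $G'\sim\bmG(n,\beta)$, so computing $\V[m_j]$ reduces to a standard second-moment analysis of subgraph counts in an Erd\"os-R\'enyi graph.

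Write $m_j=\sum_{T}X_T^{(j)}$ for $j\in\{0,1,2,3\}$, where $T$ ranges over the $\binom{n}{3}$ unordered vertex triples and $X_T^{(j)}$ is the indicator that the induced subgraph on $T$ has exactly $j$ edges in $G'$. Then
\begin{equation*}
\V[m_j]=\sum_{T}\V\bigl[X_T^{(j)}\bigr]+\sum_{T\ne T'}\cov\bigl(X_T^{(j)},X_{T'}^{(j)}\bigr).
\end{equation*}
Because $X_T^{(j)}$ depends only on the three potential edges inside $T$, and potential edges are independent across distinct vertex pairs, the covariance vanishes whenever $|T\cap T'|\le 1$. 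The only contributing off-diagonal pairs are those with $|T\cap T'|=2$ (which share exactly one potential edge), and there are $O(n^4)$ of these; the diagonal contributes $O(n^3)$ terms.

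The remaining work is to bound the single-triple variance $\V[X_T^{(j)}]$ and the shared-edge covariance $\cov(X_T^{(j)},X_{T'}^{(j)})$ in powers of $\beta$ for each $j$. I would do this by conditioning on whether the shared edge is present in $G'$ and using independence of the remaining four edges. For $j=3$ one obtains $\E[X_TX_{T'}]=\beta^5$, hence $\cov=\beta^5-\beta^6=O(\beta^5)$, and $\V[X_T^{(3)}]=\beta^3(1-\beta^3)=O(\beta^3)$, producing the stated $\V[m_3]=O(\beta^5 n^4+\beta^3 n^3)$. The $j=2$ case gives $\cov=O(\beta^3)$ and $\V[X_T^{(2)}]=O(\beta^2)$. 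For $j=1$ both are $O(\beta)$. For $j=0$, the telescoping $(1-\beta)^5-(1-\beta)^6=\beta(1-\beta)^5=O(\beta)$ together with $\V[X_T^{(0)}]=(1-\beta)^3(1-(1-\beta)^3)=O(\beta)$ yields $\V[m_0]=O(\beta n^4)$.

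The main obstacle is the $j=0$ bookkeeping: since the marginal $\E[X_T^{(0)}]=(1-\beta)^3$ is $\Theta(1)$, a loose bound $\cov\le\E[X_TX_{T'}]=O(1)$ would give only $\V[m_0]=O(n^4)$. Extracting the extra factor of $\beta$ requires actually subtracting $\E[X_T]\E[X_{T'}]$ and exploiting the telescoping cancellation above; analogous (but milder) cancellations drive the tighter bounds for $j=1,2$ as well. Once these four pairs of estimates are in hand, the two sums in the decomposition combine to give the claimed $O(\cdot)$ expressions.
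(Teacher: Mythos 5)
Your proposal is correct and follows essentially the same route as the paper: both reduce to a second-moment computation for the counts $m_j$ in $G'\sim\bmG(n,\beta)$, classified by how many vertices two triples share, with the tight bounds for $j=0,1,2$ coming from exactly the subtraction of $\E[X_T]\E[X_{T'}]$ that you describe. The only difference is organizational --- you note that triples sharing at most one vertex have independent indicators, so their covariance vanishes outright, whereas the paper expands $\E[m_j^2]$ over all four overlap classes and recovers the same cancellation via binomial-coefficient identities; your bookkeeping is the cleaner of the two.
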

Before going into the proof of Lemma~\ref{lem:erdos-renyi-variance}, we prove Theorem~\ref{thm:subgraph-rr} using Lemma~\ref{lem:erdos-renyi-variance}. 
By (\ref{V_RR_m_3}) and Lemma~\ref{lem:erdos-renyi-variance}, we obtain: 
\begin{align*}
\V[\hf_{\triangle}(G, \epsilon)] = O\left( \frac{e^{6\epsilon}}{(e^\epsilon-1)^6} \beta n^4 \right),
\end{align*}
which proves Theorem~\ref{thm:subgraph-rr}. \qed

We now prove Lemma~\ref{lem:erdos-renyi-variance}:

\begin{proof}[Proof of Lemma~\ref{lem:erdos-renyi-variance}]
Fist we show the variance of $m_3$ and $m_0$. 
Then we show the variance of $m_2$ and $m_1$.

\smallskip
\noindent{\textbf{Variance of $m_3$ and $m_0$.}}~~Since each edge in the original graph $G$ is independently generated with probability $\alpha \in [0,1]$, each edge in the noisy graph $G'$ is independently generated with probability $\beta = \alpha (1-p) + (1 - \alpha) p \in [0,1]$, where $p=\frac{1}{e^\epsilon+1}$. 
Thus $m_3$ is the number of triangles in graph $G' \sim \textbf{G}(n,\beta)$.

For $i,j,k \in [n]$, let $y_{i,j,k} \in \{0,1\}$ be a variable that takes $1$ if and only if 
$(v_i, v_j, v_k)$ forms a triangle. 
Then $\mathbb{E}[m_3^2]$ can be written as follows:
\begin{align}
  \mathbb{E}[m_3^2] = \sum_{i<j<k} ~ \sum_{i'<j'<k'}
  \mathbb{E}[y_{i,j,k} y_{i',j',k'}] 
  \label{eq:E_RR_tm3}
\end{align}
$\mathbb{E}[y_{i,j,k} y_{i',j',k'}]$ in (\ref{eq:E_RR_tm3}) is the probability that both $(v_i,v_j,v_k)$ and $(v_{i'},v_{j'},v_{k'})$ form a triangle. 
This event can be divided into the following four types:
\begin{enumerate}
\item $(i,j,k)=(i',j',k')$. There are $\binom{n}{3}$ such terms in (\ref{eq:E_RR_tm3}). 
For each term, $\mathbb{E}[y_{i,j,k} y_{i',j',k'}] = \beta^3$.
\item $(i,j,k)$ and $(i',j',k')$ have two elements in common. 
There are $\binom{n}{2} (n-2) (n-3) = 12\binom{n}{4}$ such terms in (\ref{eq:E_RR_tm3}). 
For each term, $\mathbb{E}[y_{i,j,k} y_{i',j',k'}] = \beta^5$. 
\item $(i,j,k)$ and $(i',j',k')$ have one element in common. 
There are $n \binom{n-1}{2} \binom{n-3}{2} = 30\binom{n}{5}$ such terms in (\ref{eq:E_RR_tm3}). 
For each term, $\mathbb{E}[y_{i,j,k} y_{i',j',k'}] = \beta^6$. 
\item $(i,j,k)$ and $(i',j',k')$ have no common elements. 
There are $\binom{n}{3} \binom{n-3}{3} = 20\binom{n}{6}$ such terms in in (\ref{eq:E_RR_tm3}). 
For each term, $\mathbb{E}[y_{i,j,k} y_{i',j',k'}] = \beta^6$. 
\end{enumerate}
Moreover, $\mathbb{E}[m_3]^2 = \binom{n}{3}^2 \beta^6$. 
Therefore, the variance of $m_3$ can be written as follows:
\begin{align*}
    \V[m_3] 
    &= \textstyle{\binom{n}{3} \beta^3 + 12\binom{n}{4} \beta^5 + 30\binom{n}{5} \beta^6 + 20\binom{n}{6} \beta^6 - \binom{n}{3}^2 \beta^6} \\
    &= \textstyle{\binom{n}{3} \beta^3 (1-\beta^3) + 12\binom{n}{4}\beta^5(1-\beta)} \\
    &= O(\beta^5 n^4 + \beta^3 n^3).
\end{align*}

By changing $\beta$ to $1-\beta$ and counting triangles, we get a random variable with the same distribution as $m_0$. Thus,
\begin{align*}
  \V[m_0] 
  &= \textstyle{\binom{n}{3} (1-\beta)^3(1-(1-\beta)^3) + 12\binom{n}{4} (1-\beta)^5\beta}
  \\
  &= O(\beta n^4).
\end{align*}
\smallskip
\noindent{\textbf{Variance of $m_2$ and $m_1$.}}~~For $i,j,k \in [n]$, let $z_{i,j,k} \in \{0,1\}$ be a variable that takes $1$ if and only if 
$(v_i, v_j, v_k)$ forms $2$-edges (i.e., exactly one edge is missing in the three nodes). 
Then $\mathbb{E}[m_2^2]$ can be written as follows:
\begin{align}
  \mathbb{E}[m_2^2] = \sum_{i<j<k} \sum_{i'<j'<k'}
  \mathbb{E}[z_{i,j,k} z_{i',j',k'}] 
  \label{eq:E_RR_tm2}
\end{align}
$\mathbb{E}[z_{i,j,k} z_{i',j',k'}]$ in (\ref{eq:E_RR_tm2}) is the probability that both $(v_i,v_j,v_k)$ and $(v_{i'},v_{j'},v_{k'})$ form $2$-edges. 
This event can be divided into the following four types:
\begin{enumerate}
	\item $(i,j,k)=(i',j',k')$. There are $\binom{n}{3}$ such terms in (\ref{eq:E_RR_tm2}). 
    For each term, $\mathbb{E}[z_{i,j,k} z_{i',j',k'}]=3\beta^2(1-\beta)$. 
	\item $(i,j,k)$ and $(i',j',k')$ have two elements in common. 
	There are $\binom{n}{2}(n-2)(n-3) = 12 \binom{n}{4}$ such terms in (\ref{eq:E_RR_tm2}). 
	For example, consider a term in which $i=i'=1$, $j=j'=2$, $k=3$, and $k'=4$. 
	Both $(v_1,v_2,v_3)$ and $(v_1,v_2,v_4)$ form 2-edges if:\\
	(a) $(v_1,v_2), (v_1,v_3), (v_1,v_4) \in E'$, $(v_2,v_3), (v_2,v_4) \notin E'$, \\
	(b) $(v_1,v_2), (v_1,v_3), (v_2,v_4) \in E'$, $(v_2,v_3), (v_1,v_4) \notin E'$, \\
	(c) $(v_1,v_2), (v_2,v_3), (v_1,v_4) \in E'$, $(v_1,v_3), (v_2,v_4) \notin E'$, \\
	(d) $(v_1,v_2), (v_2,v_3), (v_2,v_4) \in E'$, $(v_1,v_3), (v_1,v_4) \notin E'$, or \\
	(e) $(v_1,v_3), (v_1,v_4), (v_2,v_3), (v_2,v_4) \in E'$, $(v_1,v_2) \notin E'$. \\
    Thus, $\mathbb{E}[z_{i,j,k} z_{i',j',k'}]=4\beta^3(1-\beta)^2 + \beta^4(1-\beta)$ for this term. 
    Similarly, $\mathbb{E}[z_{i,j,k} z_{i',j',k'}]=4\beta^3(1-\beta)^2 + \beta^4(1-\beta)$ for the other terms.
	\item $(i,j,k)$ and $(i',j',k')$ have one element in common. 
	There are $n \binom{n-1}{2} \binom{n-3}{2} = 30\binom{n}{5}$ such terms in (\ref{eq:E_RR_tm2}).  For each term, $\mathbb{E}[z_{i,j,k}z_{i',j',k'}]=(3\beta^2(1-\beta))^2 = 9\beta^4(1-\beta)^2$. 
	\item $(i,j,k)$ and $(i',j',k')$ have no common elements. 
	There are $\binom{n}{3}\binom{n-3}{3} = 20\binom{n}{6}$ such terms in (\ref{eq:E_RR_tm2}). 
	For each term, $\mathbb{E}[z_{i,j,k}z_{i',j',k'}]=(3\beta^2(1-\beta))^2 = 9\beta^4(1-\beta)^2$. 
\end{enumerate}
Moreover, $\mathbb{E}[m_2]^2 = (3\binom{n}{3}\beta^2(1-\beta))^2 = 9\binom{n}{3}^2\beta^4(1-\beta)^2$. 
Therefore, the variance of $m_2$ can be written as follows:
\begin{align*}
  \mathbb{V}[m_2] 
  &= \mathbb{E}[m_2^2] - \mathbb{E}[m_2]^2 \nonumber\\
  &= \textstyle{3\binom{n}{3}\beta^2(1-\beta) + 12\binom{n}{4}\left(4\beta^3(1-\beta)^2 + \beta^4(1-\beta)\right)} \nonumber\\
  &\hspace{3.5mm} \textstyle{+ 270\binom{n}{5}\beta^4(1-\beta)^2 + 180\binom{n}{6}\beta^4(1-\beta)^2} \nonumber\\
  &\hspace{3.5mm} \textstyle{- 9\binom{n}{3}^2\beta^4(1-\beta)^2.}
\end{align*}
By simple calculations,
\begin{align*}
\textstyle{270\binom{n}{5} + 180\binom{n}{6} - 9\binom{n}{3}^2 = -108\binom{n}{4}-9\binom{n}{3}.}
\end{align*}
Thus we obtain:
\begin{align*}
\mathbb{V}[m_2] 
&= \textstyle{3\binom{n}{3}\beta^2(1-\beta)\left(1 - 3\beta^2(1-\beta)\right)} \nonumber\\
&\hspace{3.5mm} \textstyle{+ 12\binom{n}{4} \beta^3(1-\beta) \left(4(1-\beta) + \beta - 9\beta(1-\beta) \right)} \\
&= O(\beta^3 n^4 + \beta^2n^3).
\end{align*}
Similarly, the variance of $m_1$ can be written as follows:
\begin{align*}
\mathbb{V}[m_1] 
&= \textstyle{3\binom{n}{3}\beta(1-\beta)^2\left(1 - 3\beta(1-\beta)^2\right)} \nonumber\\
&\hspace{3.5mm} \textstyle{+ 12\binom{n}{4} \beta(1-\beta)^3 \left(4\beta +
(1-\beta) - 9\beta(1-\beta) \right)} \\
&= O(\beta n^4).
\end{align*}
\end{proof}

\subsection{Proof of Proposition~\ref{prop:triangle_emp_2rounds}}
Let $t_* = \sum_{i=1}^n t_i$ and $s_* = \sum_{i=1}^n s_i$. 
Let $s_*^{\wedge}$ be the number of triplets $(v_i,v_j,v_k)$ such that $j<k<i$, $a_{i,j} = a_{i,k} = 1$, and $a_{j,k} = 0$. 
Let $s_*^{\triangle}$ be the number of triplets $(v_i,v_j,v_k)$ such that $j<k<i$, $a_{i,j} = a_{i,k} = a_{j,k} =1$. 
Note that 
$s_* = s_*^{\wedge} + s_*^{\triangle}$ and 
$s_*^{\triangle} = f_\triangle(G)$. 

Consider a triangle $(v_i,v_j,v_k) \in G$. 
This triangle is counted $1-p_1$ ($= \frac{e^{\epsilon_1}}{e^{\epsilon_1}+1}$) times in expectation in $t_*$. 
Consider $2$-edges $(v_i,v_j,v_k) \in G$ (i.e., exactly one edge is missing in the three nodes). 
This is counted $p_1$ ($= \frac{1}{e^{\epsilon_1}+1}$) times in expectation in $t_*$.  
No other events can change $t_*$. 
Therefore, we obtain:
\begin{align*}
\mathbb{E}[t_*] = (1-p_1) s_*^{\triangle} + p_1 s_*^{\wedge}. 
\end{align*}
By $s_* = s_*^{\wedge} + s_*^{\triangle}$ and 
$s_*^{\triangle} = f_\triangle(G)$, we obtain:
\begin{align*}
\mathbb{E}\left[\sum_{i=1}^n w_i \right] 
&= \mathbb{E}\left[\sum_{i=1}^n (t_i - p_1 s_i) \right] \\
&= \mathbb{E}[t_* - p_1 s_*] \\
&= \mathbb{E}[t_*] - p_1 \mathbb{E}[s_*^{\wedge} + s_*^{\triangle}] \\
&= (1-p_1) s_*^{\triangle} + p_1 s_*^{\wedge} - p_1 (s_*^{\wedge} + s_*^{\triangle}) \\
&= (1 - 2 p_1) f_\triangle(G),
\end{align*}
hence 
\begin{align*}
\textstyle{\mathbb{E}\left[ \frac{1}{1-2p_1} \sum_{i=1}^n w_i \right] = f_\triangle(G).}
\end{align*}
\qed

\subsection{Proof of Theorem~\ref{thm:local2rounds_LDP}}
Let $\calR_i$ be \alg{Local2Rounds$_{\triangle}$}. 
Consider two neighbor lists $\bma_i,\bma'_i \in \{0,1\}^n$ that differ in one bit. 
Let $d_i$ (resp.~$d'_i$) $\in \nnints$ be the number of ``1''s in $\bma_i$ (resp.~$\bma'_i$). 
Let $\bbma_i$ (resp.~$\bbma'_i$) $\in \{0,1\}^n$ be neighbor lists obtained by setting all of the $i$-th to the $n$-th elements in $\bma_i$ (resp.~$\bma'_i$) to $0$. 
Let $\bd_i$ (resp.~$\bd'_i$) $\in \nnints$ be the number of ``1''s in $\bbma_i$ (resp.~$\bbma'_i$). 
For example, if $n=6$, $\bma_4=(1,0,1,0,1,1)$, and $\bma'_4=(1,1,1,0,1,1)$, then 
$d_4=4$, $d'_4=5$, $\bbma_4=(1,0,1,0,0,0)$, $\bbma'_4=(1,1,1,0,0,0)$, $\bd_4=2$, and $\bd'_4=3$. 

Furthermore, 
let $t_i$ (resp.~$t'_i$) $\in \nnints$ be the number of triplets $(v_i, v_j, v_k)$ such that $j < k < i$, $(v_i,v_j) \in E$, $(v_i,v_k) \in E$, and $(v_j,v_k) \in E'$ in $\bma_i$ (resp.~$\bma'_i$). 
Let $s_i$ (resp.~$s'_i$) $\in \nnints$ be the number of triplets $(v_i, v_j, v_k)$ such that $j < k < i$, $(v_i,v_j) \in E$, and $(v_i,v_k) \in E$ in $\bma_i$ (resp.~$\bma'_i$). 
Let $w_i = t_i - p_1 s_i$ and $w'_i = t'_i - p_1 s'_i$. 
Below we consider two cases about $d_i$: when $d_i < \td_{max}$ and when $d_i \geq \td_{max}$. 

\smallskip
\noindent{\textbf{Case 1: $d_i < \td_{max}$.}}~~Assume that $d'_i = d_i + 1$. 
In this case, we have either $\bbma'_i = \bbma_i$ or $\bd'_i = \bd_i+1$. 
If $\bbma'_i = \bbma_i$, then $s_i = s'_i$, $t_i = t'_i$, and $w_i = w'_i$, hence $\Pr[\calR_i(\bma_i) = \hw_i] = \Pr[\calR_i(\bma'_i) = \hw_i]$. 
If $\bd'_i = \bd_i+1$, then $s_i$ and $s'_i$ can be expressed as $s_i = \binom{\bd_i}{2}$ and $s'_i = \binom{\bd'_i}{2} = \binom{\bd_i+1}{2}$, respectively. 
Then we obtain:
\begin{align*}
s'_i - s_i = \binom{\bd_i+1}{2} - \binom{\bd_i}{2} = \bd_i.
\end{align*}
In addition, since we consider an additional constraint ``$(v_j,v_k) \in E'$'' in counting $t_i$ and $t'_i$, 
we have $t'_i - t_i \leq s'_i - s_i$. 
Therefore, 
\begin{align*}
|w'_i - w_i| 
&= |t'_i - t_i - p_1 (s'_i - s_i)| \\
&\leq (1 - p_1) \bd_i \\
&\leq (1 - p_1) d_i \\
&< \td_{max} \hspace{5mm} \text{(by $p_1 > 0$ and $d_i < \td_{max}$)}.
\end{align*}
Since we add $\Lap(\frac{\td_{max}}{\epsilon_2})$ to $w_i$, we obtain:
\begin{align}
\Pr[\calR_i(\bma_i) = \hw_i] \leq e^{\epsilon_2} \Pr[\calR_i(\bma'_i) = \hw_i]. 
\label{eq:R_i_a_i_hr_i_2}
\end{align}

Assume that $d'_i = d_i - 1$. 
In this case, we have either $\bbma'_i = \bbma_i$ or $\bd'_i = \bd_i-1$. 
If $\bbma'_i = \bbma_i$, then $\Pr[\calR_i(\bma_i) = \hw_i] = \Pr[\calR_i(\bma'_i) = \hw_i]$. 
If $\bd'_i = \bd_i-1$, then we obtain $s_i - s'_i = \bd_i - 1$ and $t_i - t'_i \leq s_i - s'_i$. 
Thus $|w'_i - w_i| \leq (1-p_1) (\td_i - 1) < \td_{max}$ and (\ref{eq:R_i_a_i_hr_i_2}) holds. 
Therefore, \alg{Local2Rounds$_{\triangle}$} provides $\epsilon_2$-edge LDP at the second round. 
Since \alg{Local2Rounds$_{\triangle}$} provides $\epsilon_1$-edge LDP at the first round (by Theorem~\ref{thm:subgraph-rr_LDP}), it provides $(\epsilon_1 + \epsilon_2)$-edge LDP in total by the composition theorem \cite{DP}. 

\smallskip
\noindent{\textbf{Case 2: $d_i \geq \td_{max}$.}}~~Assume that $d'_i = d_i + 1$. 
In this case, we obtain $d_i = d'_i = \td_{max}$ after graph projection. 

Note that $\bma_i$ and $\bma'_i$ can differ in \textit{zero or two bits} after graph projection. 
For example, consider the case where $n=8$, $\bma_5=(1,1,0,1,0,1,1,1)$, $\bma'_5=(1,1,1,1,0,1,1,1)$, and $\td_{max}=4$. 
If the permutation is 1,4,6,8,2,7,5,3, then $\bma_5=\bma'_5=(1,0,0,1,0,1,0,1)$ after graph projection. 
However, if the permutation is 3,1,4,6,8,2,7,5, then $\bma_5$ and $\bma'_5$ become $\bma_5=(1,0,0,1,0,1,0,1)$ and $\bma'_5=(1,0,1,1,0,1,0,0)$, respectively; i.e., they differ in the third and eighth elements. 

If $\bma_i=\bma'_i$, then $\Pr[\calR_i(\bma_i) = \hw_i] = \Pr[\calR_i(\bma'_i) = \hw_i]$. 
If $\bma_i$ and $\bma'_i$ differ in two bits, $\bbma_i$ and $\bbma'_i$ differ in \textit{at most two bits} (because we set all of the $i$-th to the $n$-th elements in $\bma_i$ and $\bma'_i$ to $0$). 
For example, we can consider the following three cases:
\begin{itemize}
    \item If $\bma_5=(1,0,0,1,0,1,0,1)$ and $\bma'_5=(1,0,0,1,0,1,1,0)$, then $\bbma_5=\bbma'_5=(1,0,0,1,0,0,0,0)$. 
    \item If $\bma_5=(1,0,0,1,0,1,0,1)$ and $\bma'_5=(1,0,1,1,0,1,0,0)$, then $\bbma_5=(1,0,0,1,0,0,0,0)$ and $\bbma'_5=(1,0,1,1, 0,0,\allowbreak0,0)$; i.e., they differ in one bit. 
    \item If $\bma_5=(1,1,0,1,0,1,0,0)$ and $\bma'_5=(1,0,1,1,0,1,0,0)$, then $\bbma_5=(1,1,0,1,0,0,0,0)$ and $\bbma'_5=(1,0,1,1,0,0,\allowbreak0,0)$; i.e., they differ in two bits.
\end{itemize}
If $\bbma_i=\bbma'_i$, then $\Pr[\calR_i(\bma_i) = \hw_i] = \Pr[\calR_i(\bma'_i) = \hw_i]$. 
If $\bbma_i$ and $\bbma'_i$ differ in one bit, then $\bd'_i = \bd_i + 1$. 
In this case, we obtain (\ref{eq:R_i_a_i_hr_i_2}) in the same way as \textbf{Case 1}. 

We need to be careful when $\bbma_i$ and $\bbma'_i$ differ in two bits. 
In this case, $\bd'_i = \bd_i$ (because $d_i = d'_i = \td_{max}$ after graph projection). 
Then we obtain $s_i = s'_i = \binom{\td_{max}}{2}$. 
Since the number of $2$-stars that involve a particular user in $\bbma_i$ is $\bd_i - 1$, we obtain $t'_i - t_i \leq \bd_i - 1$. Therefore,
\begin{align*}
|w'_i - w_i| = |t'_i - t_i| \leq \bd_i - 1 < \td_{max},
\end{align*}
and (\ref{eq:R_i_a_i_hr_i_2}) holds. 
Therefore, if $d'_i = d_i + 1$, then 
\alg{Local2Rounds$_{\triangle}$} provides $(\epsilon_1 + \epsilon_2)$-edge LDP in total. 

Assume that $d'_i = d_i - 1$. 
If $d_i > \td_{max}$, then $d_i = d'_i = \td_{max}$ after graph projection. 
Thus \alg{Local2Rounds$_{\triangle}$} provides $(\epsilon_1 + \epsilon_2)$-edge LDP in total in the same as above. 
If $d_i = \td_{max}$, then we obtain (\ref{eq:R_i_a_i_hr_i_2}) in the same way as \textbf{Case 1}, and therefore \alg{Local2Rounds$_{\triangle}$} provides $(\epsilon_1 + \epsilon_2)$-edge LDP in total.

\smallskip
In summary, \alg{Local2Rounds$_{\triangle}$} provides $(\epsilon_1 + \epsilon_2)$-edge LDP in both \textbf{Case 1} and \textbf{Case 2}. 
\alg{Local2Rounds$_{\triangle}$} also provides $(\epsilon_1 + \epsilon_2)$-relationship DP 
because it uses only the lower triangular part of the adjacency matrix $\bmA$. \qed

\subsection{Proof of Theorem~\ref{thm:local2rounds}}
  When the maximum degree $d_{max}$ of $G$ is at most $\tilde{d}_{max}$, no graph
  projection will occur.
  By Proposition~\ref{prop:triangle_emp_2rounds}, the estimate
  $f_\triangle(G,\varepsilon)$ by \alg{Local2Rounds$_\triangle$} is unbiased.

  By bias-variance
  decomposition~\eqref{eq:bias-var}, the expected $l_2$ loss
  $\E[l_2^2(\hf_\triangle(G, \epsilon), f_\triangle(G))]$ is equal to
  $\V[\hf_\triangle(G,\epsilon)]$.
  Recall that $p_1 = \frac{1}{1+e^{\epsilon_1}}$.
  $\V[\hf_\triangle(G,\epsilon)]$ 
  can be written as follows:
  \begin{align}
    &\V[\hf_\triangle(G,\epsilon)] \\
    &= \textstyle{\frac{1}{(1-2p_1)^2}\V\left[\sum_{i=1}^n
    \hat{w}_i\right]} \nonumber \\
    &= \textstyle{\frac{1}{(1-2p_1)^2}\V\left[\sum_{i=1}^n
    t_i - p_1 s_i +
    \Lap(\frac{\tilde{d}_{max}(1-p_1)}{\varepsilon_2})\right]} \nonumber \\
    &= \textstyle{\frac{1}{(1-2p_1)^2}\left(\V\left[\sum_{i=1}^n
    t_i - p_1 s_i \right] +
    \V\left[\sum_{i=1}^n
    \Lap(\frac{\tilde{d}_{max}(1-p_1)}{\varepsilon_2})\right]\right)} \nonumber \\
    &= \textstyle{\frac{1}{(1-2p_1)^2}\V\left[\sum_{i=1}^n
    t_i \right] + \frac{n}{(1-2p_1)^2}
    2\frac{\tilde{d}_{max}^2(1-p_1)^2}{\varepsilon_2^2}}.\label{eq:local2rounds-var}
  \end{align}

  In the last line, we are able to get rid of the $s_i$'s because they are
  deterministic. We are also able to sum the variances of the $\Lap$ random
  variables since they are independent; we are not able to do the same with the
  sum of the $t_i$s. 

  Recall the definition of $E'$ computed by the first round of
  \alg{Local2Rounds$_\triangle$}---the noisy edges released by randomized
  response. Now,
  \begin{align*}
    t_i &= \sum_{a_{i,j}=a_{i,k}=1, j<k<i} \textbf{1}((v_j,v_k) \in E'). \\
  \end{align*}
  This gives
  \begin{align*}
    \sum_{i=1}^n t_i &= \sum_{i=1}^n\sum_{\substack{a_{i,j}=a_{i,k}=1 \\ j<k<i }} \textbf{1}((v_j,v_k) \in
    E') \\
    &= \sum_{1 \leq j < k \leq n} \sum_{\substack{i > k \\ a_{i,j}=a_{i,k}=1
    }} \textbf{1}((v_j,v_k) \in E') \\
    &= \sum_{1 \leq j < k \leq n} |\{i : i>k, a_{i,j}=a_{i,k}=1\}| \textbf{1}((v_j,v_k)
    \in E'|.
  \end{align*}
  Let $c_{jk} = |\{i : i>k, a_{i,j}=a_{i,k}=1\}|$. Notice that $\textbf{1}( (v_j,v_k) \in E')$
  are independent events. Thus, the variance of the above expression is
  \begin{align}
    \V\left[\sum_{i=1}^n t_i\right] &= 
    \V
    \left[\sum_{1 \leq j < k \leq n}
    c_{jk} \textbf{1}( (v_j,v_k) \in E') \right] \nonumber \\
    &= \sum_{1 \leq j < k \leq n} c_{jk}^2 \V[\textbf{1}( (v_j,v_k \in E')) ]
    \nonumber \\
    &= p_1 (1-p_1) \sum_{1 \leq j < k \leq n} c_{jk}^2.
    \label{eq:subgraph-interactive-var}
  \end{align}
  $c_{jk}$ is the number of ordered 2-paths from $j$ to $k$ in $G$. Because 
  the degree of user $v_j$ is at most $\td_{max}$, 
  $0 \leq c_{jk} \leq \td_{max}$. There are at most
  $n\td_{max}^2$ ordered
  2-paths in $G$, since there are only $\td_{max}$ 
  nodes 
  to go to once a first is
  picked. Thus, $\sum_{1 \leq j < k \leq n} c_{jk} \leq n\td_{max}^2$. Using a Jensen's inequality
  style argument, the best way to maximize~\eqref{eq:subgraph-interactive-var} is to have
  all $c_{jk}$ be $0$ or $\td_{max}$. At most $n\td_{max}$ of the 
  $c_{jk}$ 
  can be $\td_{max}$, and the rest are zero. Thus,
  \begin{align*}
    \V\left[\sum_{i=1}^n t_i \right] &=
    p_1(1-p_1) \sum_{1 \leq j < k \leq n} c_{ij}^2 \\ 
    &\leq p_1(1-p_1) n\td_{max} \times \td_{max}^2.
  \end{align*}
  Plugging this into~\eqref{eq:local2rounds-var}
  \begin{align*}
    \V[\hf_\triangle(G,\epsilon)] &\leq \frac{p_1(1-p_1)n\td_{max}^3}{(1-2p_1)^2} +
    \frac{2n\td_{max}^2(1-p_1)^2}{(1-2p_1)^2\varepsilon_2^2} \\
    &\leq O\left(\frac{p_1 n\td_{max}^3 + n\td_{max}^2/\varepsilon_2^2}{(1-2p_1)^2} \right) \\
    &\leq O\left(\frac{e^{\varepsilon_1}}{(1-e^{\varepsilon_1})^2} \left(n\td_{max}^3 +
    \frac{e^{\varepsilon_1}}{\epsilon_2^2}n\td_{max}^2\right)\right).
  \end{align*}
  \qed

\subsection{Proof of Theorem~\ref{thm:lower-bound}}
\label{sub:proof_thm_lower-bound}

\paragraph{Preliminaries.}
We begin by defining a Boolean version of the independent cube in Definition~\ref{def:mono-cube}, which we call the \textit{Boolean independent cube}. 
The Boolean independent cube 
works for functions $g : \{0,1\}^\kappa \rightarrow \reals$ in the local DP model, where each of $\kappa \in \nats$ users has a \textit{single bit} and obfuscates the bit to provide $\epsilon$-DP. 
As shown later, there is a one-to-one correspondence between the independent cube in Definition~\ref{def:mono-cube} and the Boolean independent cube. 
Based on this, we show a lower-bound for the Boolean independent cube, and use the lower-bound to prove Theorem~\ref{thm:lower-bound}.

Below we define the Boolean independent cube. 
For $i\in[\kappa]$, let $x_i \in \{0,1\}$ be a bit of user $v_i$. 
Let $X = (x_1, \ldots, x_\kappa)$. 
We assume user $v_i$ 
obfuscates $x_i$ using 
a randomizer
$\calS_i : \{0,1\} \rightarrow \mathcal{Z}_i$, where $\calS_i$ 
satisfies $\epsilon$-DP and $\mathcal{Z}_i$ is a range of $\calS_i$. 
Examples of $\calS_i$ include Warner's RR. 
Furthermore, we assume the one-round setting, where each $\calS_i$ is independent, 
and where the estimator $\hg$ for $g$ has the form
\begin{equation}\label{eq:one-round-lower-2}
  \hg(X) = \tilde{g}(\calS_1(x_1), \ldots, \calS_\kappa(x_\kappa)).
\end{equation}
$\tilde{g}$ is an aggregate function that takes $\calS_1(x_1), \ldots, \calS_\kappa(x_\kappa)$ as input and outputs $\hg(X)$. 

We will prove a lower bound which uses the following stripped-down form of an independent cube (Definition~\ref{def:mono-cube}).
\begin{definition}\label{def:mono-cube-boolean}[Boolean $(\kappa,D)$-independent cube]
  Let $g : \{0,1\}^\kappa \rightarrow \reals$, and $D \in \reals$.
  We say 
  $g$ has 
  a \emph{Boolean $(\kappa,D)$-independent cube} 
  if for all
  $(x_1, \ldots, x_\kappa) \in \{0,1\}^\kappa$ we have 
  \[
    g(x_1, \ldots, x_\kappa) = g(0,0,\ldots,0) + \sum_{i=1}^\kappa x_i C_i,
  \]
  where $C_i \in \reals$ satisfies $|C_i| \geq D$ for any $i \in [\kappa]$.
\end{definition}

The following theorem applies to 
the Boolean independent cube 
and will help us establish
Theorem~\ref{thm:lower-bound}. We prove this theorem in
Section~\ref{sub:proof_thm_lower-bound-dir}.
\begin{theorem}\label{thm:lower-bound-dir}
  Let $g : \calX^{\kappa} \rightarrow \reals$ be a function 
  that has 
  a Boolean
  $(\kappa,D)$-independent cube. 
  Let $\hg(X)$ be an estimator having the form
  of~\eqref{eq:one-round-lower-2}, where 
  each $\calS_i$ 
  provides $\epsilon$-DP and 
  is 
  mutually 
  independent. 
  Let $X$ be drawn uniformly from
  $\{0,1\}^\kappa$. 
  Over the randomness both in selecting $X$ and in 
  $\calS_1, \ldots, \calS_\kappa$, 
  $\E_{X, \calS_1, \ldots, \calS_\kappa}[l_2^2(g(X), \hg(X))] =
  \Omega\left(\frac{e^\epsilon}{(e^\epsilon+1)^2} \kappa D^2\right)$.
\end{theorem}

\paragraph{Proof of Theorem~\ref{thm:lower-bound} using Theorem~\ref{thm:lower-bound-dir}.}
To prove Theorem~\ref{thm:lower-bound}, let $\calA$ be the $(n,D)$-independent 
cube (Definition~\ref{def:mono-cube}) for $f$ given in the statement of
Theorem~\ref{thm:lower-bound}. Let $G$ be the graph, 
and 
$\bmA$ be the corresponding symmetric adjacency matrix. 
Below we sometimes write $f$ as a function on neighbor lists $\bma_1, \ldots, \bma_n$ (rather than $G$) because there is a one-to-one correspondence between $G$ and $\bma_1, \ldots, \bma_n$.
Let $M$ be the perfect
matching that defines $\calA$. Let $n=2\kappa$.

The idea is to pair up users that $M$ matches to make a new function $g$ that has a Boolean $(\kappa,D)$-independent cube and new randomizers $\calS_1, \ldots, \calS_\kappa$ that
satisfy $\epsilon$-DP. 
In other words, we regard a pair of users in $M$ as a \textit{virtual} user (since $n=2\kappa$, there are $\kappa$ virtual users in total). 
Then we apply Theorem~\ref{thm:lower-bound-dir}. 

Assume that 
$M = \{(v_1, v_2), (v_3, v_4), \ldots, (v_{2\kappa-1}, v_{2\kappa})\}$ without loss of generality 
(we can construct $g$ and $\calS_1, \ldots, \calS_\kappa$ for arbitrary $M$ in the same way). 
For $x_1, \ldots, x_\kappa \in \{0,1\}$, 
define
\begin{align*}
  g(x_1, \ldots, x_\kappa) = & f(\bma_1 + x_1 \bme_2,~ \bma_2 + x_1 \bme_1,~ \ldots,  \\
  &\hspace{3.3mm}\bma_{2\kappa-1} + x_\kappa \bme_{2\kappa},~ \bma_{2\kappa} + x_\kappa \bme_{2\kappa-1}),
\end{align*}
where $\bme_i \in \{0,1\}^n$ is the $i$-th standard basis vector that has $1$ in the $i$-th coordinate and $0$ elsewhere. 
In other words, 
$x_i \in \{0,1\}$ indicates whether the $i$-th edge in $M$ should be
added to 
$G$. 
Thus, $g$ has a Boolean $(\kappa,D)$-independent cube, and 
there is a one-to-one correspondence between an $(n,D)$-independent cube $\calA$ in Definition~\ref{def:mono-cube} 
and 
$(\kappa,D)$-Boolean independent cube $\{0,1\}^\kappa$ in Definition~\ref{def:mono-cube-boolean}. 
Figure~\ref{fig:Bool-cube} shows a $(2,2)$-Boolean independent cube for $g$ corresponding to the $(4,2)$-independent cube for $f$ in Figure~\ref{fig:mono-cube}.

\begin{figure}[t]
  \centering
  \includegraphics[width=0.88\linewidth]{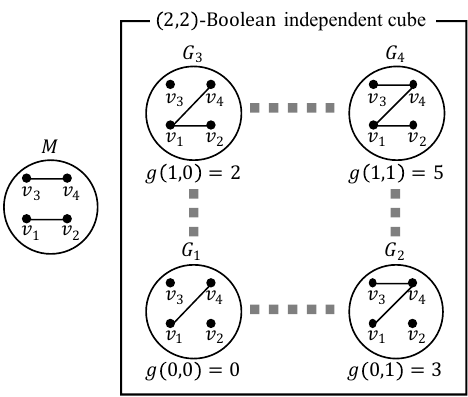}
  \vspace{-4mm}
  \caption{
    $(2,2)$-Boolean independent cube for $g$ corresponding to the $(4,2)$-independent cube for $f$ in Figure~\ref{fig:mono-cube}.
  }\label{fig:Bool-cube}
\end{figure}

Now, for $i \in [\kappa]$, define $\calS_i(x_i)$ for $x_i \in \{0,1\}$ by
\begin{align}
  \calS_i(x_i) = (\calR_{2i-1}(\bma_{2i-1} + x_i \bme_{2i}),
  \calR_{2i}(\bma_{2i} + x_i \bme_{2i-1})).
  \label{eq:S_i_R_2i_1}
\end{align}
In other words, $\calS_i(x_i)$ is simply the product of the outputs of users
$(v_{2i-1}, v_{2i})$, with $x_i$ indicating whether to add the edge in $M$ between them.

Assume that each $\calR_i$ is mutually independent and that $(\calR_1, \ldots, \calR_n)$ provides $\epsilon$-relationship DP in Definition~\ref{def:entire_edge_LDP}. 
Then by (\ref{eq:entire_edge_LDP}) and (\ref{eq:S_i_R_2i_1}), each $\calS_i$ provides $\epsilon$-DP and is mutually independent.

Define the estimator $\hat{g}$ by
\begin{align*}
  \hat{g}(x_1, \ldots, x_\kappa) &= \tilde{f}(\calS_1(x_1), \ldots,
  \calS_\kappa(x_\kappa)).
\end{align*}
Then by 
Theorem~\ref{thm:lower-bound-dir}, 
for 
$X=(x_1, \ldots, x_\kappa)$, 
\[
  \E_{X,\calS_1, \ldots, \calS_\kappa}[l_2^2(g(X), \hat{g}(X) )] \geq \Omega \left(
  \frac{e^{\varepsilon}}{(e^{\varepsilon}+1)^2} \kappa D^2 \right).
\]

Since 
there is a one-to-one correspondence between the ($n,D$)-independent cube $\calA$ 
and 
the ($\kappa,D$)-Boolean independent cube $\{0,1\}^\kappa$, we also have 
\[
  \E_{G,\calR_1, \ldots, \calR_n}[l_2^2(f(G), \hf(G))] \geq \Omega \left(
  \frac{e^{\varepsilon}}{(e^{\varepsilon}+1)^2} n D^2 \right),
\]
where $G$ is drawn uniformly from $\calA$, which proves Theorem~\ref{thm:lower-bound}. \qed

\subsection{Proof of
Theorem~\ref{thm:lower-bound-dir}}
\label{sub:proof_thm_lower-bound-dir}

Assume that $\calS_i : \{0,1\} \rightarrow \mathcal{Z}_i$.
For 
$X = (x_1, \ldots, x_\kappa) \in \{0,1\}^\kappa$, 
let $S(X) = (\calS_1(x_1), \cdots \calS_\kappa(x_\kappa))$ and
$Z = (z_1, \ldots, z_\kappa)$ with $z_i \in \mathcal{Z}_i$. 
We rewrite the quantity of interest as
\[
  \E_{X, S(X)}[l_2^2(g(X), \hg(X))] = \E_{X, S(X)}[(g(X)- \tilde{g}(S(X)))^2].
\]

By the law of total expectation, this quantity is the same as the expected value of the conditional expected value of $(g(X)- \tilde{g}(S(X)))^2$ given 
$S(X) = Z$:
\begin{align}
  &\E_{X, S(X)}[(g(X)- \tilde{g}(S(X)))^2] \nonumber\\
  &= \E_{S(X)} \E_X[(g(X) - \tilde{g}(Z) )^2 | S(X)=Z].
  \label{eq:E_SX_X_g_tg}
\end{align}
Let $\mu_Z = \E_X[g(X)|S(X)=Z]$. 
Then the inner expectation in (\ref{eq:E_SX_X_g_tg}) can be written as follows:
\begin{align*}
  &\; \E_X[(g(X) - \tilde{g}(Z) )^2 | S(X)=Z] \\
  &= \E_X[( (g(X) - \mu_Z) + (\mu_Z - \tilde{g}(Z)) )^2 | S(X)=Z] \\
  &= \E_X[(g(X) - \mu_Z)^2 |S(X)=Z] \\
  &\hspace{3.5mm}+ 2(\mu_Z - \tilde{g}(Z))\E_X[(g(X) - \mu_Z)|S(X)=Z] \\
  &\hspace{3.5mm}+ (\mu_Z - \tilde{g}(Z))^2 \\
  &= \E_X[(g(X) - \mu_Z)^2 |S(X)=Z] + (\mu_Z - \tilde{g}(Z))^2 \\
  &= \V_X[g(X)|S(X)=Z] + (\mu_Z - \tilde{g}(Z))^2.
\end{align*}
Thus, it suffices to show that $\V_X[g(X)|S(X)=Z] \geq
\Omega\left(\frac{e^\epsilon}{(1+e^\epsilon)^2} \kappa D^2 \right)$.
For $B = (b_1, \ldots, b_\kappa) \in \{0,1\}^\kappa$, we have
\begin{align*}
  \Pr[X=B|S(X)=Z] = \frac{\Pr[X=B]\Pr[S(X)=Z|X=B]}{\Pr[S(X)=Z]}.
\end{align*}
Since $\Pr[S(X)=Z]$ does not depend on $B$ and
$\Pr[X=B] = \frac{1}{2^\kappa}$, $\Pr[X=B|S(X)=Z]$ can also be expressed as 
\begin{align}
  \Pr[X=B|S(X)=Z] &\propto \Pr[S(X)=Z|X=B].
  \label{eq:X_B_SX_Z_propto}
\end{align}
Since $S_1, \ldots, S_\kappa$ are independently run, we have
\begin{align*}
  \Pr[S(X)=Z|X=B] 
  &=
  \Pr[\calS_1(b_1) = z_1, \ldots, \calS_\kappa(b_\kappa) = z_\kappa] \\
  &= \prod_{i=1}^\kappa \Pr[\calS_i(b_i)=z_i].
\end{align*}
Define 
\begin{align*}
    p_i = \frac{\Pr[\calS_i(1)=z_i]}{\Pr[\calS_i(0)=z_i] +
    \Pr[\calS_i(1)=z_i]}.
\end{align*}
Because each $\calS_i$ satisfies $\epsilon$-DP, we have $\frac{1}{1+e^{\epsilon}} \leq p_i \leq \frac{e^\epsilon}{1+e^\epsilon}$.
By (\ref{eq:X_B_SX_Z_propto}) and $\sum_{B \in \{0,1\}^\kappa} \Pr[X=B|S(X)=Z] = 1$, 
we have
\begin{align}
    \Pr[X=B|S(X)=Z] 
    = 
    \prod_{i=1}^\kappa (p_i)^{b_i}(1-p_i)^{1-b_i}. 
\label{eq:Pr_X_B_SX_Z_Bernoulli}
\end{align}
This means that $\Pr[X=B|S(X)=Z]$ is distributed according to the independent product of $Bernoulli(p_i)$ for $i \in [\kappa]$.

Now, because 
$g$ has 
a Boolean $(\kappa,D)$-independent cube, 
there are $C_1,
\ldots, C_\kappa \in \calS$ with $|C_i| \geq D$ such that
\begin{align*}
g(X) &= g(0, \ldots, 0) + \sum_{i=1}^\kappa
x_iC_i.
\end{align*}
By (\ref{eq:Pr_X_B_SX_Z_Bernoulli}), 
$x_i$ 
is an independent draw from $Bernoulli(p_i)$ 
given $S(X)=Z$. 
Thus, the variance of 
$g(X)$ given $S(X)=Z$ 
is
\begin{align*}
\V_X[g(X)|S(X)=Z] &= \sum_{i=1}^\kappa \V[x_i |S(X)=Z]C_i^2 \\
&\geq \sum_{i=1}^\kappa p_i(1-p_i) D^2 \\
&\geq \sum_{i=1}^\kappa \frac{e^\epsilon}{(1+e^\epsilon)^2} D^2 \\
&\geq \kappa \frac{e^\epsilon}{(1+e^\epsilon)^2} D^2.
\end{align*}
\qed
}

\end{document}